\newcommand{\Output}{\textsf{output} }
\newcommand{\goto}{\textsf{goto} }
\newcommand{\eqv}{\approx}
\newcommand{\Int}[1]{\lfloor #1 \rfloor}
\renewcommand{\frac}[1]{\{ #1 \}}
\newcommand{\Aa}{\mathcal{A}}
\newcommand{\Bb}{\mathcal{B}}
\newcommand{\Tt}{\mathcal{T}}
\newcommand{\Ii}{\mathcal{I}}
\newcommand{\TW}{T\Sigma^\omega}
\newcommand{\TB}{T\mathbb{B}^\omega}
\newcommand{\TtwoB}{(T\mathbb{B}^2)^\omega}
\newcommand{\oper}{\mathrm{op}}
\newcommand{\TWi}{T\Sigma^\omega}
\newcommand{\TWo}{T\Gamma^\omega}
\newcommand{\RR}{\mathbb{R}}
\newcommand{\Rpos}{\RR_{\ge 0}}
\newcommand{\Rneg}{\RR_{\le 0}}
\newcommand{\Nat}{\mathbb{N}}
\newcommand{\gzg}{\ensuremath{\mathsf{GZG}}}
\newcommand{\xra}[1]{\xrightarrow{#1}}
\renewcommand{\d}{\delta}
\newcommand{\s}{\sigma}
\newcommand{\A}{\mathcal{A}}
\newcommand{\leqlt}{\mathrel{\triangleleft}}
\newcommand{\da}{{\downarrow}}
\newcommand{\GG}{\mathbb{G}}
\newcommand{\V}{\mathbb{V}}
\newcommand{\elapse}[1]{\overrightarrow{#1}}
\newcommand{\prog}{\mathsf{prog}}
\newcommand{\guard}{\mathsf{guard}}
\newcommand{\chg}{\mathsf{change}}
\newcommand{\rnm}{\mathsf{rename}}
\newcommand{\perm}{\mathsf{perm}}
\newcommand{\Prog}{\textrm{Programs}}
\newcommand{\GTA}{\textrm{GTA}}
\newcommand{\GTAFull}{Generalized timed automata}
\newcommand{\GTAfull}{generalized timed automata}
\newcommand{\GTTFULL}{Generalized Timed Transducer}
\newcommand{\GTTfull}{generalized timed transducer}
\newcommand{\GTT}{\textrm{GTT}}
\newcommand{\TS}{\mathbb{TS}}
\newcommand{\mitl}{\mathsf{MITL}}
\newcommand{\Prop}{\mathsf{Prop}}
\newcommand{\X}{\mathop{\mathsf{X}\vphantom{a}}\nolimits}
\newcommand{\U}{\mathbin{\mathsf{U}}}
\newcommand{\F}{\mathop{\mathsf{F}\vphantom{a}}\nolimits}
\newcommand{\G}{\mathop{\mathsf{G}\vphantom{a}}\nolimits}
\newcommand{\Next}{\X}
\renewcommand{\Until}{\U}
\newcommand{\ite}[3]{{#1}\,?\,{#2}:{#3}}
\newcommand\sem[1]{[\![ #1 ]\!]}
\newcommand\dom[1]{\mathsf{dom}(#1)}
\newcommand{\tra}{\mathcal{T}}
\newcommand{\lang}{\mathcal{L}}
\title{MITL Model Checking via Generalized Timed Automata and a New Liveness Algorithm}
\titlerunning{Generalized Timed Automata: Liveness and MITL Model Checking}
\author{S.\ Akshay}{Department of CSE, Indian Institute of Technology Bombay, Mumbai, India}
  {akshayss@cse.iitb.ac.in}{https://orcid.org/0000-0002-2471-5997}
  {Supported in part by DST/SERB Matrics Grant MTR/2018/000744.}
\author{Paul Gastin}{Université Paris-Saclay, ENS Paris-Saclay, CNRS, LMF, 91190,
	Gif-sur-Yvette, France \and CNRS, ReLaX, IRL 2000, Siruseri, India}
  {paul.gastin@ens-paris-saclay.fr}{https://orcid.org/0000-0002-1313-7722}{}
  \author{R. Govind}{Uppsala University, Sweden}{govind.rajanbabu@it.uu.se}{https://orcid.org/0000-0002-1634-5893}{}
\author{B.\ Srivathsan}{Chennai Mathematical Institute, India
  \and CNRS, ReLaX, IRL 2000, Siruseri, India} 
  {sri@cmi.ac.in}{https://orcid.org/0000-0003-2666-0691}{}
\authorrunning{S. Akshay, P. Gastin, R. Govind and  B. Srivathsan}
\keywords{MITL model checking, timed automata, zones, liveness }
\begin{document}

\maketitle

\begin{abstract}
  The translation of Metric Interval Temporal Logic (MITL) to timed
  automata is a topic that has been extensively studied. A key challenge here is the
  conversion of future modalities into equivalent automata. Typical
  conversions equip the automata with a guess-and-check mechanism to
  ascertain the truth of future modalities.  
  Guess-and-check can be naturally implemented via alternation. However, since timed automata tools do
  not handle alternation, existing methods perform an additional
  step of converting the alternating timed automata into 
  timed automata. This ``de-alternation'' step proceeds by an
  intricate finite abstraction of the space of configurations of the
  alternating automaton. 

  Recently, a model of generalized timed automata (GTA) has been
  proposed. The model comes with several powerful additional features,
  and yet, the best known zone-based reachability algorithms for timed
  automata have been extended to the GTA model, with the same complexity
  for all the zone operations. An attractive
  feature of GTAs is the presence of future clocks which act like
  timers that guess a time to an event and stay alive until a
  timeout. Future clocks seem to provide another natural way to
  implement the guess-and-check: start the future clock with a guessed time to an event and check its occurrence using a timeout. Indeed, using this feature, we provide a new concise translation from MITL to GTA. In particular,
  for the timed until modality, our translation offers an exponential 
  improvement w.r.t. the state-of-the-art.

  Thanks to this conversion, MITL model
  checking reduces to checking liveness for GTAs. However, no
  liveness algorithm is known for GTAs. Due to the presence of future clocks,
  there is no finite time-abstract bisimulation (region equivalence)
  for GTAs, whereas liveness
  algorithms for timed automata crucially rely on the presence of the
  finite region equivalence. As our second contribution, we
  provide a new zone-based algorithm for checking Büchi
  non-emptiness in GTAs, which circumvents this fundamental
  challenge. 
\end{abstract}

\section{Introduction}

The translation of Linear Temporal Logic (LTL)~\cite{Pnueli77} to
Büchi automata is a fundamental problem in model checking, with a
long history of theoretical
advances~\cite{GPVW95,vardi1996automata,GastinO01}, tool
implementations~\cite{Spin,Spot,GastinO01,Strix,NuSMV} and practical
applications~\cite{Pnueli86,PnueliH88,JensenLS96,Lamport2002}.  In the
real-time setting, Metric Interval Temporal Logic (MITL) is close to
LTL, with the modalities Next ($\X$) and Until ($\U$) extended with
timing intervals --- for instance, $X_{[a,b]}p$ says that the next
event is a $p$ and it occurs within a delay $\theta \in [a,b]$. Model
checking for MITL is known to be
$\mathsf{EXPSPACE}$-complete~\cite{AFH-MITL-J}. This has led to the
study of ``efficient'' conversions from MITL to timed automata, with
each new construction aiming to make the automata more succinct. Our
work is another step in this direction.

There are two ways to interpret MITL formulae: over (continuous) timed
signals~\cite{AFH-MITL-J,MalerNP06,FerrereMNP19} or (pointwise) timed
words~\cite{AlurH90,Wilke94,MightyL}. Since the current timed automata
tools work with timed words, we stick with the pointwise semantics.
The state-of-the-art for MITL-to-TA is based on an initial translation
of MITL to one-clock Alternating Timed Automata
(OCATA)~\cite{OuaknineW06}. It has been shown that these OCATA can be
converted to a network of timed
automata~\cite{BrihayeEG13,BrihayeEG14}. The tool
MightyL~\cite{MightyL} implements the entire MITL-to-TA translation.
One of the difficulties in the MITL-to-TA translation is the inherent
mismatch between the logic and the automaton in the way timing
constraints are enforced. A future modality declares that a certain
event takes place at a certain timing distance, \emph{in the future}.
In a timed automaton, \emph{clocks} measure time elapsed since some
event \emph{in the past} and check constraints on these values.  To
implement a future modality, the automaton needs to make a prediction
about the event and verify that the prediction is indeed true.
Therefore, each prediction typically resets a clock and stores a new
obligation in the state. The automaton needs to discharge these
obligations at the right times in the
future.

\begin{figure}[tbh]
  \centering
  \begin{minipage}[b]{40mm}
    \includegraphics[page=11,scale=0.9]{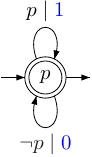}
    \\[3ex]
    \includegraphics[page=12,scale=0.9]{gastex-pictures-pics.pdf}
    \\[1.5ex]
  \end{minipage}
  \hfil
  \includegraphics[page=13,scale=0.9]{gastex-pictures-pics.pdf}
  \caption{(top left) Büchi transducer (with outputs) for LTL formula $\X p$
    (right) Timed transducer with clock $x$ for MITL formula $\X_I p$;
    $\textcolor{red}{\pi_1 := x \in I; [x]}$, 
    $\textcolor{red}{\pi_2 := x \notin I; [x]}$
    (bottom left) a hypothetical transducer with a variable $\theta$ that predicts time to
    next action; $\textcolor{blue}{\dagger := \ite{\theta \in I}{1}{0}}$.}
  \label{fig:intro-example}
\end{figure}

Figure~\ref{fig:intro-example} (top left) shows an automaton with outputs for
the LTL formula $\X p$. On an infinite word $w_1 w_2 \dots $ (where
each $w_i$ is a subset of atomic propositions) the automaton outputs
$\textcolor{blue}{1}$ at $w_i$ iff $w_{i+1}$ contains $p$. While
reading $w_i$, the automaton needs to guess whether $p \in w_{i+1}$ or
not. Depending on the guess, it stores an appropriate obligation. This
is reflected in the states and transitions: transitions with output
$\textcolor{blue}{1}$ go to a state $\X p$ which can only read $p$
next, whereas those with output $\textcolor{blue}{0}$ go to $\neg\X p$
which can only read $\neg p$. The $\X p$ and $\neg \X p$ can be seen
as obligations that the automaton has to discharge from the state.

Now, let us consider a timed version $\X_{I} p$ interpreted on timed
words. An automaton for $\X_{I}p$ needs to guess whether the next
letter is a $p$ \emph{and} if so, whether it appears within $\theta$
time units for some $\theta \in I$. Figure~\ref{fig:intro-example} (bottom left)
represents a hypothetical automaton that implements this idea:
assuming it has access to a variable $\theta$ which contains the time
to the next event, the output should depend on whether $\theta \in I$
or not.  This is exactly what the if-then-else condition $\textcolor{blue}{\dagger}$ does:
if $\theta \in I$ output $1$, else output $0$. 
Classical timed automata do not have direct access to
$\theta$. They implement this idea differently, by making use of extra
states. Figure~\ref{fig:intro-example} (right) shows a timed automaton for
$\X_I p$. The state $\X p$ is
split into two different obligations: $\X_I p$ where the timing
constraint is satisfied, and $\X p \land \neg \X_I p$ where it is
not. The outgoing guards discharge these obligations. This example
shows the convenience of having access to a variable that can predict
time to future events.

This is precisely where the recently proposed model of
\emph{Generalized Timed Automata} (\GTA)~\cite{AkshayGGJS23} enters
the picture. This model subsumes event-clock automata~\cite{AlurFH99}
and automata with timers~\cite{Dill89}. $\GTA$ come equipped with the
additional resources to implement predictions better.  $\GTA$ have two
types of clocks: \emph{history clocks} and \emph{future clocks}.
History clocks are similar to the usual clocks
of timed automata. Future clocks are like timers, but instead of
starting them at some non-negative value and making them go down to
zero, they get started with some arbitrary negative value and go up
until they hit zero. For example, in Figure~\ref{fig:intro-example}
(bottom left), each transition can start a future clock,
guessing the time to the next event. This immediately gives us the
required $\theta$. The exact $\GTA$ for $\X_I p$ is quite close to
Figure~\ref{fig:intro-example} (bottom left) and is given in
Figure~\ref{fig:Next-timed}.  Apart from the use of future clocks, the
syntax of transitions in a GTA is much richer than a guard-reset pair
as in timed automata. Transitions contain an ``instantaneous timed
program'', which consists of a sequence of guards, resets and releases
(for future clocks).  When difference constraints are present, the
model becomes powerful enough to encode counter machines and is
therefore undecidable. A safe fragment, with a careful use of diagonal
constraints is known to be decidable.

$\GTA$s are advantageous in another sense. In spite of the powerful
features, the best zone-based algorithms from the timed automata
literature have been shown to suitably adapt to the $\GTA$ setting,
with the same complexity for zone operations, and have been
implemented in the tool TChecker~\cite{TChecker}.  Therefore, an MITL
to $\GTA$ conversion allows us to capitalize on the features and
succinct syntax of $\GTA$, and at the same time, lets us model check
MITL directly on richer $\GTA$ models. In
summary: 

\begin{itemize}
\item We provide a translation of MITL formulae
  to safe $\GTA$. The translation is compositional and implementable,
  and yields an \emph{exponential improvement} in the number of
  locations compared to the state-of-the-art technique for pointwise
  semantics~\cite{MightyL}, while the number of clocks remains the
  same up to a constant.

\item Model checking MITL against $\GTA$ requires to solve the
  liveness problem for (safe) $\GTA$s, which has been open so far. We
  settle the liveness question in this work. Zone based algorithms for
  event-clock automata have been studied in~\cite{GeeraertsRS14}. A
  notion of weak regions has been developed and this can be used for
  solving both reachability and liveness using zones. The GTA model
  that we consider in this paper strictly subsumes event-clock
  automata. In particular, the presence of diagonal constraints makes
  the problem more challenging. Our solution to liveness for GTAs
  therefore gives an alternate liveness procedure for event-clock
  automata, and also settles liveness for event-clock automata with
  diagonal constraints, a model defined
  in~\cite{DBLP:journals/tcs/BozzelliMP22}.
\end{itemize}

We remark that the techniques used in continuous semantics do not
extend to pointwise semantics.  In~\cite{FerrereMNP19} the authors
simplify general MITL formulae into formulae containing only one-sided
intervals (of the form $[0,c]$ or $[c,\infty)$), for which automata
are considerably simpler to construct.  However, this simplification
at the formula level works \emph{only} in the continuous semantics ---
it does not work in the pointwise-semantics (as Lemma 4.3, 4.4
of~\cite{FerrereMNP19} do not extend to pointwise-semantics).  The
fundamental difference is that in the continuous semantics we can
assert a formula at any time point $t$. However, in
pointwise-semantics, we can evaluate a formula only at \emph{action
  points}, i.e., points where there is an actual action. For example,
in continuous semantics one can rewrite $\F_{[a+c, b+c]} p$ as
$\F_{[0,c]} \G_{[0,c]} \F_{[a, b]} p$ when $c \leq b-a$ (Lemma
4.3,~\cite{FerrereMNP19}). However, in the pointwise semantics there
may be no event in the interval $[0,c]$ on which we can evaluate
$\G_{[0,c]} \F_{[a, b]} p$. Therefore, we need a completely different
approach to deal with intervals in the pointwise semantics.

\subparagraph*{Organization of the paper.} We start with preliminary
definitions of Generalized Timed Automata (Section~\ref{sec:prelims})
and provide our solution to the liveness problem in
Section~\ref{sec:gta-liveness}. We present our MITL to GTA translation
in Section~\ref{sec:mitl-gta}. 

\section{Preliminaries}
\label{sec:prelims}

Let $X = X_F \uplus X_H$ be a set of real-valued variables called
\emph{clocks}, which is further partitioned into \emph{future clocks}
$X_F$ and \emph{history clocks} $X_H$. Let
$\Phi(X)$ denote a set of \emph{clock constraints} generated by the
grammar: $\varphi ::= x - y \leqlt c \mid \varphi \land \varphi$ where
$x,y \in X \cup \{0\}$, ${\leqlt}\in\{<,\leq\}$ and
$c \in \overline{\mathbb{Z}}=\mathbb{Z}\cup\{-\infty,+\infty\}$ (the
set of integers equipped with the two special values to say that a
clock is ``undefined''). We also allow \emph{renamings} of clocks.  Let $\perm_{X}$ be the set of
permutations $\sigma$ over $X\cup\{0\}$ mapping history (resp.\ future) clocks to history
(resp.\ future) clocks ($\sigma(X_{F})=X_{F}$ and $\sigma(X_{H})=X_{H}$).

\smallskip

\noindent \textbf{GTA syntax.} A \emph{Generalized Timed Automaton (GTA)} is given by 
$(Q, \Sigma, X, \Delta, \Ii, Q_{f})$
where $Q$ is a finite set of states, 
$\Sigma$ is a finite alphabet of actions,
$X = X_F \uplus X_H$ is a set of clocks partitioned into future clocks $X_F$ and
history clocks $X_H$. 
The initialization condition $\Ii$ is a set of pairs $(q_0, g_0)$ where a pair consists of
an initial state $q_0 \in Q$ and an initial guard $g_0 \in \Phi(X)$, 
and the accepting condition is given by a set $Q_{f}\subseteq Q$ of Büchi states.
The transition relation $\Delta \subseteq (Q \times \Sigma \times \Prog \times Q)$ contains transitions of the form $(q, a, \prog, q')$, where $q$ is the
source state, $q'$ is the target state, $a$ is the action triggering
the transition, and $\prog$ is an \emph{instantaneous timed program}
generated by the grammar: 
 \begin{align*}
   \prog := \guard \mid \chg \mid \rnm \mid \prog; \prog
 \end{align*}
where $\guard =g\in \Phi(X)$, $\chg=[R]$ for an $R\subseteq X$, and $\rnm=[\sigma]$ for a
$\sigma\in\perm_{X}$.

Figure~\ref{fig:Next-timed} with the blue parts removed illustrates a GTA. Both states
$\ell_1$ and $\ell_2$ are initial, denoted by incoming arrows to each of them, and
accepting, marked by the double circle.  The initial guard is the trivial \emph{true}
constraint.  The alphabet $\Sigma = \{0 , 1\}$ (written in black).  The constraint $-x \in
I$ is short form for a conjunction of constraints requiring the clock to be in the
interval $I$.  For example, if $I = (4, 5]$, then $-x \in I$ is the constraint $4 < -x
\land -x \le 5$.  During our MITL to GTA translation, we extend GTAs to include outputs (Definition~\ref{defn:gtt}. 
The dagger condition
\textcolor{blue}{$\ite{(-x\in I)}{1}{0}$} is a short form for two transitions, one which
checks $-x \in I$ and outputs $1$, and the other which checks $-x \notin I$ and outputs
$0$.

\smallskip

\noindent \textbf{GTA semantics.} A valuation of clocks is a function
$v\colon X \cup \{0\} \mapsto \overline{\RR}=\mathbb{R}\cup
\{-\infty,+\infty\}$ which maps the special clock $0$ to 0, history
clocks to $\Rpos\cup\{+\infty\}$ and future clocks to
$\Rneg \cup \{-\infty\}$.  We denote by $\V(X)$ or simply by $\V$ the
set of valuations over $X$. For a valuation $v\in\V$,
define\footnote{To allow evaluation of all the constraints in
  $\Phi(X)$, the addition and the unary minus operation on real
  numbers is extended~\cite{AkshayGGJS23} with the following
  conventions (i)$(+\infty)+\alpha = \alpha+(+\infty) = +\infty$ for
  all $\alpha\in\overline{\RR}$, (ii)
  $(-\infty)+\beta = \beta+(-\infty) = -\infty$, as long as
  $\beta\neq+\infty$, and (iii) $-(+\infty)=-\infty$ and
  $-(-\infty)=+\infty$. }  $v\models y-x \leqlt c$ as
$v(y)-v(x)\leqlt c$. We say that $v$ \emph{satisfies} a constraint
$\varphi$, denoted as $v\models\varphi$, when $v$ satisfies all the
atomic constraints in $\varphi$.
We denote by $v + \d$ the \emph{valuation} obtained from valuation $v$
by increasing by $\d \in \Rpos$ the value of all clocks in $X$.  Note
that, from a given valuation, not all time elapses result in
valuations since future clocks need to stay at most $0$.
We now define the \emph{change} operation that combines the
\emph{reset} operation for history clocks (which sets history clocks to 0) and
\emph{release} operation for future clocks (which assigns a
non-deterministic value to a future clock).  Given a set of clocks $R \subseteq X$, we
define $R_{F} = R \cap X_{F}$ as the set of future clocks in $R$, and
$R_{H} = R \cap X_{H}$ as the set of history clocks in $R$.  Then,
$[R]v := \{v' \in \V \mid v'(x) = 0~\forall ~x \in R_{H} \text{ and }
v'(x) = v(x) ~\forall~ x \not\in R\}$. Observe that $v'$ has no
constraints for the future clocks in $R$, as they can take any arbitrary value in $v'$. 
For a valuation $v\in\V(X)$, and $\sigma \in \perm_X$, we define $[\sigma]v$ as
$v\circ\sigma$, i.e., $([\sigma]v)(x)=v(\sigma(x))$ for all $x\in X\cup\{0\}$.

For valuations $v, v'$ and a guard $g \in \Phi(X)$ we write
$v\xrightarrow{g}v'$ when $v'=v\models g$, and
$v\xrightarrow{[R]}v'$ when $R\subseteq X$ and $v'\in[R]v$, and $v\xrightarrow{[\sigma]}v'$ when $\sigma\in\perm_{X}$ and $v'=[\sigma]v$. When
$\prog=\prog_1;\ldots;\prog_n$, we write
$v\xrightarrow{\scriptstyle\prog}v'$ when there are valuations
$v_1,\ldots,v_n$ such that
$v\xrightarrow{\scriptstyle \prog_1}v_1 \xrightarrow{\scriptstyle
  \prog_2} \cdots \xrightarrow{\scriptstyle \prog_n} v_n=v'$.  The
semantics of the \GTA\ $\Aa$ defined above is given by a transition
system $\TS_{\Tt}$ whose states are \emph{configurations} $(q,v)$ of
$\Aa$, where $q \in Q$ and $v\in\V$ is a valuation.  A configuration
$(q,v)$ is initial if $v\models g$ for some $(q,g) \in \Ii$, and it is
accepting if $q\in Q_{f}$. 
Transitions of $\TS_{\Tt}$ are of two forms: (1) \textit{delay transition}:
$(q,v) \xra{\delta} (q, v + \delta)$ if $v+\delta$ is a valuation,
i.e., $(v + \delta) \models X_{F} \leq 0$, and (2) \textit{discrete
  transition}: $(q,v) \xra{t} (q',v')$ if $t=(q,a,\prog,q')\in\Delta$
and $v\xrightarrow{\prog} v'$.  A \emph{finite (respectively infinite)
  run} $\rho$ of a \GTA\ is a finite (respectively infinite) sequence
of transitions from an initial configuration of $\TS_{\A}$:
$(q_0, v_0) \xrightarrow{\delta_0, t_0} (q_1, v_1)
\xrightarrow{\delta_1, t_1} \cdots$.

For example, consider the run of GTA in Figure~\ref{fig:Next-timed} on
a timed word $(1, 1) (0, 2) (1, 3) (0, 4) \dots$ ($1$ occurs
at all odd numbers, and $0$ at all even numbers, starting from first
timestamp $1$). The program $(x = 0); [x]$ used in the
transitions first checks if $x$ is $0$, and then releases
it to an arbitrary non-deterministic value. The run on the above word would be:
$(\ell_1, x=-1) \xra{1, t_{1}} (\ell_2, x=-1) \xra{1, t_{2}} (\ell_1, x=-1) \cdots$.  A
transition $\xra{\delta, t}$ denotes a time elapse of $\delta$ followed by application of the
program associated to transition $t$.  At each point the value of $x$ is released to
$-1$, and is checked with the guard $x = 0$ at the next event.

An infinite run is accepting if it visits accepting configurations infinitely often.  The
run is said to be \emph{Zeno} if $\Sigma_{i \ge 0} \delta_i$ is bounded and
\emph{non-Zeno} otherwise.  In this work, we will only be interested in non-Zeno runs (we
can force non-Zenoness by ensuring time elapse in every loop of the
automaton~\cite{TripakisYB05}).

\smallskip

\noindent \textbf{Liveness problem.} The \emph{non-emptiness or liveness problem} for a
\GTA\ asks whether the given \GTA\ has an accepting non-Zeno run.  Unfortunately, the
non-emptiness problem even for finite words turns out to be undecidable for general
\GTA~\cite{AkshayGGJS23}.  Therefore, we focus our attention on a restricted sub-class of
\GTA's for which non-emptiness in the finite words case is decidable, called safe
\GTA~\cite{AkshayGGJS23}.

\begin{definition}[Safe \GTA~\cite{AkshayGGJS23}]\label{defn:safe-gta}
  \label{def:safe-program}
  Given a \GTA\ $\Aa$, let $X_D\subseteq X_F$ be the subset of future clocks used in
  diagonal guards of $\Aa$ between future clocks, i.e., if $x-y\leqlt c$ with $x,y\in
  X_{F}$ occurs in some guard of $\Aa$ then $x,y\in X_{D}$.  
  Then, a program $\prog$ is $X_D$-safe if
  clocks in $X_{D}$ are checked for being $0$ or $-\infty$ before being released 
  and renamings $[\sigma]$ used in $\prog$ preserve $X_{D}$ clocks 
  ($\sigma(X_{D})=X_{D}$).
  A \GTA\ $\Aa$ is safe if it only uses $X_{D}$-safe programs on its transitions and the
  initial guard $g_0$ sets each history clock to either $0$ or $\infty$.
\end{definition}

The GTA in Figure~\ref{fig:Next-timed} is vacuously safe, since
  there are no diagonal constraints at all. 

\begin{remark}\label{rem:renaming}
  Renaming operations may be considered as syntactic sugar allowing for more concise
  representations of \GTA s.  Indeed, we can transform a \GTA\ $\Aa$ with renamings to an
  equivalent \GTA\ $\Aa'$ without renamings by adding to the state the current permutation
  of clocks (composition of the permutations applied since the initial state) and change
  the programs of outgoing transitions accordingly.  The number of states is multiplied by
  the number of permutations that may occur as described above.
\end{remark}

\noindent \textbf{Zones for \GTAFull .}\label{sec:liveness-prelims}
The state-of-the-art approach for answering the B\"uchi non-emptiness problem and the reachability problem for timed automata proceeds via checking reachability in a graph called the \emph{zone graph} of a timed automaton~\cite{Daws}.
Abstractly, \emph{zones}~\cite{Bengtsson:LCPN:2003} are essentially sets of valuations that can be represented efficiently using constraints between differences of clocks. Here, we recall the notion of zones for the setting of \GTAfull .

\begin{definition}[\textbf{\GTA\ zones}~\cite{AkshayGGJS23}]\label{defn:gta-zones}
  A \GTA\ zone is a set of valuations satisfying a conjunction of constraints of the form
  $y-x \leqlt c$, where $x,y \in X\cup\{0\}$,
  $c\in\overline{\mathbb{Z}}$ and
  ${\leqlt}\in\{\leq,<\}$.
\end{definition}
Thus zones are an abstract representation of sets of valuations. 

Let $W$ be an arbitrary set of valuations (not necessarily a \GTA\ zone) and $q$ be a state. 
For a transition $t:= (q, a, \prog, q')$, we write $(q, W)$ $\xra{t}$ $(q', W_t)$ if
$W_t = \{ v' \mid \text{$v'$ is a valuation and } (q, v) \xra{t} \xra{\delta} (q', v')
\text{ for some } v \in W \text{ and } \delta \in \mathbb{R}_{\geq 0} \}$.
\cite{AkshayGGJS23} showed that starting from a \GTA\ zone $Z$, the successors
are also event zones: $(q, Z) \xra{t} (q', Z_t)$ implies $Z_t$ is also a a \GTA\ zone.
This observation is used to define the notion of the \emph{Zone graph} of a \GTAfull .

\begin{definition}[\textbf{\GTA\ zone graph}~\cite{AkshayGGJS23}]
  Given a \GTA\ $\Aa$, its {\em \GTA\ zone graph, denoted \gzg($\Aa$)}, is defined as follows: 
  Nodes are of the form $(q, Z)$ where $q$ is a state and $Z$ is a \GTA\ zone.  
  Initial nodes are pairs $(q_0, \elapse{Z_0})$ where $(q_0,g_{0})\in\Ii$ is an initial
  condition and $Z_0$ is
  given by $g_{0}\wedge\big(X_{F} \leq 0\big)\wedge \big(X_H \geq 0\big)$
  ($Z_{0}$ is the set of all valuations which satisfy the \emph{initial constraint} $g_0$).
  For every node $(q, Z)$ and every transition $t := (q, a, \prog, q')$ there is a
  transition $(q, Z) \xra{t} (q', Z')$ in the \GTA\ zone graph.
\end{definition}

Finally, just as is the case for zone graphs for timed automata, $\gzg(\Aa)$ is not
guaranteed to be finite.  In order to use it to check Büchi non-emptiness or reachability,
we need a finite abstraction of the zone graph.  The standard technique to obtain such
finite abstractions is using the notion of \emph{simulations}, that we recall next.

\begin{definition}[\textbf{Simulation}]\label{def:simulation}
  A (time-abstract) simulation relation on the semantics of a \GTA\ is a reflexive,
  transitive relation $(q,v) \preceq (q,v')$ relating configurations
  with the same control state and 
  \begin{enumerate}
    \item for every $\delta\in\Rpos$ such that $v+\delta\in\V$ is a valuation, there
    exists $\delta'\in\Rpos$ such that $v'+\delta'\in\V$ is a valuation and
    $(q,v+\delta)\preceq(q,v'+\delta')$,
  
    \item for every transition $t$, if $(q,v) \xra{t} (q_1,v_1)$ for some valuation
    $v_{1}$, then $(q,v') \xra{t} (q_1,v'_1)$ for some valuation $v'_{1}$ with
    $(q_1,v_1)\preceq(q_1,v'_1)$,
    
    \item for all future clock $x\in X_{F}$, if $v(x)=-\infty$ then 
    $v'(x)=-\infty$.
  \end{enumerate}
  For two \GTA\ zones $Z, Z'$, we say $(q, Z) \preceq (q, Z')$ if for
  every $v \in Z$ there exists $v' \in Z'$ such that
  $(q, v) \preceq (q, v')$. 
\end{definition}

\section{Liveness for GTA}
\label{sec:gta-liveness}

In this section, we will discuss a zone-based procedure to check
liveness for safe generalized timed automata. We start by explaining
how the standard zone based algorithm for solving liveness in classical
timed automata can be adapted to the setting of safe \GTA s. The
approach for timed automata crucially depends on the existence of a
finite time-abstract bisimulation between valuations, namely the
region-equivalence~\cite{Alur:TCS:1994}. However, there exists no such
finite time-abstract bisimulation for \GTA s (extension of a result of
\cite{GeeraertsRS14}), as illustrated in Figure~\ref{fig:no-finite}.
The issue is that we cannot forget (abstract) the values of future clocks, unlike history
clocks where values above a maximum constant are equivalent.  Therefore, our approach
involves a significant deviation from the standard one.  

\begin{figure}[tbh]
  \centering
  \includegraphics[page=9,scale=0.9]{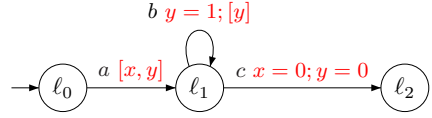}
  \caption{Example to illustrate no finite bisimulation in GTA; $x$ is
    a future clock, $y$ a history clock.
    The initial transition releases clock $x$ to an arbitrary value, and resets $y$ to
    $0$.  From configuration $\langle \ell_1, x=-n, y=0 \rangle$ ($n\in\mathbb{N}$), 
    the only way to reach $\ell_2$ is by executing $b^nc$, with $1$ time unit
    between consecutive $b$'s.
    Therefore, $\langle \ell_1, x = -n, y = 0 \rangle$ and 
    $\langle \ell_1, x = -m, y = 0 \rangle$ are simulation incomparable, 
    when $n \neq m$.  Hence there is no finite  bisimulation.}
  \label{fig:no-finite}
\end{figure}

We fix a safe \GTA\ $\A$ for the rest of this section. In order to
focus on the main difficulties and avoid additional technicalities, we
assume that the \GTA\ $\Aa$ is without renamings. We start by noting
that non-Zeno runs have a special form: future clocks which are not
ultimately $-\infty$ should be released infinitely often.  If not,
there is a last point where a future clock is released to a finite
value, and the entire suffix of the run should fall under this finite
time, which contradicts non-Zenoness.
\begin{restatable}{lemma}{lemnonzeno}
  \label{lem:non-zeno}
  Let
  $\rho := (q_0,v_0) \xra{\delta_0,t_0} (q_1,v_1) \xra{\delta_1,t_1}
  \cdots$ be a non-Zeno run of the \GTA\ $\A$.  Then, for every future
  clock $x$ of $\A$, and for every index $i \ge 0$, if
  $v_i(x) \neq -\infty$, there exists $j \geq i$ such that $x$ is
  released in $t_j$.
\end{restatable}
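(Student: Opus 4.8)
The plan is to argue by contradiction. Suppose there is a future clock $x$ and an index $i \ge 0$ with $v_i(x) \neq -\infty$, but $x$ is never released in any $t_j$ for $j \ge i$. The key observation is that once past the index $i$, the value of $x$ is only ever modified by time elapse: since $x$ is a future clock, the change operation $[R]$ leaves $x$ untouched whenever $x \notin R$, and we have assumed $x \notin R_j$ (the reset set of $t_j$) for all $j \ge i$. (Here I use that we are in the renaming-free setting, as assumed just before the lemma, so there are no permutations to worry about moving another clock's value into $x$.) Hence for every $j \ge i$ we have $v_{j+1}(x) = v_j(x) + \delta_j$, and therefore $v_j(x) = v_i(x) + \sum_{k=i}^{j-1}\delta_k$ for all $j \ge i$.

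Next I would invoke the semantic constraint that every configuration along a run is a genuine valuation, in particular a delay transition $(q,v)\xra{\delta}(q,v+\delta)$ is only allowed when $(v+\delta)\models X_F \le 0$, so $v_j(x) \le 0$ for every $j$. Combining this with the formula above, $v_i(x) + \sum_{k=i}^{j-1}\delta_k \le 0$ for all $j > i$, i.e. $\sum_{k=i}^{j-1}\delta_k \le -v_i(x)$ for all $j$. Since $v_i(x) \neq -\infty$ and $v_i(x) \le 0$, the quantity $-v_i(x)$ is a finite non-negative real, so the partial sums of the $\delta_k$ for $k \ge i$ are uniformly bounded by $-v_i(x)$. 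Adding the finite prefix $\sum_{k=0}^{i-1}\delta_k$, we conclude that $\sum_{k\ge 0}\delta_k$ is bounded, which contradicts the assumption that $\rho$ is non-Zeno.

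The argument is essentially routine once the bookkeeping is set up; I do not expect a genuine obstacle, but the one point that needs care is justifying that $x$'s value evolves purely by accumulated delay between indices $i$ and $j$. This requires walking through the definition of the change operation $[R]$ to see that future clocks not in $R$ are preserved, and confirming that (absent renamings) no program on a transition $t_j$ with $j \ge i$ can alter $v(x)$ except via the interleaved delay steps $\delta_j$. One should also note that $x$ being not released in $t_j$ means precisely $x \notin R$ for every change $[R]$ appearing in $\prog_j$; since a program is a sequence of guards, changes and renamings, and guards do not modify the valuation, this is exactly what is needed to chain the equalities $v_{j+1}(x) = v_j(x) + \delta_j$ together.
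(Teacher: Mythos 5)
Your proof is correct and follows essentially the same route as the paper's: assume $x$ is never released after index $i$, observe that its value then grows only by accumulated delay while remaining $\le 0$, and conclude that the total delay after $i$ is bounded by $-v_i(x)<\infty$, contradicting non-Zenoness. Your version is simply more explicit about the bookkeeping (and gets the sign right, writing $\sum_{k\ge i}\delta_k \le -v_i(x)$ where the paper tersely writes $\le \alpha$).
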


\begin{proof}
  Suppose not. There is a future clock $x$, and an index $i$ such that
  $v_i(x) = \alpha \neq -\infty$, and $x$ is never released in any
  later transition. By the semantics of delay transitions,
  $\sum_{j \ge i} \delta_j \le \alpha$. This contradicts the non-Zenoness
  assumption.
\end{proof}

\subparagraph*{\textbf{Overview of our solution.}}  

In classical timed automata, the liveness problem is solved by enumerating the zone
graph~\cite{Daws,Bengtsson:LCPN:2003},
and using a \emph{simulation
equivalence}~\cite{HerbreteauSW12,BBLP-STTT05,GastinMS18,Gastin0S19} for termination:
nodes of a zone graph are of the form $(q, Z)$ where $q$ is a state of the automaton, and
$Z$ is a zone; exploration from $(q, Z)$ is stopped if there exists an already visited
node $(q, Z')$ such that $(q,Z)\preceq(q,Z')$ and $(q,Z')\preceq(q,Z)$ for some simulation
relation $\preceq$.  In this case an edge is added between $(q, Z)$ and $(q, Z')$ to
indicate a simulation equivalence.  There is an (infinite) accepting run iff there is a
cycle in the zone graph thus computed, containing an accepting state.  The main point is
that, from a cycle in the zone graph with simulation equivalences, we can conclude the
existence of an infinite run over configurations.

At a high level the proof of this fact is as follows. Let us start by
ignoring simulations for the moment: suppose
$(q, Z) \xra{\sigma} (q, Z)$ for a sequence of transitions $\sigma$. By
definition of successor computation in the zone graph, for every $v$
in the zone $Z$ (on the right), there exists a predecessor $u$ in the
zone $Z$ (on the left). Repeatedly applying this argument gives a
valuation $u \in Z$ from which $\sigma$ can be iterated $\ell$ times, for
any $\ell \ge 1$. When $\ell$ is greater than the number of Alur-Dill
regions~\cite{Alur:TCS:1994}, we get a run
$(q, u) \xra{\sigma^{\ell}} (q, u')$ such that $u$ and $u'$ are region
equivalent. Since the region equivalence is a time-abstract
bisimulation, this shows that we can once again do $\sigma^{\ell}$
from $(q, u')$, and so on. This leads to an infinite run from $(q, u)$
where $\sigma$ can be iterated infinitely often. Now, when simulations
are involved, we need to consider sequences of the form
$(q, Z) \xra{\sigma} (q, Z')$ where $(q,Z)\preceq(q,Z')$ and $(q,Z')\preceq(q,Z)$.
An argument similar to the above can be adapted in this case
too~\cite{LiBuchi09,HerbreteauSW-cav-12,LivenessHard}. The critical
underlying reason that makes such an argument possible is the presence
of a finite time-abstract bisimulation, which in timed automata, is
given by the region equivalence.

The same idea cannot be directly applied in the GTA setting, as there
is no finite time-abstract bisimulation for GTAs, 
even with the safety assumption (Figure~\ref{fig:no-finite}). However, \cite{AkshayGGJS23} have defined a
finite equivalence $v_1 \sim_M v_2$ and shown that the downward
closures of 
the reachable zones w.r.t. a certain simulation called the $\G$
simulation~\cite{GastinMS18,Gastin0S19} are unions of $\sim_M$ equivalence
classes. Therefore, applying an argument of the above style will give
us a run $(q, u) \xra{\sigma^{\ell}} (q, u')$ such that $u \sim_M
u'$. But we cannot conclude an infinite run immediately as $\sim_M$ is
not a bisimulation.

To circumvent this problem, we will define an equivalence $\eqv_M$
which is in spirit like the region equivalence in timed automata. As
expected, $\eqv_M$ will be a bisimulation. However, in accordance with
the no finite timed-bisimulation result, $\eqv_M$ will have an
infinite index. We make a key observation: if we have a run
$(q, u) \xra{\sigma} (q, u')$ such that $u \sim_M u'$ \emph{and} if
$\sigma$ releases every future clock, then we can get a run
$(q, u) \xra{\sigma} (q, u'')$ where $u \eqv_M u''$ for a suitably
modified valuation $u''$. Since $\eqv_M$ is
a bisimulation, this will then give an infinite run where $\sigma$ can
be iterated infinitely often. As we have seen from
Lemma~\ref{lem:non-zeno}, if we are interested in non-Zeno runs, only
such cycles where all future clocks are released (or remain $-\infty$)
are relevant.
Therefore, in order to decide liveness for safe GTAs, it suffices to
construct the zone graph with the simulation replaced with the
simulation equivalence and look for a reachable cycle that contains an accepting
state such that for every future clock $x$, either $x$ is released on the cycle, or 
valuation $-\infty$ is possible for clock $x$.

This section is
organized as follows: we will first define the equivalence $\eqv_M$
and show that it is a bisimulation; then we recall $\sim_M$, and prove
the key observation mentioned above. One of the main challenges is in
addressing diagonal constraints, which is exactly where the safety
assumption is helpful.

\subparagraph*{\textbf{A region-like equivalence for \GTA.}}  

The definition of $\eqv_M$ looks like the classical region equivalence extended from $[0,
+\infty)$ to $\overline{\mathbb{R}}$: all clocks which are lesser than $M$ (which
automatically includes all future clocks) have the same integral values, and the ordering
of fractional parts among these clocks is preserved.  To account for diagonal constraints
in guards, we explicitly add a condition to say that all allowed diagonal constraints are
satisfied by equivalent valuations.  This new equivalence does not have a finite index,
but it turns out to be a time-abstract bisimulation, similar to the classical regions.

\begin{definition}\label{def:eqv_M}
  Let $v_{1},v_{2}\in\V$ be valuations.
  We say $v_1 \eqv_M v_2$ if for all clocks $x, y$:
  \begin{enumerate}
    \item $v_{1}(x)\leqlt c$ iff $v_{2}(x)\leqlt c$ for all ${\leqlt}\in\{<,\le\}$
    and $c\in\{-\infty,+\infty\}$ or $c\in\mathbb{Z}$ with $c\leq M$,
    
    \item $v_1\models x - y \leqlt c$ iff $v_2\models x - y \leqlt c$ for all
    ${\leqlt}\in\{<,\le\}$ and $c\in\{-\infty,+\infty\}$ or $c\in\mathbb{Z}$ with $|c|\le M$,
    
    \item if $-\infty<v_1(x),v_1(y)\le M$ then we have
    $\{v_1(x)\} \le \{v_1(y)\}$ iff $\{v_2(x)\} \le \{v_2(y)\}$.
  \end{enumerate}
\end{definition}

Notice that when $v_{1}\eqv_{M} v_{2}$, the first condition implies  
$v_{1}(x)=+\infty$ iff $v_{2}(x)=+\infty$,
$v_{1}(x)=-\infty$ iff $v_{2}(x)=-\infty$, $-\infty<v_1(x)\le M$ iff $-\infty<v_2(x)\le M$,
and in this case $\Int{v_1(x)}=\Int{v_2(x)}$ and $\{v_1(x)\}=0$ iff $\{v_2(x)\}=0$.

\begin{restatable}{lemma}{eqvisbisimulation}
  \label{lem:eqv-is-bisimulation}
  $\eqv_M$ is a time-abstract bisimulation.
\end{restatable}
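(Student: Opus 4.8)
The plan is to verify the two simulation conditions of Definition~\ref{def:simulation} symmetrically (since \(\eqv_M\) is evidently an equivalence), namely that equivalent valuations can match each other's delay transitions and discrete transitions while staying equivalent, plus the side condition on \(-\infty\) future clocks (which is immediate from Condition~1 of Definition~\ref{def:eqv_M} applied with \(c = -\infty\)). So suppose \(v_1 \eqv_M v_2\).

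\textbf{Delay transitions.} Given \(\delta \in \Rpos\) with \(v_1 + \delta \in \V\), I would first note that the constraint that makes \(v_1+\delta\) a valuation is exactly \(X_F \le 0\), and since this is a conjunction of bounds of the form \(x \le 0\) with \(0 \in \mathbb{Z}\), \(|0| \le M\), condition~1 guarantees it is preserved: \(v_1(x) \le -\delta\) controls whether a future clock can still elapse \(\delta\). The classical region argument then gives a matching \(\delta'\): split into cases according to the fractional ordering and integral parts of the clocks lying in \((-\infty, M]\) (condition~3), exactly as in the Alur--Dill construction but on the shifted domain; clocks that are already \(> M\) (or \(\pm\infty\)) stay in their class by condition~1 regardless of the exact \(\delta'\). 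The only new ingredient over the classical proof is maintaining condition~2 (the diagonal constraints \(x - y \leqlt c\) with \(|c| \le M\)); but differences \(v_i(x) - v_i(y)\) are invariant under a uniform time shift, so if I choose \(\delta'\) to preserve the integral parts and fractional ordering of all clocks in \((-\infty, M]\), the differences among those clocks are determined up to the relevant granularity, and differences involving a clock \(> M\) are either \(> M\) or \(< -M\) on both sides (already fixed by condition~1). Hence \(v_1 + \delta \eqv_M v_2 + \delta'\).

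\textbf{Discrete transitions.} A discrete transition applies a program \(\prog = \prog_1; \dots; \prog_n\) built from guards \(g\), changes \([R]\), and (since we assumed no renamings) nothing else; I handle each primitive. For a guard \(g \in \Phi(X)\): \(g\) is a conjunction of atomic constraints \(y - x \leqlt c\) with \(x,y \in X \cup \{0\}\) and \(c \in \overline{\mathbb{Z}}\); if \(|c| \le M\) condition~2 (with the degenerate-diagonal cases \(x = 0\) or \(y = 0\) folded into condition~1) gives \(v_1 \models g \iff v_2 \models g\) directly, and the subtle case \(|c| > M\) is where the \emph{safety/boundedness} of the automaton must be invoked — here I would argue that the constants appearing in guards of \(\A\) are bounded by \(M\) by choice of \(M\) (this is how \(M\) is fixed for the automaton), so this case does not arise for guards actually on transitions. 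For a change \([R]\): history clocks in \(R_H\) are set to \(0\) on both sides (preserving all three conditions trivially for those coordinates), while future clocks in \(R_F\) are released nondeterministically — given any \(v_1' \in [R]v_1\), I must produce \(v_2' \in [R]v_2\) with \(v_1' \eqv_M v_2'\); for the released future clocks I simply pick values realizing the same region data (integral parts \(\le M\), fractional ordering, \(-\infty\) or not) as in \(v_1'\), which is always possible since future-clock values range freely over \(\Rneg \cup \{-\infty\}\), and the unchanged clocks keep their \(\eqv_M\)-data by hypothesis; one checks the cross-differences between a released clock and an unchanged clock are consistent because they are determined by the shared region data. Composing these primitive steps along \(\prog_1; \dots; \prog_n\) (and threading in the trailing time-elapse as in the definition of \(\xra{t}\) on zones, handled by the delay case) yields the required matched transition.

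\textbf{Main obstacle.} The genuinely delicate point is the interplay between diagonal constraints (condition~2) and the unbounded domain of future clocks: unlike the classical setting, a future clock can be arbitrarily negative, so I cannot rely on "all values beyond \(M\) collapse" on the negative side. The resolution is precisely that \(\eqv_M\) has \emph{infinite} index — condition~2 pins down every diagonal value with \(|c| \le M\) exactly, and condition~3 pins down the fractional ordering of \emph{all} clocks above \(-\infty\) (not just those in a bounded window) — so equivalent valuations agree on every constraint the automaton can ever test, and a matching move in the \(\eqv_M\)-class can always be synthesized. The safety assumption is what guarantees the guard constants stay within the \(M\)-window (and that diagonal future-clock guards only fire when the clocks are \(0\) or \(-\infty\)), closing the \(|c| > M\) loophole; I would state explicitly at the start of the proof that \(M\) is chosen at least as large as the maximal absolute constant appearing in \(\A\).
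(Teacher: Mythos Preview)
Your proposal follows essentially the same decomposition as the paper's proof: guard satisfaction (from conditions~1--2 once $M$ is taken at least as large as the maximal constant in $\A$), a region-style construction of a matching delay $\delta'$ (with condition~2 coming for free since differences are invariant under time shift), and separate treatment of reset and release in the change operation.

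Two corrections. First, the safety assumption plays \emph{no} role in this lemma; the paper does not invoke it here, and your claim that ``safety is what guarantees the guard constants stay within the $M$-window'' conflates two unrelated things---bounding constants by $M$ is simply the choice of $M$, while safety (Definition~\ref{def:safe-program}) concerns checking $X_D$ clocks for $0$ or $-\infty$ before release and is only needed later, in Lemma~\ref{lem:sim-to-eqv-2}. Second, your release case is too quick: the delicate part is verifying condition~2 for a diagonal $y-x$ between the freshly released future clock $x$ and an unchanged clock $y$, and ``determined by the shared region data'' is not quite right when $v_1(y)>M$ (so $y$'s integral and fractional parts are \emph{not} pinned down). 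The paper handles this by a case split: if $v_1(y)>M$ then $u_1(y-x),u_2(y-x)>M$ so all constraints with $|c|\le M$ are trivially matched; otherwise $v_1(y)\le M$ and one computes $u_i(y-x)=\lfloor v_i(y)\rfloor-\lfloor u_i(x)\rfloor+\{v_i(y)\}-\{u_i(x)\}$ and uses the matched integral parts and fractional ordering. Also, your statement that ``condition~3 pins down the fractional ordering of all clocks above $-\infty$'' is inaccurate---it only applies to clocks with value in $(-\infty,M]$---though this does not break the argument since future clocks are always $\le 0\le M$.
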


\begin{proof}
  Let $v_1 \eqv_M v_2$. We will show that the equivalence is preserved
  under time delays and transitions.
  \begin{description}
  \item[Guard satisfaction.] We consider guards to have constants in
    $\{-\infty,+\infty\}\cup\{c\in\mathbb{Z}\mid |c|\leq M\}$.
    From Items $1,2$ we directly get $v_{1}\models g$ iff $v_{2}\models g$.

  \item[Time elapse.] Let $\delta_1 \ge 0$. 
    We will construct $\delta_2\geq0$  such that
    $v'_{1}=v_1 + \delta_1 \eqv_M v_2 + \delta_2=v'_{2}$. Set
    $\Int{\delta_2} = \Int{\delta_1}$. If $\{ \delta_1\} = 0$, then
    set $\{\delta_2\} = 0$. Now, assume $0 < \{ \delta_1\} < 1$.

    Let $B = \{ x \mid -\infty<v_1(x) \le M \}=\{ x \mid -\infty<v_2(x) \le M\}$.  Let
    $\{f_1, \dots, f_k\}$ $=$ $\{ \{v_1(x)\} \mid x \in B\} $ with
    $0 \le f_1 < f_2 < \cdots < f_k < 1$. By definition of
    $v_1 \eqv_M v_2$, the set
    $\{\{v_2(x)\} \mid x\in B \}$ is of the form
    $\{ f'_1, \dots, f'_k\}$ with
    $0 \le f'_1 < f'_2 < \cdots < f'_k < 1$, and $f_1 = 0$ iff
    $f'_1 = 0$.

    Define $f_{0}=f'_{0}=0$ and $f_{k+1} = f'_{k+1} = 1$. Consider the set
    $\{1-f_k, \dots, 1-f_2, 1-f_1\}$. Either $\{\delta_1\}$ equals some $1-f_i$, or
    $1-f_{i+1} < \{\delta_1\} < 1-f_{i}$ for some $0 \le i \le k$. 
    Choose $\{\delta_2\} = 1- f'_i$ in the former case, and
    $\{\delta_2\} = \ell$ for some $1-f'_{i+1} < \ell < 1-f'_{i}$ in
    the latter case. This choice maintains the following property for
    all clocks $x \in B$:
    \begin{align}
      \{v_1(x)\} + \{\delta_1\} < 1 & \text{ iff } \{v_2(x)\} + \{\delta_2\} < 1 
      \label{eq:5} \\
      \{v_1(x)\} + \{\delta_1\} \leq 1 & \text{ iff } \{v_2(x)\} + \{\delta_2\} \leq 1 
      \nonumber 
    \end{align}
    Moreover, observe that:
    \begin{align}
      v'_{1}(x)=v_1(x)+\delta_1 & = \Int{v_1(x)} + \Int{\delta_1} + \{v_1(x)\}
                            + \{\delta_1\} \label{eq:6}\\
      v'_{2}(x)=v_2(x)+\delta_2 & = \Int{v_2(x)} + \Int{\delta_2} +
                            \{v_2(x)\} + \{\delta_2\} \nonumber
    \end{align}
    For clocks $x \in B$ we already have $\Int{v_1(x)} = \Int{v_2(x)}$
    by definition of $v_1 \eqv_M v_2$. By construction, we have
    $\Int{\delta_1} = \Int{\delta_2}$. From (\ref{eq:5}), we
    deduce that for all $x \in B$:
    \begin{align}
      \Int{v'_1(x)} &= \Int{v'_2(x)} &\text{and}&&
      \{v'_1(x)\} = 0 &\text{ iff } \{v'_2(x)\} = 0 
      \label{eq:7} 
    \end{align}
    These arguments are sufficient to show items $1$ and $3$ in the definition of
    $\eqv_M$.  For item $2$, notice that $v'_1(x)-v'_1(y) = v_1(x)-v_1(y)$ and
    $v'_2(x)-v'_2(y) = v_2(x)-v_2(y)$.  Since $v_1, v_2$ satisfy item $2$, we conclude
    item $2$ for $v'_1$ and $v'_2$.

  \item[Reset.] Let $x$ be a history clock, and let $u_1 \in
    [x]v_1$. Therefore, $u_1(x) = 0$, and $u_1(y) = v_1(y)$ for all
    other clocks $y$. Let $u_2\in[x]v_{2}$ so that that $u_2(x) = 0$ and
    $u_2(y) = v_2(y)$ for other clocks $y$. We claim that
    $u_1 \eqv_M u_2$. Items $1,3$ of the definition follow from
    $v_1 \eqv_M v_2$, and from the fact that $u_1(x) = u_2(x) =
    0$. Item $2$ trivially holds for pairs of clocks different from
    $x$. Now, we consider differences $x - y$ and $y - x$. Notice that
    $u_1(x - y) = -v_1(y)$ and $u_2(x - y) = -v_2(y)$; and
    $u_1(y - x) = v_1(y)$ and $u_2(y - x) = v_2(y)$. Using item
    $1$ of $v_1 \eqv_M v_2$ we get $u_1\models x-y\leqlt c$ iff
    $u_2\models x-y\leqlt c$, and $u_1\models y - x \leqlt c$ iff
    $u_2\models y-x \leqlt c$.

    \item[Release.] Let $x$ be a future clock and let $u_1\in [x]v_1$. 
    By definition, $u_1$ and $v_1$ differ only in the value
    of clock $x$. Construct valuation $u_2\in [x]v_2$ as follows. 
    Set $u_{2}(x)=-\infty$ if $u_{1}(x)=-\infty$. Otherwise, set
    $\Int{u_2(x)}=\Int{u_1(x)}$; if $\{u_1(x)\} = 0$, then set
    $\{u_2(x)\} = 0$; else choose $\{u_2(x)\}$ in $(0,1)$ such that
    for all clocks $y\neq x$ with $-\infty< u_1(y) \le M$ we have:
    $\{v_1(y)\}<\{u_1(x)\}$ iff $\{v_2(y)\}<\{u_2(x)\}$, and
    $\{v_1(y)\}=\{u_1(x)\}$ iff $\{v_2(y)\}=\{u_2(x)\}$. By
    construction, $u_2$ satisfies items 1 and 3 of the definition
    of $\eqv_M$. We prove that item 2 is also satisfied.
    
    Consider a diagonal constraint $y-z\leqlt c$ with ${\leqlt}\in\{{<},{\leq}\}$ and
    $c\in\{-\infty,+\infty\}$ or $c\in\mathbb{Z}$ with $|c|\leq M$.  
    If $x\notin\{y,z\}$ then $u_{1}(y-z)=v_{1}(y-z)$ and $u_{2}(y-z)=v_{2}(y-z)$.
    Since $v_{1}\approx_{M}v_{2}$ we obtain $u_1\models y-z\leqlt c$ iff
    $u_2\models y-z\leqlt c$.  
    
    Consider the case $y\neq x=z$.  
    We have $u_1(y-x)=v_{1}(y)-u_{1}(x)$ and $u_2(y-x)=v_{2}(y)-u_{2}(x)$.
    If $u_{1}(x)=-\infty$ then $u_{2}(x)=-\infty$.
    If $v_{1}(y)\in\{-\infty,+\infty\}$ then $v_{2}(y)=v_{1}(y)$.
    In both cases, we have $u_{1}(y-x)=u_{2}(y-x)\in\{-\infty,+\infty\}$ and we get 
    $u_1\models x-y\leqlt c$ iff $u_2\models x-y\leqlt c$.
    In the remaining cases we have $u_{1}(x),u_{2}(x),v_{1}(y),v_{2}(y)\in\mathbb{R}$.
    
    If $M<v_{1}(y)$ then $M<v_{2}(y)$ and $M<u_{1}(y-x),u_{2}(y-x)<+\infty$.
    Hence, $u_1\models x-y\leqlt c$ iff $u_2\models x-y\leqlt c$. 
    Otherwise, $v_{1}(y),v_{2}(y)\leq M$ and $\Int{v_1(y)}=\Int{v_2(y)}$. 
    By construction we have $\Int{u_2(x)}=\Int{u_1(x)}$. We write
    \begin{align*}
      u_1(y-x) &= v_{1}(y)-u_{1}(x) = \Int{v_1(y)} - \Int{u_1(x)} + \{v_1(y)\} - \{u_1(x)\} \\
      u_2(y-x) &= v_{2}(y)-u_{2}(x) = \Int{v_2(y)} - \Int{u_2(x)} + \{v_2(y)\} - \{u_2(x)\} \,.
    \end{align*}
    We know that $\Int{v_1(y)}-\Int{u_1(x)}=\Int{v_2(y)}-\Int{u_2(x)}$.
    By construction, $\{v_1(y)\}<\{u_1(x)\}$ iff $\{v_2(y)\}<\{u_2(x)\}$, and
    $\{v_1(y)\}=\{u_1(x)\}$ iff $\{v_2(y)\}=\{u_2(x)\}$. We deduce once again that 
    $u_1\models x-y\leqlt c$ iff $u_2\models x-y\leqlt c$.
    
    The case $y=x\neq z$ can be proved similarly.
    \qedhere
  \end{description}
\end{proof}

\subparagraph*{\textbf{The equivalence $\sim_M$, and moving from $\sim_M$ to $\eqv_M$.}} 

The equivalence $\sim_M$ is defined on the space of all valuations.
Our goal in this part is to start from $v_1 \sim_M v_2$ and generate a valuation $v'_2$ by
modifying some values of $v_2$, so that we get $v_1 \eqv_M v'_2$.  Let us first recall the
definition of $\sim_M$, with $n$ be the number of clocks in the GTA.
\begin{itemize}
  \item First, we define $\sim_{M}$ on $\alpha,\beta\in\overline{\mathbb{R}}$ by
  $\alpha\sim_{M}\beta$ if
  $(\alpha\leqlt c \Longleftrightarrow \beta\leqlt c)$ for all ${\leqlt}\in\{<,\leq\}$ and
  $c\in\{-\infty,+\infty\}$ or $c\in\mathbb{Z}$ with $|c|\leq M$.
  In particular, $\alpha\sim_{M}\beta$ implies $\alpha=-\infty$ iff $\beta=-\infty$ and 
  $\alpha=+\infty$ iff $\beta=+\infty$. Also, if $-M\leq\alpha\leq M$ then 
  $\alpha\sim_{M}\beta$ implies $\Int{\alpha}=\Int{\beta}$ and $\{\alpha\}=0$ iff 
  $\{\beta\}=0$.

  \item For valuations $v_{1},v_{2}\in\V$ we
  define $v_{1}\sim_M v_{2}$ if $(i)$ $v_{1}(x)\sim_{nM} v_{2}(x)$ for all $x\in X$,
  and $(ii)$ $v_{1}(x) - v_{1}(y) \sim_{(n+1)M} v_{2}(x) - v_{2}(y)$ for all 
  pairs of clocks $x,y\in X$.
\end{itemize}

Notice that $\eqv_M$ and $\sim_M$ are incomparable, in the sense that
neither of them is a refinement of the other. The equivalence $\eqv_M$
constrains values up to $M$, whereas $\sim_M$ looks at values up to
$nM$, i.e., between $-nM$ and $nM$. For instance, consider
$v_1 := \langle x = M+2, y = 1\rangle$ and
$v_2 := \langle x = M + 3, y = 1 \rangle$ for some $M \ge 2$. We have
$v_1 \eqv_M v_2$, but $v_1 \not \sim_M v_2$.
For the other way around, notice that $\sim_M$ has
finite index, whereas $\eqv_M$ does not. So, $v_1 \sim_M v_2$ does not
imply $v_1 \eqv_M v_2$. To see it more closely, $v_1 \eqv_M v_2$
enforces the same integral values for all future clocks. For
clocks less than $-(n+1) M$, there is no such constraint on the actual
values in $\sim_M$.

As mentioned above, our objective is to obtain $\eqv_M$ equivalent
valuations starting from $\sim_M$ equivalent
ones. Lemma~\ref{lem:sim-M-to-region} is a first step in this
direction. It essentially shows that, when restricted to clocks within
$-M$ and $+M$, $\sim_M$ entails $\eqv_M$.

\begin{restatable}{lemma}{simMtoregion}
  \label{lem:sim-M-to-region}
  Suppose $v_1 \sim_M v_2$. Let $x, y$ be clocks such that
  $-M \le v_1(x), v_1(y) \le M$. Then, $\Int{v_1(x)} = \Int{v_2(x)}$,
  $\{v_1(x)\} = 0$ iff $\{v_2(x)\} = 0$, and
  $\{v_1(x)\} \le \{v_1(y)\}$ iff $\{v_2(x)\} \le \{v_2(y)\}$.
\end{restatable}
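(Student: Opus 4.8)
The plan is to establish the three consequences directly from the definition of $\sim_M$, using the fact that when a clock's value lies in $[-M, M]$, the $\sim_{nM}$-relation (and $\sim_{(n+1)M}$-relation on differences) is fine enough to pin down integral parts and to compare fractional parts. First I would handle the single-clock statements. Since $-M \le v_1(x) \le M$, in particular $-nM \le v_1(x) \le nM$, so from $v_1(x) \sim_{nM} v_2(x)$ and the recalled property of $\sim_M$ on reals (``if $-M \le \alpha \le M$ then $\alpha \sim_M \beta$ implies $\Int{\alpha} = \Int{\beta}$ and $\{\alpha\} = 0$ iff $\{\beta\} = 0$'', applied with $nM$ in place of $M$) we immediately get $\Int{v_1(x)} = \Int{v_2(x)}$ and $\{v_1(x)\} = 0$ iff $\{v_2(x)\} = 0$. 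The same applies to $y$.

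The interesting part is the fractional-order comparison. Here the idea is to reduce the comparison of fractional parts to an integer constraint on the difference $v_1(x) - v_1(y)$, which $\sim_{(n+1)M}$ controls. Concretely, since $-M \le v_1(x), v_1(y) \le M$, the difference $d_1 := v_1(x) - v_1(y)$ satisfies $-2M \le d_1 \le 2M$, hence $|d_1| \le (n+1)M$ (using $n \ge 2$, or more carefully, $n \ge 1$ suffices if $n \ge 1$ gives $(n+1)M \ge 2M$). So by clause $(ii)$ of the definition, $d_1 \sim_{(n+1)M} d_2$ where $d_2 := v_2(x) - v_2(y)$, and since $|d_1| \le (n+1)M$ this yields $\Int{d_1} = \Int{d_2}$ and $\{d_1\} = 0$ iff $\{d_2\} = 0$. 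The key elementary observation is then: writing $d_1 = v_1(x) - v_1(y) = (\Int{v_1(x)} - \Int{v_1(y)}) + (\{v_1(x)\} - \{v_1(y)\})$, and similarly for $d_2$, and knowing $\Int{v_1(x)} = \Int{v_2(x)}$ and $\Int{v_1(y)} = \Int{v_2(y)}$ from the first part, the integer parts of $d_1$ and $d_2$ agree automatically with $\Int{v_1(x)} - \Int{v_1(y)}$ only up to the sign of the fractional difference; a clean case analysis gives $\{v_1(x)\} \le \{v_1(y)\}$ iff $\{v_1(x)\} - \{v_1(y)\} \in (-1, 0]$ iff $\Int{d_1} = \Int{v_1(x)} - \Int{v_1(y)} - [\{v_1(x)\} < \{v_1(y)\}]$, and this discrete data about $d_1$ is preserved to $d_2$ because $d_1 \sim_{(n+1)M} d_2$ determines $\Int{d_1} = \Int{d_2}$ and whether $\{d_1\} = 0$.

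I expect the main obstacle to be the bookkeeping in this last step: one must be careful to treat the boundary case $\{v_1(x)\} = \{v_1(y)\}$ (equivalently $\{d_1\} = 0$) separately from the strict cases, and to confirm that $\sim_{(n+1)M}$ on the difference really does distinguish ``$\{d_1\} = 0$'' from ``$\{d_1\} \neq 0$'' and fixes $\Int{d_1}$ — both of which hold precisely because $|d_1| \le (n+1)M$ places $d_1$ in the range where $\sim_{(n+1)M}$ is tight. One should also double-check the arithmetic bound $2M \le (n+1)M$, which holds whenever $n \ge 1$; since a GTA with at least one clock has $n \ge 1$ this is fine, and in any degenerate case the statement is vacuous. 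Modulo this case analysis, everything is a routine unwinding of the definitions, so the proof should be short.
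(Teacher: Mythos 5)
Your proposal is correct and follows essentially the same route as the paper: the single-clock claims come from the pointwise relation $v_1(x)\sim_{nM}v_2(x)$, and the fractional-order claim comes from the difference clause of $\sim_M$ applied to $v_1(x)-v_1(y)$, whose value lies in $[-2M,2M]\subseteq[-(n+1)M,(n+1)M]$ where that relation is tight. The paper phrases the last step a bit more directly — it just tests the single constraint $v(x-y)\le d$ with $d=\Int{v_1(x)}-\Int{v_1(y)}=\Int{v_2(x)}-\Int{v_2(y)}$, which is equivalent to $\{v(x)\}\le\{v(y)\}$ — whereas you recover the same information by extracting $\Int{d_1}=\Int{d_2}$ and ``$\{d_1\}=0$ iff $\{d_2\}=0$'' and doing a small case analysis, but this is only a difference in bookkeeping, not in the argument.
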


\begin{proof}
  Since $v_{1}\sim_{M} v_{2}$ we have in particular $v_{1}(x)\sim_{M} v_{2}(x)$ and we 
  deduce $\Int{v_1(x)} = \Int{v_2(x)}$ and $\{v_1(x)\} = 0$ iff $\{v_2(x)\} = 0$.
  For the last statement, observe that:
  \begin{align}
    v_1(x - y) & = \Int{v_1(x)} - \Int{v_1(y)} + \{v_1(x)\} -
                 \{v_1(y)\} \label{eq:8}\\
    v_2(x - y) & = \Int{v_2(x)} - \Int{v_2(y)} + \{v_2(x)\} -
                 \{v_2(y)\} \nonumber
  \end{align}
  Let $d := \Int{v_1(x)} - \Int{v_1(y)} = \Int{v_2(x)} - \Int{v_2(y)}$.
  Since $-M \le v_1(x), v_1(y) \le M$, we have $- 2M \le v_1(x - y) \le
  2M$. Therefore $-2M \le d \le 2M$. In the definition of
  $v_1 \sim_M v_2$, we have $v_1(x - y) \leqlt c$ iff
  $v_2(x - y) \leqlt c$ for all $c\in\mathbb{Z}$ with $|c| \le (n+1) M$, where
  $n\geq1$ is the number of clocks. 
  Since $-2M \le d \le 2M$, we deduce $v_1(x - y) \le d$ iff
  $v_2(x - y) \le d$. Using (\ref{eq:8}), we deduce
  $\{v_1(x)\} \le \{v_1(y)\}$ iff $v_1(x - y) \le d$ iff
  $v_2(x-y) \le d$ iff $\{v_2(x)\} \le \{v_2(y)\}$.
\end{proof}

Lemma~\ref{lem:sim-M-to-region} considers clocks within $-M$ and $+M$. What about clocks
above $M$? Directly from $v_1 \sim_M v_2$, we have $M < v_1(x)$ iff
$M < v_2(x)$, and moreover diagonal constraints up to $M$ are already
preserved by $\sim_M$. Therefore, together with
Lemma~\ref{lem:sim-M-to-region}, $v_1 \sim_M v_2$ implies
$v_1 \eqv_M v_2$ when restricted to clocks greater than $-M$. We
cannot say the same for clocks lesser than $-M$, in particular we
may have $v_1(x) = -nM-1$ and $v_2(x) = -nM - 2$. However, as shown in
the lemma below, we can choose suitable values for clocks lesser than
$-M$ to get a $\eqv_M$-equivalent valuation from a $\sim_M$-equivalent
one.

\begin{restatable}{lemma}{simtoeqv}
  \label{lem:sim-to-eqv}
  Suppose $v_1 \sim_M v_2$, and let $L = \{ x \mid -M \le
  v_1(x)\}$. There is a valuation $v'_2$ such that
  $v'_2 \da_{L} = v_2 \da_{L}$ and $v_1 \eqv_M v'_2$.
\end{restatable}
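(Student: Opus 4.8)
The plan is to leave $v_2$ unchanged on the clocks of $L$ (which immediately gives $v'_2\da_L = v_2\da_L$) and to redefine it on the remaining clocks --- all of which $v_1$ sends strictly below $-M$, so they are future clocks --- in such a way that $v'_2$ copies the ``extended region'' of $v_1$ there. Concretely, for $x\notin L$: if $v_1(x)=-\infty$ put $v'_2(x)=-\infty$; otherwise set $\Int{v'_2(x)}=\Int{v_1(x)}$ and pick the fractional part $\{v'_2(x)\}$ as explained below. Once $\Int{v'_2(x)}=\Int{v_1(x)}$ and ``$\{v'_2(x)\}=0$ iff $\{v_1(x)\}=0$'' hold, $v'_2(x)$ and $v_1(x)$ sit in the same unit region (and still below $-M$), so they satisfy the same atomic constraints $x\leqlt c$; for $x\in L$, item~1 of $\eqv_M$ follows from $v_1(x)\sim_{nM}v_2(x)$ since $nM\ge M$ and $v_1(x)\ge -M$. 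So item~1 is easy.

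For the fractional parts, put $A=\{x\mid -M\le v_1(x)\le M\}$ and $B=\{x\mid -\infty<v_1(x)<-M\}$; only clocks of $A\cup B$ matter for item~3. By Lemma~\ref{lem:sim-M-to-region}, on $A$ the fractional values of $v_2$ realise exactly the weak ordering of those of $v_1$, and ``$\{v_1(x)\}=0$ iff $\{v_2(x)\}=0$''. I would list the distinct fractional parts of $v_1$ over $A\cup B$ as $0\le f_1<\cdots<f_k<1$: the indices realised by a clock of $A$ already have a pinned $v_2$-value, these pinned values are strictly increasing, and the smallest one is $0$ precisely when $f_1=0$. I then choose reals for the unpinned indices so that the full list $g_1<\cdots<g_k$ is strictly increasing in $[0,1)$ with $g_1=0$ exactly when $f_1=0$ --- possible since the pinned values are strictly increasing and hence leave room --- and set $\{v'_2(x)\}=g_i$ whenever $x\in B$ and $\{v_1(x)\}=f_i$. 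Then $v_1$ and $v'_2$ induce the same weak ordering of fractional parts on $A\cup B$ and agree on which are $0$, which, with Lemma~\ref{lem:sim-M-to-region} applied to pairs inside $A$, is item~3.

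The real work is item~2, the diagonals $x-y\leqlt c$ with $|c|\le M$; the cases $c\in\{-\infty,+\infty\}$ merely require $v_1(x)-v_1(y)$ and $v'_2(x)-v'_2(y)$ to be $\pm\infty$ together, which holds since $v_1(z)$ and $v'_2(z)$ always share their finite/$+\infty$/$-\infty$ status. If $x,y\in L$, item~2 is immediate: $v'_2=v_2$ there and $v_1(x)-v_1(y)\sim_{(n+1)M}v_2(x)-v_2(y)$ with $(n+1)M\ge M$. Otherwise say $v_1(x)<-M$. If the other endpoint has $v_1$-value $-\infty$, $+\infty$, or a finite value $>M$, I would check that the difference is $\pm\infty$ for both valuations, or has absolute value $>M$ for both (using $v'_2(x)<-M$ and that $v_2$ keeps ``$>M$'' and ``$=\pm\infty$'' on $L$), so the bounded constraint is vacuous for both. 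The one substantial case is $y\in A$ or $y\in B$: there $v_1(x)-v_1(y)=d+\phi$ and $v'_2(x)-v'_2(y)=d+\phi'$ with the \emph{same} integer $d=\Int{v_1(x)}-\Int{v_1(y)}$ (by the choice of $\Int{v'_2(x)}$ and of $\Int{v'_2(y)}$: Lemma~\ref{lem:sim-M-to-region} when $y\in A$, the construction when $y\in B$) and fractional slacks $\phi,\phi'\in(-1,1)$ of the same sign, by item~3, which we have just shown; an elementary check on the sign of $\phi$ then shows that $d+\phi\leqlt c$ iff $d+\phi'\leqlt c$ for every integer $c$, and $y-x$ is symmetric.

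I expect this last point to be the only genuine obstacle: after rewriting $v_2(x)$ into $v'_2(x)$ for $x$ below $-M$ while freezing the $L$-part of $v_2$, the bounded diagonals through $x$ must still be respected. It works because all $\eqv_M$ records about such a diagonal is its integer part and the sign of its fractional slack, and both are forced to agree by the integer-part choice for $v'_2(x)$ together with item~3. (The safety hypothesis on $\Aa$ plays no role in this lemma; it enters later, when one iterates a cycle that releases every future clock.)
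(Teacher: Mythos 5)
Your proof is correct and follows essentially the same route as the paper's: freeze $v_2$ on $L$, copy the integer parts of $v_1$ (and the value $-\infty$) onto the clocks below $-M$, interleave the remaining fractional parts so as to preserve the weak order of $\{v_1(\cdot)\}$, and then verify the three items of $\eqv_M$ by the same case split on where $v_1(x),v_1(y)$ lie relative to $-M$ and $M$. The only differences are presentational (you spell out the interleaving and the $d+\phi$ sign argument that the paper leaves implicit), and your closing remark that safety is not needed here is accurate.
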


\begin{proof} We will construct a valuation $v'_2$ as required. For
  all clocks $x \in L$, the lemma imposes $v'_2(x) = v_2(x)$. The set
  $L$ includes all the history clocks. For $x\in X_{F}$ with $v_{1}(x)=-\infty$ 
  we set $v'_{2}(x)=-\infty$.
  Let $\overline{L}=\{x\mid -\infty<v_{1}(x)<-M\} \subseteq X \setminus L$ be the future
  clocks for which we still need to find a value in $v'_2$.  For all $x\in\overline{L}$,
  we set $\Int{v'_2(x)} = \Int{v_1(x)}$ and if $\{v_1(x)\}=0$, set $\{v'_2(x)\}=0$.
  It remains to pick fractional parts for clocks in $\overline{L}$ such that $0<\{v_1(x)\}<1$.

  Let $B= \{ x \mid -\infty< v_1(x) \le M \}$. 
  From Lemma~\ref{lem:sim-M-to-region}, for
  $x, y \in B \cap L$, we have $\{v_1(x)\} \le \{v_1(y)\}$ iff
  $\{v_2(x)\} \le \{v_2(y)\}$, iff $\{v'_2(x)\} \le \{v'_2(y)\}$ (as
  $v_2$ and $v'_2$ coincide on $B \cap L$). Moreover, for clocks in
  $\overline{L}$ with $\{v_1(x)\} = 0$, we have already chosen
  $\{v'_2(x)\} = 0$.  We can choose fractional values for the other clocks in
  $\overline{L}$ so that we get $\{v'_2(x)\} \le \{v'_2(y)\}$ iff
  $\{v_1(x)\} \le \{v_1(y)\}$ for all $x, y \in B$.
  We prove that $v_1 \eqv_M v'_2$.
  \begin{itemize}
    \item Let ${\leqlt}\in\{{<},{\leq}\}$ and $c\in\{-\infty,+\infty\}$ or 
    $c\in\mathbb{Z}$ with $|c|\leq M$. Since $v_1 \sim_M v_2$, we have $v_1(x)\leqlt c$ iff
    $v_2(x)\leqlt c$. By construction of $v'_2$, we have $v'_{2}(x)=v_{2}(x)$ for $x\in 
    L$ and for $x\in X\setminus L$ we have $v_1(x)\not\leqlt c$ and $v'_2(x)\not\leqlt c$.
    Therefore, $v_1(x)\leqlt c$ iff $v'_2(x)\leqlt c$ for all $x\in X$.
    
    Now, let $c\in\mathbb{Z}$ with $c<-M$.  By construction of $v'_2$, for all
    $x\in X\setminus L$ we have $v_1(x)\leqlt c$ iff $v'_2(x)\leqlt c$.
    Also, for all $x\in{L}$ we have $v_1(x)\not\leqlt c$ and $v'_2(x)\not\leqlt c$.
    Therefore, item 1 of Definition~\ref{def:eqv_M} is satisfied.

    \item  Item 3 of Definition~\ref{def:eqv_M} follows directly by construction of $v'_{2}$.

    \item Let ${\leqlt}\in\{{<},{\leq}\}$ and $c\in\{-\infty,+\infty\}$ or 
    $c\in\mathbb{Z}$ with $|c|\leq M$. Let $x,y\in X$.
     
    If $v_{1}(x)\in\{-\infty,+\infty\}$ or $v_{1}(y)\in\{-\infty,+\infty\}$ then 
    $v_{1}(x)-v_{1}(y)=v'_{2}(x)-v'_{2}(y)\in\{-\infty,+\infty\}$ and we deduce that
    $v_{1}\models x-y\leqlt c$ iff $v'_{2}\models x-y\leqlt c$. We assume below that 
    $v_{1}(x),v_{1}(y)\in\mathbb{R}$. We have
    \begin{align*}
      v_1(x - y) & = \Int{v_1(x)} - \Int{v_1(y)} + \{v_1(x)\} - \{v_1(y)\} \\
      v'_2(x - y) & = \Int{v'_2(x)} - \Int{v'_2(y)} + \{v'_2(x)\} - \{v'_2(y)\}
    \end{align*}
    
    Since $v_1 \sim_M v_2$ and $v'_2 \da_{L} = v_2 \da_{L}$, we get immediately
    $v_{1}\models x-y\leqlt c$ iff $v'_{2}\models x-y\leqlt c$ for all $x,y\in L$.
    When both $x, y \in B$, we get this condition from items $1,3$ of $\eqv_M$. 
    
    The two remaining cases is when $-\infty<v_1(x)<-M$ and $M<v_1(y)<+\infty$, 
    or when $-\infty<v_1(y)<-M$ and $M<v_1(x)<+\infty$.
    In the first case, $-\infty<v'_2(x)<-M$ by construction and 
    $M<v_2(y)=v'_{2}(y)<+\infty$ since $v_{1}\sim_{M} v_{2}$.
    Therefore, $-\infty<v_1(x)-v_{1}(y)<-2M$ and $-\infty<v'_2(x)-v'_{2}(y)<-2M$.
    We deduce that $v_{1}\models x-y\leqlt c$ iff $v'_{2}\models x-y\leqlt c$.
    The proof is similar in the second case.
    \qedhere
  \end{itemize}
\end{proof}

Finally, we show that if we have a run between $\sim_M$ equivalent
valuations, we can extract a run between $\eqv_M$ equivalent
valuations, simply by changing the last released values of future
clocks.  Suppose there is a run $\rho$ from configuration $(q, v_1)$
to configuration $(q, v_k)$ such that $v_1 \sim_M v_k$, and all future
clocks are released in $\rho$.  By Lemma~\ref{lem:sim-to-eqv}, there
is a $v'_k$ satisfying $v_1 \eqv_M v'_k$ that differs from $v_k$ only in the clocks that
are less than $-M$.  In order to reach $v'_k$ from $v_1$
using the same sequence of transitions as in $\rho$, it is enough to
choose a suitable shifted value during the last release of the clocks
that were modified.  This gives a new run $\rho'$.  The non-trivial
part is to show that $\rho'$ is indeed a run, that is: all guards are
satisfied by the new values.  We depict this situation in
Figure~\ref{fig:diagonal-after-changing-vk}.  The modified clocks are
those that are less than $-M$ in $v_k$.  Clock $x$ is one such. The black dot
represents its value in $v_k$, and the blue dot is its value in
$v'_k$. Its new value is still $<-M$. In the run $\rho'$, clock $x$ is
released to a suitably shifted value at its last release point. Notice
that from this last release point till $k$, clock $x$ stays below $-M$
in both $\rho$ and $\rho'$. Therefore, all non-diagonal constraints
$x \leqlt c$ that were originally satisfied in $\rho$ continue to get
satisfied in $\rho'$. Showing that all diagonal constraints are still
satisfied is not as easy. Here, we make use of the safety assumption.
Let us look at a diagonal constraint $x - y$, and a situation as in
Figure~\ref{fig:diagonal-after-changing-vk} where the last release of
$y$ happens after the last release of $x$. For simplicity, let us
assume there is no release of $y$ in between these two points.

\begin{figure}[tbh]
  \centering
  \begin{tikzpicture}
    \node at (1,0) {\footnotesize $v_1$}; \node at (8.2,0)
    {\footnotesize $v_k$, \textcolor{blue}{$v'_k$}};
    \draw [thick, gray] (1.3,0) to (7.7,0); \draw [thin] (3, 0.1) --
    (3, -0.1); \draw [thin] (6, 0.1) -- (6, -0.1); \node at (3, 0.4)
    {\small last release of $x$}; \node at (6, 0.4) {\small last release of $y$};

    \draw [thin, gray] (4,-0.5) -- (4, -2.3); \draw (3.98, -0.5) --
    (4.02, -0.5); \draw (3.98, -1) -- (4.02, -1); \node [left] at
    (3.98,-0.5) {\small $0$}; \node [left] at (3.98,-1) {\small $-M$};
    \fill (4, -0.8) circle (0.8pt); \fill (4, -1.8) circle (0.8pt);
    \node [right] at (4,-0.8) {\small $y$}; \node [right] at (4,-1.8)
    {\small $x$}; \fill [blue] (4, -2.1) circle (0.8pt); \node [right]
    at (4, -2.1) {\small \textcolor{blue}{$x$}};

    \begin{scope}[xshift=2cm]
      \draw [thin, gray] (4,-0.5) -- (4, -2.3); \draw (3.98, -0.5) --
      (4.02, -0.5); \draw (3.98, -1) -- (4.02, -1); \node [left] at
      (3.98,-0.5) {\small $0$}; \node [left] at (3.98,-1) {\small $-M$};
      \fill (4, -0.5) circle (0.8pt); \fill (4, -1.5) circle (0.8pt);
      \node [right] at (4,-0.5) {\small $y$}; \node [right] at (4,-1.5)
      {\small $x$};

      \fill [blue] (4, -1.8) circle (0.8pt); \node [right] at (4,
      -1.8) {\small \textcolor{blue}{$x$}};
    \end{scope}

    \begin{scope}[xshift=4cm]
      \draw [thin, gray] (4,-0.5) -- (4, -2.3); \draw (3.98, -0.5) --
      (4.02, -0.5); \draw (3.98, -1) -- (4.02, -1); \node [left] at
      (3.98,-0.5) {\small $0$}; \node [left] at (3.98,-1) {\small $-M$};
      \fill (4, -1.2) circle (0.8pt);
      \node [right] at (4,-1.2) {\small $x$};
      
      \fill [blue] (4, -1.5) circle (0.8pt); \node [right] at (4, -1.5)
      {\small \textcolor{blue}{$x$}};
    \end{scope}
  \end{tikzpicture}
  \caption{An illustration for the proof of Lemma~\ref{lem:sim-to-eqv-2}}
  \label{fig:diagonal-after-changing-vk}
\end{figure}
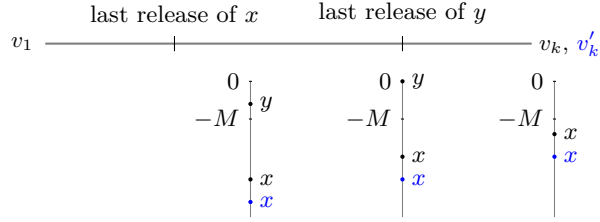

We divide the run into three parts: the left part is the one before
the last release of $x$, the middle part is the one between the two
release points, and the right part is the rest of the run, to the
right of the release of $y$. In the left part, the values of $x$ and
$y$ are the same in both $\rho$ and $\rho'$, and so the diagonal
constraints continue to get satisfied. In the right part, the value of
$x - y$ equals $v'_k(x) - v'_k(y)$. Using $v'_k \eqv_M v_1$ and
$v_1 \sim_M v_k$, we can argue that $v'_k$ and $v_k$ satisfy the same
diagonal constraints up to constant $M$. This takes care of the right
part. The middle part is the trickiest. In this part, we know that $x$
remains less than $-M$ in both $\rho$ and $\rho'$. The value of $y$ is
the same in both $\rho$ and $\rho'$. But what about the difference
$x - y$? Can it be, say $-1$ in $\rho$ and $-2$ in $\rho'$?  Here is
where we use the safety assumption to infer the value of $x -
y$. Before $y$ is released, its value should be $0$. At that point,
$x$ is still less than $-M$ (in both the runs). Therefore $x - y < -M$
just before $y$ is last released. As the differences do not change, we
see that $x - y < - M$ in the middle part, for both runs. Hence the
diagonal constraints continue to hold in $\rho'$. We formalize these
observations in Lemma~\ref{lem:sim-to-eqv-2}, where we exhaustively
argue about all the different cases.

\begin{restatable}{lemma}{simtoeqvb}
  \label{lem:sim-to-eqv-2}
  Consider a safe GTA $\Aa$. Let $\rho: (q_1, v_1)
  \xra{\delta_1, t_1} (q_2, v_2) \xra{\delta_2, t_2} \cdots (q_k, v_k)$ 
  be a run of $\Aa$ such that $v_1 \sim_M v_k$ and for every future clock
  $x$, either $x$ is released in the transition sequence $t_1 \dots t_{k-1}$ or 
  $v_1(x) = -\infty$.  Let
  $L = \{x \mid -M \le v_1(x)\}$. Let $v'_k$ be a valuation such that
  $v'_k \da_{L} = v_k \da_L$ and $v_1 \eqv_M v'_k$.
  Then, there exists a run of the form
  $\rho': (q_1, v_1) = (q_1, v'_1) \xra{\delta_1, t_1} (q_2, v'_2)
  \xra{\delta_2, t_2} \cdots (q_k, v'_k)$ in $\Aa$, leading to
  $(q_k, v'_k)$ from $(q_1,v_1)$.
\end{restatable}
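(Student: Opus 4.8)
The plan is to obtain $\rho'$ by replaying $\rho$ verbatim --- the same delays $\delta_i$, the same transitions $t_i$, the same guards and resets --- and to change only the non-deterministic release values, and only at the \emph{last} release of each clock that has to move.

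First I would pin down which clocks must change. Let $\overline{L}=\{x\mid -\infty<v_1(x)<-M\}$; these are the only clocks on which $v'_k$ can differ from $v_k$, since for $x\in L$ we have $v'_k(x)=v_k(x)$ by hypothesis, and for a future clock $x$ with $v_1(x)=-\infty$ both $v_k(x)$ and $v'_k(x)$ equal $-\infty$ (by $v_1\sim_M v_k$ and by $v_1\eqv_M v'_k$ respectively). By assumption each $x\in\overline{L}$ is released along $t_1\cdots t_{k-1}$; let $j_x$ be the index of its \emph{last} release and $a_x$ the value it receives there in $\rho$, and set $\theta_x:=v'_k(x)-v_k(x)$. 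I would define $\rho'$ to copy $\rho$ step by step, except that at the last release of each $x\in\overline{L}$ the clock is released to $a_x+\theta_x$ instead of $a_x$. A routine induction over the ``micro-steps'' of the run (the delays and the individual operations inside each program) then gives the invariant: at every point $\rho'$ and $\rho$ agree on all clocks outside $\overline{L}$, while for $x\in\overline{L}$ they agree strictly before the last release of $x$ and differ by exactly $\theta_x$ from that point on. Hence $\rho'$, if it is a run at all, ends in $(q_k,v'_k)$. I would also record two facts to be used repeatedly: from $v_1\sim_M v_k$ and $v_1\eqv_M v'_k$, for $x\in\overline{L}$ we get $v_k(x),v'_k(x)<-M$ (and even $<-nM$ when $v_1(x)<-nM$); and since a future clock only increases after its last release, at every micro-step at or after the last release of $x\in\overline{L}$ both runs keep the value of $x$ below $-M$ (respectively below $-nM$).

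What remains is to check that $\rho'$ is legal, and the only non-obvious part is guard satisfaction. The chosen release values are fine, since $a_x+\theta_x$ is finite and at most $v'_k(x)<-M<0$; and each delay transition is fine, since an unshifted future clock behaves exactly as in $\rho$ and a shifted one stays $\le v'_k(x)<0$ after any delay. Consider a guard $p-q\leqlt c$ with $p,q\in X\cup\{0\}$ and $|c|\le M$. If neither side is shifted where the guard is read, there is nothing to prove. The non-diagonal cases are easy: an upper bound $p\leqlt c$ survives because a shifted clock is $<-M\le c$, and a lower bound $-q\leqlt c$ forces $q\ge -c\ge -M$ so that $q$ cannot be shifted there at all. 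A diagonal guard whose other side is $0$ or a history clock reduces to these same remarks, since a shifted future clock is so far below $-M$ that the only way the guard could already hold in $\rho$ is the trivial case or an outcome that persists in $\rho'$.

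The hard part will be a diagonal guard $p-q\leqlt c$ between two \emph{future} clocks: by the safety assumption this forces $p,q\in X_D$, so both $p$ and $q$ are tested to be $0$ or $-\infty$ immediately before each of their releases, and I expect this is precisely the leverage the safety hypothesis is meant to provide. Assume $p$ is shifted at the evaluation point $s$ (the other orientation is symmetric), so $s$ lies after $p$'s last release and $p<-M$ there in both runs. If $q$ is also past its last release at $s$, then from $s$ onward neither clock is touched, so $p-q$ is frozen to $v_k(p)-v_k(q)$ along $\rho$ and to $v'_k(p)-v'_k(q)$ along $\rho'$; since $v_1\sim_M v_k$ preserves diagonals up to $(n+1)M$ and $v_1\eqv_M v'_k$ preserves them up to $M$, both values satisfy the same constraints with $|c|\le M$ and the guard transfers. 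Otherwise $q$ is released after $s$; take the first such release, at index $j>s$. By safety $q$ equals $0$ or $-\infty$ just before $j$, and $-\infty$ is impossible here because it would already falsify $p-q\leqlt c$ at $s$ in $\rho$. So $q=0$ just before $j$; neither $p$ nor $q$ moves between $s$ and $j$, hence $p-q$ is constant on that stretch and equals the value of $p$ just before $j$, which is $<-M$ in both runs by monotonicity --- so $p-q<-M\le c$ at $s$ in $\rho$ and in $\rho'$ alike. (The opposite orientation $q-p\leqlt c$ would make $q-p>M\ge c$ at $s$, contradicting that $\rho$ is a run, so it never occurs.) To conclude, I would turn this into a fully exhaustive case split, handling the bookkeeping when several of the relevant releases coincide or sit inside the same program. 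I expect this case analysis --- and in particular separating the ``frozen'' regime from the ``pinned by safety'' regime --- to be the only real difficulty.
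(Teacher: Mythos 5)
Your proposal is correct and follows essentially the same route as the paper's proof: shift only the last-release values of the clocks in $\overline{L}$, observe that non-diagonal and history/future diagonal guards transfer because shifted clocks stay below $-M$ in both runs, handle the "both frozen" future--future diagonal via $\sim_M$ (up to $(n+1)M$) and $\eqv_M$, and use safety to pin the not-yet-released clock to $0$ in the remaining case. The only quibble is your claim that the $-\infty$ branch of the safety check "is impossible": when $c=+\infty$ it can occur, but it is harmless since both runs then agree that the difference is $+\infty$, exactly as in the paper's up-front treatment of infinite values.
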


\begin{proof}
  First, the existence of $v'_k$ comes from Lemma~\ref{lem:sim-to-eqv}.
  Notice that $\rho'$ and $\rho$ have the same time elapse values.
  Let $\overline{L} = \{ x \mid -\infty < v_{1}(x),v_{k}(x) < -M \}$.
  These are the clocks where $v'_k$ may differ from
  $v_k$.  To make $\rho'$ end with $v'_k$, it is enough to correctly
  choose the last released values for clocks in $\overline{L}$ along
  the run. 

  Without loss of generality, assume that each $t_i$ is an atomic
  program, consisting of either a guard or a change. If the transition
  is a ``change'', we can assume the guard to be $\top$. For each future clock $x\in 
  X_{F}$ which is released in the transition sequence $t_{1}\cdots t_{k-1}$,
  let $j_x$ be the last index such that $x$ is
  released in $t_{j_x}$. For each clock $x\in\overline{L}$,
  we will now give a different released value at $j_x$: set
  $v'_{j_x+1}(x) = v'_k(x) - \sum_{i=j_x+1}^{m-1} \delta_i$. 
  This construction fixes the values of all clocks along
  the run $\rho'$. However, we need to show that the new valuations
  still satisfy the guards along the way. 
  We will prove that $v_i$ satisfies an atomic constraint iff $v'_i$ does. Let
  ${\leqlt}\in\{{<},{\leq}\}$ and 
  $c \in \{-\infty, +\infty\} \cup \{ d\in\mathbb{Z} \mid |d| \le M\}$.

  Consider a non-diagonal constraint $x \leqlt c$ in $t_i$. If
  $v'_i(x) = v_i(x)$, we are done. Otherwise, $x \in \overline{L}$,
  $-\infty < v_k(x), v'_k(x) < -M$ and $j_x < i$.
  As $x$ is not released in $t_i,\ldots,t_{k-1}$, we get
  $-\infty < v_i(x)\leq v_k(x) < -M$ and $-\infty < v'_i(x)\leq v'_k(x) < -M$.
  This shows $v_i\models x\leqlt c$ iff $v'_i\models x\leqlt c$.

  Consider a diagonal constraint of the form $x-y \leqlt c$. 
  If $v_{i}(x)\in\{-\infty,+\infty\}$ then $v'_{i}(x)=v_{i}(x)$ and 
  if $v_{i}(y)\in\{-\infty,+\infty\}$ then $v'_{i}(y)=v_{i}(y)$. Hence, if
  $v_{i}(x)\in\{-\infty,+\infty\}$ or $v_{i}(y)\in\{-\infty,+\infty\}$) then 
  $v_{i}(x)-v_{i}(y)=v'_{i}(x)-v'_{i}(y)\in\{-\infty,+\infty\}$. We deduce that in this 
  case $v_{i}\models x-y\leqlt c$ iff $v'_{i}\models x-y\leqlt c$. 
  We assume below that $v_{i}(x),v_{i}(y),v'_{i}(x),v'_{i}(y)\in\mathbb{R}$. 
  
  If $v_{i}(x)=v'_{i}(x)$ and $v_{i}(y)=v'_{i}(y)$ then clearly
  $v_{i}\models x-y\leqlt c$ iff $v'_{i}\models x-y\leqlt c$.
  Otherwise, either $x\in\overline{L}$ and $j_{x}<i$, or 
  $y\in\overline{L}$ and $j_{y}<i$, or both.
  \begin{itemize}
    \item Suppose $x,y\in X_{F}$ and $j_{x},j_{y}<i$ 
    (right part of Figure~\ref{fig:diagonal-after-changing-vk}). Then,
    $v_i(x-y) = v_k(x-y)$ and $v'_i(x-y) = v'_k(x-y)$. 
    Observe that $v_k\models x-y\leqlt c$ iff $v_1\models x-y\leqlt c$
    (since $v_1 \sim_M v_k$), iff $v'_k\models x-y\leqlt c$ (since
    $v_1 \eqv_M v'_k$).  Therefore, $v_i\models x-y\leqlt c$ iff
    $v'_i\models x-y\leqlt c$.    
  
    \item Suppose $x\in\overline{L}$, $j_{x}<i$ and $y\in X_{H}$. 
    Then, $-\infty<v_{i}(x)\leq v_{k}(x)<-M$, and
    $-\infty<v'_{i}(x)\leq v'_{k}(x)<-M$, and 
    $0\leq v_{i}(y)=v'_{i}(y)<+\infty$. We deduce that $v_i\models x-y\leqlt c$ iff
    $v'_i\models x-y\leqlt c$.
    
    \item The case $y\in\overline{L}$ with $j_{y}<i$ and $x\in X_{H}$ is proved similarly.
    
    \item Suppose $x,y\in X_{F}$ and $j_{x}<i\leq j_{y}$.
    Let $i\leq j\leq k-1$ be the first release of $y$ in the sequence $t_{i}\cdots t_{k-1}$. 
    Notice that $v_i(x-y) = v_j(x-y)$ and $v'_i(x-y) = v'_j(x-y)$. 
    So, it is enough to show $v_j\models x-y \leqlt c$ iff $v'_j\models x-y \leqlt c$. 
    Since $\Aa$ is a safe GTA we have $v_j(y)\in\{-\infty,0\}$. 
    Since $v_{i}(y)\neq-\infty$ we get $v_{j}(y)=0$. 
    Since $j\leq j_{y}$ we have $v'_j(y)=v_j(y)=0$.
    Therefore, $v_j\models x-y \leqlt c$ iff $v_j\models x \leqlt c$ iff
    $v'_j\models x \leqlt c$ (non-diagonal case above) iff $v'_j\models x-y \leqlt c$.
  
    \item The case $x,y\in X_{F}$ with $j_{y}<i\leq j_{x}$ is proved similarly.  
    \qedhere
  \end{itemize}
\end{proof}

We lift this argument to the level of zones, to obtain one of the
main results of this paper.
Before showing the result in Lemma~\ref{prop:gta-liveness}, we first prove an intermediate result.

\begin{lemma}\label{lem:all-infty}
  Let $Z$ be a non-empty zone and let $R\subseteq X_{F}$ be a set of 
  future clocks $x$ such that $v_{x}(x)=-\infty$ for some valuation $v_{x}\in Z$.
  There are valuations $v\in Z$ such that $v(x)=-\infty$ for all $x\in R$.
\end{lemma}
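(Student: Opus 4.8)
The plan is to argue directly on the zone's constraint representation, using the fact that a GTA zone is a finite conjunction of difference constraints $y - x \leqlt c$ with $x,y \in X \cup \{0\}$ and $c \in \overline{\mathbb{Z}}$. The key observation is this: for a future clock $x$, the value $-\infty$ is achievable in $Z$ precisely when $Z$ imposes no finite lower bound on $x$, i.e. every constraint of the form $x - z \geq c'$ (equivalently $z - x \leqlt c$ with $c$ finite) that $Z$ contains is in fact vacuous — it must already be implied by $c = +\infty$, since otherwise $v_x(x) = -\infty$ would violate it (recall $-\infty + \beta = -\infty$ for $\beta \neq +\infty$, so $z - x = +\infty$ whenever $x = -\infty$ and $z$ is finite or $+\infty$). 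More carefully: if $v_x(x) = -\infty$ then for every other clock $z$ with $v_x(z) \neq +\infty$ we have $v_x(z) - v_x(x) = +\infty$, so the only constraints of the form $z - x \leqlt c$ satisfiable at $v_x$ with $c < +\infty$ are those where $Z$ also forces $z = +\infty$.

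Given this, I would proceed as follows. Let $R = \{x_1, \dots, x_r\}$. Pick any valuation $v_0 \in Z$ (non-empty by hypothesis). Define $v$ by setting $v(x_i) = -\infty$ for all $x_i \in R$, and $v(z) = v_0(z)$ for all other clocks $z$ (including $z = 0$). I claim $v \in Z$. Take any constraint $y - x \leqlt c$ of $Z$. If neither $x$ nor $y$ is in $R$, it is satisfied because $v$ agrees with $v_0$ on these clocks. If $x \in R$ (so $v(x) = -\infty$): then $v(y) - v(x) = v(y) - (-\infty)$; if $v(y) \neq -\infty$ this is $+\infty$, which satisfies $+\infty \leqlt c$ only when $c = +\infty$ — so I must show $c = +\infty$ here. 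This is where I use the witness $v_{x}$ for $x$: since $v_{x}(x) = -\infty$ and $v_{x} \models y - x \leqlt c$, we have $v_{x}(y) - v_{x}(x) \leqlt c$; if $v_x(y) \neq -\infty$ this difference is $+\infty$, forcing $c = +\infty$; if $v_x(y) = -\infty$ as well, then $y \in X_F$ and by the same reasoning applied to $Z$'s constraints on $y$, hmm — actually here I need to be a bit careful, so let me instead handle the subtlety by choosing $v$ more cleverly. If $y \in R$ too, then $v(y) = -\infty = v(x)$, so $v(y) - v(x) = (-\infty) - (-\infty) = -(-\infty) + (-\infty)$… this is the genuinely delicate case.

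To sidestep the $(-\infty) - (-\infty)$ issue cleanly, I would instead take all the witnesses simultaneously: for each $x_i \in R$ we have $v_{x_i} \in Z$ with $v_{x_i}(x_i) = -\infty$. I claim the valuation $v$ defined by $v(x_i) = -\infty$ for $x_i \in R$ and $v(z) = v_0(z)$ otherwise lies in $Z$, and to verify a constraint $y - x \leqlt c$ with $x \in R$ and $y \in R$ both, I observe that such a constraint, being satisfied by $v_{x}$ which has $v_{x}(x) = -\infty$, must have $v_x(y) - (-\infty) \leqlt c$, and since by the extended arithmetic $v_x(y) + (+\infty) = +\infty$ whenever $v_x(y) \neq -\infty$, either $c = +\infty$ (and we are done, as $v(y) - v(x) \leqlt +\infty$ trivially by the convention that $-(+\infty)=-\infty$ makes any value $\le +\infty$) or $v_x(y) = -\infty$, in which case $y \in X_F$ too and the constraint $y - x \leqlt c$ with both sides undefined: but then $v_x(y) - v_x(x) = (-\infty) - (-\infty)$ must be evaluated, and since $Z$ is satisfiable at $v_x$ this evaluates to something $\leqlt c$; the point is that $v$ assigns $v(y) = v(x) = -\infty$ exactly as $v_x$ does for these two, so $v(y) - v(x) = v_x(y) - v_x(x) \leqlt c$ holds verbatim. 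Thus in every case $v \models y - x \leqlt c$, so $v \in Z$. The main obstacle, as flagged, is bookkeeping the extended-arithmetic conventions around $\pm\infty$ — particularly the indeterminate-looking $(-\infty)-(-\infty)$ — but it dissolves once one notes that $v$ can be taken to agree with a genuine member of $Z$ on every pair of clocks that both end up $-\infty$, so no new arithmetic is ever performed; and the case where exactly one of $x, y$ is in $R$ forces the relevant constant to be $+\infty$ by examining the corresponding witness.
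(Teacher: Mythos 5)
Your proof is correct, but it takes a genuinely different route from the paper. The paper works at the level of the canonical distance graph of $Z$: it observes that each witness $v_x$ forces the lower-bound edge on $x$ to be $(\leq,+\infty)$, then tightens the upper bound on each $x\in R$ to $(\leq,-\infty)$ and checks that the resulting graph is in standard form with no negative cycle, so its solution set is non-empty and contained in $Z$. You instead bypass the DBM machinery entirely: you exhibit the valuation explicitly (an arbitrary $v_0\in Z$ with the clocks of $R$ overwritten by $-\infty$) and verify each conjunct of $Z$ directly, using the witnesses $v_x$ to show that every constraint bounding $x-y$ from below with $x\in R$ must already be the vacuous $y-x\leq+\infty$. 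Both arguments are sound; the paper's is shorter given the imported distance-graph theory, yours is more elementary and self-contained. One remark that would have simplified your case analysis considerably: the $(-\infty)-(-\infty)$ case you flag as ``genuinely delicate'' is not indeterminate under the paper's conventions, since $-(-\infty)=+\infty$ and $(+\infty)+\alpha=+\infty$ for \emph{all} $\alpha\in\overline{\RR}$, so $v(y)-v(x)=+\infty$ whenever $v(x)=-\infty$, regardless of $v(y)$. Consequently, for any constraint $y-x\leqlt c$ of $Z$ with $x\in R$, the witness $v_x$ already yields $+\infty\leqlt c$, forcing $c=+\infty$ and ${\leqlt}={\leq}$; the constraint is then trivially satisfied by every valuation, and the detour through ``$v$ agrees with $v_x$ on the pair $\{x,y\}$'' (which, as you half-notice, is not guaranteed since $v_x(y)$ need not be $-\infty$) becomes unnecessary.
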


\begin{proof}
  Let $\GG$ be the canonical distance graph for the non-empty zone $Z$.
  Since for each $x\in R$ there is a valuation $v_{x}$ in $Z$ such that 
  $v_{x}(x)=-\infty$ we deduce that for each $x\in R$ we have $\GG_{x0}=(\leq,+\infty)$.
  Let $\GG'$ be the distance graph obtained from $\GG$ by setting 
  $\GG'_{0x}=(\leq,-\infty)$ for each $x\in R$.
  The distance graph $\GG'$ is in standard form.
  Moreover, $\GG'$ has no negative cycles: a \emph{simple} cycle in $\GG'$ which is not 
  in $\GG$ is of the form $0\to x\to 0$ with $x\in R$ and its weight is 
  $(\leq,-\infty)+(\leq+\infty)=(\leq,+\infty)$.
  Therefore, $\sem{\GG'}$ is non-empty.
  Any valuation $v\in\sem{\GG'}\subseteq\sem{\GG}=Z$ is such that $v(x)=-\infty$ for all 
  $x\in R$.
\end{proof}

\begin{figure}[tbh]
  \centering
  \scalebox{.9}{
  \begin{tikzpicture}[state/.style={circle, inner sep=2pt, minimum size = 3pt, draw, thick}]
  \begin{scope}[]
    \node (0) at (4,0) {\small $(q,w_1)$}; 
    \node (1) at (6,0) {\small $(q,u_0)$};
    \node [label={[rotate=90, anchor=east]below:$\in$}] (3)  at (4.2,1) { $Z$};
    \node [label={[rotate=90, anchor=east]below:$\in$}] (3)  at (6.2,1) { $Z'$};

    \begin{scope}[thick, ->,>=stealth, auto]
      \draw (0) to node {$\sigma$} (1); 
    \end{scope}
  \end{scope}

  \begin{scope}[dashed,color=red,-,>=stealth, auto]
    \draw (4,0) to node {$\preceq$} (4,-2); 
    \draw (6,0) to node {$\preceq$} (6,-2);
  \end{scope}

  \begin{scope}[yshift=-2cm]
    \node (3) at (6,0) {\small $(q,u'_0)$}; 
    \node (2) at (4,0) {\small $(q,w'_1)$};
    \node [label={[rotate=90, anchor=west]above:{\scriptsize$\ni$}}] (6)  at (4.2,-0.8) { \small $Z'$};

    \begin{scope}[thick, ->,>=stealth, auto]
      \draw (2) to node {$\sigma$} (3); 
    \end{scope}
  \end{scope}

  \begin{scope}[yshift=-5cm]
    \node (3) at (6,0) {\small $(q,u'_0)$}; 
    \node (2) at (4,0) {\small $(q,w'_1)$};
    \node (1) at (2,0) {\small $(q,w_2)$};
    \node [label={[rotate=90, anchor=east]below:{\scriptsize$\in$}}] (6)  at (4.2,0.8) { \small $Z'$};
    \node [label={[rotate=90, anchor=east]below:{\scriptsize$\in$}}] (8)  at (2.2,0.8) { \small $Z$};

    \begin{scope}[thick, ->,>=stealth, auto]
      \draw (2) to node {$\sigma$} (3); 
      \draw (1) to node {$\sigma$} (2);
    \end{scope}
  \end{scope}

  \begin{scope}[dashed,color=red,-,>=stealth, auto]
    \draw (4,-5) to node {$\preceq$} (4,-7); 
    \draw (6,-5) to node {$\preceq$} (6,-7); 
    \draw (2,-5) to node {$\preceq$} (2,-7); 
  \end{scope}

  \begin{scope}[yshift=-7cm]
    \node (3) at (6,0) {\small $(q,u''_0)$}; 
    \node (2) at (4,0) {\small $(q,w''_1)$};
    \node (1) at (2,0) {\small $(q,w'_2)$};
    \node [label={[rotate=90, anchor=west]above:{\scriptsize$\ni$}}] (6)  at (2.2,-0.8) { \small $Z'$};
    
    \begin{scope}[thick, ->,>=stealth, auto]
      \draw (2) to node {$\sigma$} (3); 
      \draw (1) to node {$\sigma$} (2);
    \end{scope}
  \end{scope}

  \begin{scope}[yshift=-10cm]
    \node (3) at (6,0) {\small $(q,u'''_0)$}; 
    \node (1) at (2,0) {\small $(q,w'_2)$};
    \node (0) at (0,0) {\small $(q,w_3)$};
    \node [label={[rotate=90, anchor=east]below:{\scriptsize$\in$}}] (6)  at (2.2,0.8) { \small $Z'$};
    \node [label={[rotate=90, anchor=east]below:{\scriptsize$\in$}}] (8)  at (0.2,0.8) { \small $Z$};

    \begin{scope}[thick, ->,>=stealth, auto]
      \draw (1) to node {$\sigma^2$} (3); 
      \draw (0) to node {$\sigma$} (1);
    \end{scope}
  \end{scope}

  \begin{scope}[dashed,color=red,-,>=stealth, auto]
    \draw (6,-10) to node {$\preceq$} (6,-12); 
    \draw (0,-10) to node {$\preceq$} (0,-12); 
  \end{scope}

  \begin{scope}[yshift=-12cm]
    \node (3) at (6,0) {\small $(q,u'''_0)$}; 
    \node (0) at (0,0) {\small $(q,w'_3)$};
    \node [label={[rotate=90, anchor=west]above:{\scriptsize$\ni$}}] (6)  at (0.2,-0.8) { \small $Z'$};
    
    \begin{scope}[thick, ->,>=stealth, auto]
      \draw (0) to node {$\sigma^3$} (3); 
    \end{scope}		
  \end{scope}

  \node (a) at (-2,-11) {$Z \preceq_{\G(q)} Z'$}; 
  \node (a) at (-2,-6) {$Z \preceq_{\G(q)} Z'$}; 
  \node (a) at (-2,-1) {$Z \preceq_{\G(q)} Z'$}; 
  \node (b) at (-2,-3.5) {Post-property}; 
  \node (b) at (-2,-8.5) {Post-property}; 
\end{tikzpicture}
}
\caption{Construction of the run iterating the sequence $\s = t_1 \dots t_{k-1}$.}
\label{fig:simulation-runs}
\end{figure}
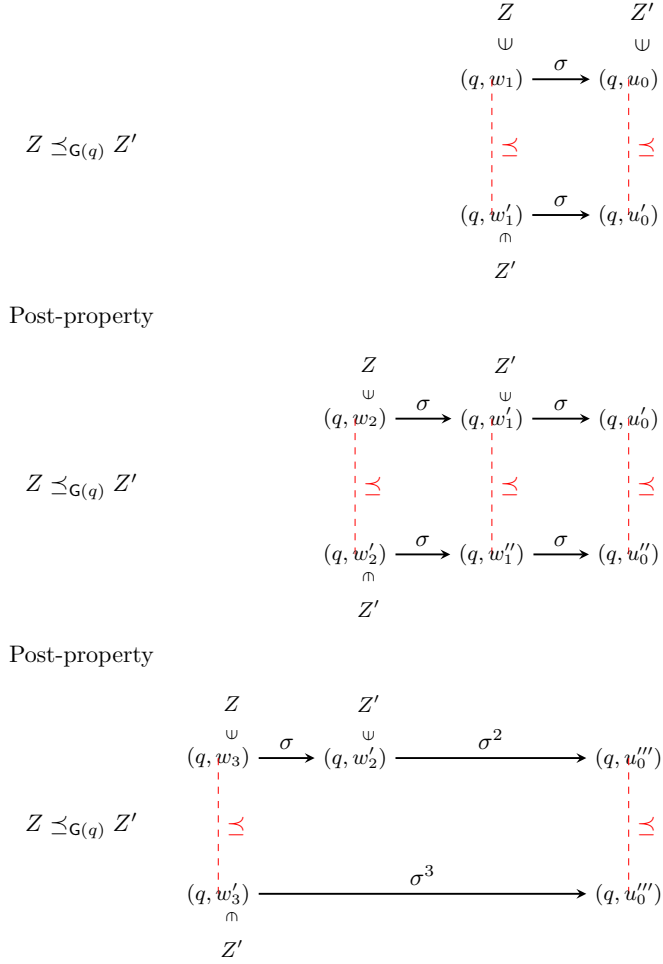

\begin{restatable}{theorem}{propgtaliveness}
  \label{prop:gta-liveness}
  Let $(q, Z) = (q_1, Z_1) \xra{t_1} (q_2, Z_2) \xra{t_2} \cdots
  \xra{t_{k-1}} (q_k, Z_k) = (q, Z')$ be a run in the zone graph such
  that $(q,Z)\preceq(q,Z')$, $(q,Z')\preceq(q,Z)$ and for every future clock $x$, 
  either $x$ is released in the sequence $t_1 \dots t_{k-1}$, 
  or there is a valuation $v_{x}\in Z'$ with $v_{x}(x)=-\infty$.
  Then, there is a valuation $v \in Z$ and an infinite run starting from
  $(q,v)$ over the sequence of transitions $(t_1 \dots t_{k-1})^\omega$.
\end{restatable}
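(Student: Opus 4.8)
The plan is to realize, at the level of concrete configurations, the zig-zag construction depicted in Figure~\ref{fig:simulation-runs}, and then to pump the resulting finite run into an infinite one using the bisimulation $\eqv_M$. Write $\s := t_1\cdots t_{k-1}$, let $R\subseteq X_F$ be the set of future clocks released along $\s$, and let $R' := X_F\setminus R$. First, I dispose of the clocks in $R'$: by hypothesis, for each $x\in R'$ there is a valuation in $Z'$ giving $x$ the value $-\infty$, so Lemma~\ref{lem:all-infty} yields a single $\hat v\in Z'$ with $\hat v(x)=-\infty$ for all $x\in R'$. Two simple facts about $R'$ will be used repeatedly: (a) since no clock of $R'$ is released along $\s$ and a finite run elapses only finitely much time, any concrete run over $\s$ that ends in a valuation with all $R'$-clocks equal to $-\infty$ also starts in such a valuation; and (b) by condition~3 of Definition~\ref{def:simulation}, if $(q,u)\preceq(q,u')$ and $u(x)=-\infty$ then $u'(x)=-\infty$. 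Since $-\infty$ is moreover preserved by delays, these facts let me keep every configuration occurring in the construction inside the set $\{v\mid v(x)=-\infty\text{ for all }x\in R'\}$.

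The core step is to prove, by induction on $\ell\ge 1$, that there is a concrete run $(q,c_0)\xra{\s}(q,c_1)\xra{\s}\cdots\xra{\s}(q,c_\ell)$ with $c_0\in Z$ and $c_m(x)=-\infty$ for every $m$ and every $x\in R'$. For $\ell=1$, I apply the backward (``pre-'') property of the zone edge $(q,Z)\xra{\s}(q,Z')$ to $\hat v\in Z'$, obtaining $c_0\in Z$ with $(q,c_0)\xra{\s}(q,\hat v)$, and set $c_1:=\hat v$ (fact (a) gives the $R'$-condition for $c_0$). For the inductive step, from a run $\rho$ of length $\ell$ starting at $c_0\in Z$: (i) using $(q,Z)\preceq(q,Z')$, pick $c'_0\in Z'$ with $(q,c_0)\preceq(q,c'_0)$; (ii) using that conditions~1--2 of Definition~\ref{def:simulation} extend from single steps to whole runs, transport $\rho$ to a run of the same length starting at $c'_0$, along the same transition sequence and with each configuration $\preceq$-above the corresponding one of $\rho$; (iii) apply the backward property of the zone edge at $c'_0\in Z'$ to obtain $c''_0\in Z$ with $(q,c''_0)\xra{\s}(q,c'_0)$, and prepend this step, getting a run of length $\ell+1$ starting at $c''_0\in Z$. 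Facts (a) and (b) guarantee that all configurations arising in (i)--(iii) keep their $R'$-clocks equal to $-\infty$. I expect this induction to be the main obstacle: it interleaves the \emph{backward} direction of zone successors with the \emph{forward} direction of the simulation while keeping the $R'$-invariant in sync, which is precisely what makes Figure~\ref{fig:simulation-runs} delicate.

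Now fix $N$ strictly larger than the (finite) index of the equivalence $\sim_M$, and take the run $(q,c_0)\xra{\s}\cdots\xra{\s}(q,c_N)$ given by the claim. By the pigeonhole principle there are indices $0\le i<j\le N$ with $c_i\sim_M c_j$. The sub-run $(q,c_i)\xra{\s^{j-i}}(q,c_j)$ is a run over the transition sequence $\s^{j-i}$ in which every future clock is either released (the clocks of $R$, since $j-i\ge 1$) or takes the value $-\infty$ in $c_i$ (the clocks of $R'$). Applying Lemma~\ref{lem:sim-to-eqv-2} to this run --- instantiating its ``$t_1\cdots t_{k-1}$'' by $\s^{j-i}$, its ``$v_1$'' by $c_i$, and its ``$v_k$'' by $c_j$, and recalling that $\Aa$ is safe --- yields a valuation $c'_j$ with $c_i\eqv_M c'_j$ together with a run $(q,c_i)\xra{\s^{j-i}}(q,c'_j)$.

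Finally, since $\eqv_M$ is a time-abstract bisimulation (Lemma~\ref{lem:eqv-is-bisimulation}) and $c_i\eqv_M c'_j$, the run $(q,c_i)\xra{\s^{j-i}}(q,c'_j)$ can be matched starting from $(q,c'_j)$, then from the $\eqv_M$-equivalent endpoint of that matched run, and so on; concatenating these pieces yields an infinite run from $(q,c_i)$ over $(\s^{j-i})^\omega$. Prepending the length-$i$ prefix $(q,c_0)\xra{\s^{i}}(q,c_i)$ of the run from the claim, and observing that $\s^{i}(\s^{j-i})^\omega=\s^\omega=(t_1\cdots t_{k-1})^\omega$, we obtain an infinite run over $(t_1\cdots t_{k-1})^\omega$ from $(q,c_0)$; since $c_0\in Z$, taking $v:=c_0$ completes the proof.
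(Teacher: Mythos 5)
Your proposal is correct and follows essentially the same route as the paper's proof: use Lemma~\ref{lem:all-infty} to fix the non-released future clocks at $-\infty$, build arbitrarily long concrete iterations of $t_1\cdots t_{k-1}$ by alternating the backward (post-) property of zone edges with the simulation $(q,Z)\preceq(q,Z')$ while propagating the $-\infty$ invariant via condition~3 of Definition~\ref{def:simulation}, then pigeonhole on $\sim_M$, upgrade to $\eqv_M$ via Lemma~\ref{lem:sim-to-eqv-2}, and pump using Lemma~\ref{lem:eqv-is-bisimulation}. Your explicit induction prepending steps on the left is just a cleaner phrasing of the zig-zag in Figure~\ref{fig:simulation-runs}, with the added benefit that the starting valuation visibly lies in $Z$ as the statement requires.
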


\begin{proof}
  Let $R\subseteq X_{F}$ be the set of future clocks which are not released in the 
  transition sequence $t_{1}\cdots t_{k-1}$. 
  Using Lemma~\ref{lem:all-infty}, we can pick a valuation $u_0 \in Z'$ with
  $u_{0}(x)=-\infty$ for all $x\in R$.
  By the post-property of the zone graph, there exists $w_1 \in Z$ such that
  $(q, w_1) \xrightarrow{t_1 \dots t_{k-1}} (q, u_0)$. 
  Note that $w_{1}(x)=-\infty$ for all $x\in R$.
  Since $(q,Z)\preceq(q,Z')$, there exists $w'_1\in Z'$ such that $(q,w_1)\preceq(q,w'_1)$.
  From the third condition of a simulation in Definition~\ref{def:simulation},
  we get $w'_{1}(x)=-\infty$ for all $x\in R$.
  Therefore, there is a run $\rho_1:= (q,w'_1) \xrightarrow{t_1 \dots t_{k-1}} (q,u'_0)$ 
  such that $(q,u_0)\preceq(q,u'_0)$. 
  
  Set $u_1 = w'_1$ and repeat the argument above.  There exists $w'_2 \in Z'$ from which
  there exists a run $\rho_2:= (q, w'_2) \xrightarrow{t_1 \dots t_{k-1}} (q, w''_1)$ with
  $(q,w'_1)\preceq(q,w''_1)$.  Using this simulation, $\rho_2$ can be extended
  with a run similar to $\rho_1$ to give:
  $(q,w'_2) \xra{(t_1 \dots t_{k-1}) (t_1 \dots t_{k-1})} (q,u''_0)$ with
  $(q,u_{0})\preceq(q,u''_{0})$. 
  
  Repeated application of this argument shows for each $\ell\geq1$ the existence
  of a valuation $v_\ell \in Z'$  from which
  $(t_1 \dots t_{k-1})$ can be iterated $\ell$ times:
  $(q, v_\ell) \xra{(t_1 \dots t_{k-1})} (q, v_{\ell-1}) \cdots
  (q, v_1) \xra{t_1 \dots t_{k-1}} (q, v_0)$ with $v_{\ell}(x)=-\infty$ for all $x\in R$.
  This is illustrated in Figure~\ref{fig:simulation-runs}.

  Pick a large $\ell$ which is strictly greater than the number of
  equivalence classes of $\sim_M$. Therefore there exist $i>j$ 
  such that $v_i \sim_M v_j$, and there is a run
  $(q, v_i) \xra{(t_1 \dots t_{k-1})^{i-j}} (q, v_j)$ with $v_{i}(x)=-\infty$ for all 
  $x\in R$. By Lemma~\ref{lem:sim-to-eqv-2}, there is a run
  $(q, v_i) \xra{(t_1 \dots t_{k-1})^{i-j}} (q, v'_j)$ such that
  $v_i \eqv_M v'_j$. Since $\eqv_M$ is a bisimulation
  (Lemma~\ref{lem:eqv-is-bisimulation}), this implies there is an
  infinite run where the segment $\xra{(t_1 \dots t_{k-1})^{i-j}}$ can
  be iterated infinitely often. In particular, this gives a run of the
  required form.
\end{proof}

Finally, combining Theorem~\ref{prop:gta-liveness} and Lemma~\ref{lem:non-zeno}, we get an
algorithm for liveness: we construct the zone graph with simulation equivalence and check
for a reachable cycle that contains an accepting state and where every future clock $x$
which is not released during the cycle may take value $-\infty$ in some valuations of the
zones in the cycle.

\section{Translating MITL to GTA}
\label{sec:mitl-gta}

We first introduce the preliminaries for Metric Interval Temporal
Logic. Let $\Prop$ be a finite nonempty set of atomic propositions.
The alphabet $\Sigma$ that we consider is the set of subsets of
$\Prop$.
The set of $\mitl$ formulae over the set of atomic propositions
$\Prop$ is defined as
$$\varphi := p ~|~ \varphi \wedge \varphi  ~|~  \neg \varphi ~|~  \Next_{I} \varphi  ~|~ \varphi \Until_{I} \varphi$$ 
where $p \in \Prop$, and $I$ is either $[0,0]$, or a non-singleton
(open, or closed) interval whose end-points come from
$\Nat \cup \{\infty\}$.  In other words, if the end-points of the
interval are $a$ and $b$ respectively, then either $a=b=0$, or
$a,b \in \Nat \cup \{\infty\}$ and $a < b$.

We will now define the \emph{pointwise semantics} of MITL formulae
inductively as follows.  A timed word
$w = (a_0,\tau_0)(a_1,\tau_1)(a_2,\tau_2) \cdots $ is said to satisfy
the MITL formula $\varphi$ at position $i\geq0$, denoted as
$(w,i) \models \varphi$ if (omitting the classical boolean
connectives)
\begin{itemize}
\item $(w,i) \models p$ if $p \in a_i$
  
\item $(w,i) \models \Next_{I} \varphi$ if $(w,i+1) \models \varphi$
  and $\tau_{i+1} - \tau_{i} \in I$.
\item $(w,i) \models \varphi_1 \Until_{I} \varphi_2$ if there exists
  $j\geq i$ s.t. $(w,j) \models \varphi_2$, $(w,k) \models \varphi_1$
  for all $i \leq k < j$, and $\tau_{j} - \tau_{i} \in I$.
\end{itemize}

For a formula $\varphi$ we can also define its pointwise semantics as a function from (non-Zeno) timed words to sequences of Booleans.
Let $\TB$ denote $(\{0,1\}\times \mathbb{R})^\omega$ and $\TtwoB$ denote $(\{0,1\}^2\times \mathbb{R})^\omega$, respectively. 
Then for $\varphi\in\mitl$, we define $\sem{\varphi}\colon\TW\rightarrow \TB$ as a total function, i.e., $\dom{\sem{\varphi}}=\TW$ and for any $w=(a_0,\tau_0)(a_1,\tau_1)\ldots$,  we have $\sem{\varphi}(w)=(b_0,\tau_0)(b_1,\tau_1)\ldots$, where for all $i$, $b_i=1$ if $(w,i)\models \varphi$ and $b_i=0$ otherwise.
Finally, given an $\mitl$ formula $\varphi$ we define its language $\lang(\varphi)=\{w\in 
\TW\mid (w,0) \models \varphi\}$.

Our goal is to construct a GTA with outputs for an MITL formula
$\varphi$, which reads the timed word and outputs $1$ at position $i$
iff $(w, i) \models \varphi$. More precisely, there is a unique run of
the GTA on $w$:
$(q_0, v_0) \xra{\delta_0, t_0} (q_1, v_1) \xra{\delta_1, t_1}
\cdots$, where the output of each transition $t_i$ equals $1$ iff
$(w, i) \models \varphi$. We refer to GTA with outputs as Generalized
Timed Transducers (GTT) and we will formally define them next. 

\subsection{Generalized Timed Transducers (GTT)}

\begin{definition}[\GTTFULL]\label{defn:gtt}
  A \emph{\GTTfull} (\GTT) is a tuple $\Tt = (Q, \Sigma, \Gamma, X, \Delta, \mu, \Ii,
  (Q_{f}, g_f))$, where $\Sigma$ and $\Gamma$ are the finite \emph{input} and \emph{output
  alphabets}, $\Aa_{\Tt} = (Q, \Sigma, X, \Delta, \Ii, (Q_{f}, g_f))$ is a \GTA, called
  the \emph{underlying \GTAfull} of the \GTT\ $\Tt$, and $\mu \colon \Delta \to \Gamma^*$ is
  the \emph{output function}.
\end{definition}

\begin{figure}[htb]
	\centering
  \includegraphics[page=10,scale=1]{gastex-pictures-pics.pdf}
  \hfill
  \raisebox{7mm}{$\begin{array}{rcccccccc}
    w: & a & a & a & a & a & a & a & \cdots  \\
    \tau: & 1.3 & 5.7 & 7.8 & 10 & 11 & 21.2 & 100.7 & \cdots  \\
    \sem{\Tt}(w): & 1 & 1 & 1 & 1 & 1 & 1 & 1 & \cdots  \\
    \tau: & 1.3 & 5.7 & 7.8 & 10 & 11 & 21.2 & 100.7 & \cdots  
  \end{array}$}
	\caption{An \GTTfull\ $\Tt$ on left, and a run of $\Tt$ on a timed word $w$ with the output on right.}
	\label{fig:gtt-eg}
\end{figure}

Note that associated to each transition of a \GTT\ , we have a tuple consisting of an input letter, an instantaneous timed program, and an output letter.
The semantics of the \GTT\ $\Tt$ defined above is given by a transition system $\TS_{\Tt}$ whose states are \emph{configurations} $(q,v)$ of $\Tt$, where $q \in Q$ and $v\in\V$ is a valuation. 

An example of a \GTT\ and its run is given in Figure~\ref{fig:gtt-eg}. An accepting run $\rho$ on the input timed word $w=(a_{0},\tau_{0})(a_{1},\tau_{1})\cdots$
and with sequence of discrete transitions $t_{0}t_{1}\cdots$ produces the output timed
word $\rho(w)=(\mu(t_{0}),\tau_{0})(\mu(t_{1}),\tau_{1})\cdots$.  
Note that for a \GTTfull\ $\tra$ and an input timed word $w$, there could be several
accepting runs of $\Tt$ over $w$. Hence, the semantics of $\Tt$
is a relation $\sem{\Tt} \subseteq \TWi \times \TWo$ defined by
$\sem{\Tt}=\{(w,\rho(w))\mid \rho \text{ is an accepting run on } w\}$.
The \emph{domain} of $\Tt$ is the set $\dom{\Tt}$ of timed words $w$ such that $\Tt$ has
an accepting run over $w$.  
We say that $\Tt$ is functional if the relation $\sem{\Tt}$ is a partial function.
In which case, we write $\sem{\Tt}\colon\TWi\to\TWo$ and for $w\in\dom{\Tt}$, we write 
$\sem{\Tt}(w)$ for the output word associated with $w$ by $\Tt$.

A \GTTfull\ $\tra$ is called \emph{unambiguous} if every input word $w\in\TWi$ admits at
most one accepting run.  Clearly, an unambiguous \GTT\ is functional, but the converse
need not be true.  A deterministic \GTT\ is unambiguous, hence functional.  There is also
a dual notion, generalizing prophetic (untimed) automata from~\cite{CartonMichel03}, and
corresponding to co-determinism for infinite words, which also implies unambiguity, hence
functionality.

\subsection{The compositional approach}

In this section, we explain the overall approach of our translation that goes via transducers, {\em assuming} that for each timed operator we can build a \GTT. 


\begin{wrapfigure}{r}{3.2cm}
  \centering
  \scalebox{1}{
    \begin{tikzpicture}[level distance=0.75cm,
      level 1/.style={sibling distance=1.5cm},
      level 2/.style={sibling distance=1.5cm}]
      \node {$\U_{(1,2)}$}
        child {node {$\neg$}
          child {node {$p$}}       
        }       
        child {node {$\X_{(5,7)}$}
          child {node {$\land$}
            child {node {$q$}}
            child {node {$r$}}
            }
          };
    \end{tikzpicture}
  }
  \caption{\small Parse tree for the MILT formula \\
  $\neg p \U_{(1,2)} X_{(5,7)} (q \land r) $}
  \label{fig:parse-tree}
\end{wrapfigure}
At a high level, we will convert an $\mitl$ formula to a generalized timed transducer (\GTT) which uses history and future clocks~\cite{AkshayGGJS23}, from which we obtain a \GTA. 
Our construction can be viewed as 
structural induction on the \emph{parse tree} of the $\mitl$ formula, where we show how to build a \GTT\ for atomic propositions, and then for each Boolean and temporal operator, and finally 
we compose these \GTT\ bottom up to obtain the \GTT\ for each subformula, which by structural induction finally gives us the \GTT\ for the full formula.

Consider the the $\mitl$ formula $\varphi \equiv \neg p ~\U_{(1,2)}~ \X_{(5,7)} (q \land r)$, and its parse tree depicted in Figure~\ref{fig:parse-tree}. 
Here, the subformulae of $\varphi$ are 
$\psi_0 \equiv \varphi$, 
$\psi_1 \equiv \neg p$, 
$\psi_2 \equiv \X_{(5,7)} (q \land r)$,
$\psi_3 \equiv p$, 
$\psi_4 \equiv q \land r$,
$\psi_5 \equiv q$,
$\psi_6 \equiv r$. 
Observe that each node of the parse tree  is either an atomic propositions (at the leaves), or a boolean or temporal operator (at the non-leaf nodes).
Further, each subformula of $\varphi$ corresponds to a node of the parse tree of $\varphi$ in the sense that the subtree rooted at that node is the parse tree of the subformula.

We now discuss how the transducers for the subformulae can be combined to get the transducer for a given formula.  
This is done by considering two standard constructions for transducers, namely composition and product, which we formally define next.

\begin{definition}~\label{defn:composition}  
  Let $\Tt_1 = (Q_1, A, B, X_1, \Delta_1, \mu_1, \Ii_1, (Q^1_{f}, g^1_f))$ be a \GTTfull\ from the alphabet $A$ to $B$ and
  $\Tt_2 = (Q_2, B, C, X_2, \Delta_2, \mu_2, \Ii_2, (Q^2_{f}, g^2_f))$ be a \GTTfull\ from the alphabet $B$ to $C$.
  Then, the composition of $\Tt_1$ and $\Tt_2$ is a \GTTfull\ from the alphabet $A$ to $C$ given by $\Tt_2 \circ \Tt_1 = (Q_1 \times Q_2, A, C, X_1 \cup X_2, \Delta, \mu, \Ii_1 \times \Ii_2 , ( (Q^1_{f} \times Q_2, g^1_f) \cup (Q_1 \times Q^2_{f}, g^2_f)))$
  where the transition function $\Delta$ and the output function $\mu$ are defined as
  \begin{prooftree}
    \def\defaultHypSeparation{\hskip .05in}
    \AxiomC{$p \xrightarrow{a\mid b} q$ in $\Tt_1$,}
    \AxiomC{$p' \xrightarrow{b\mid c} q'$ in $\Tt_2$}
    \RightLabel{\scriptsize(1)}
    \BinaryInfC{$(p,p') \xrightarrow{a\mid c} (q,q')$  in $\Tt_2 \circ \Tt_1$}
  \end{prooftree}
\end{definition}

\begin{definition}~\label{defn:product}  
  Let $\Tt_1 = (Q_1, A, B, X_1, \Delta_1, \mu_1, \Ii_1, (Q^1_{f}, g^1_f))$ be a \GTTfull\ from the alphabet $A$ to $B$ and
  $\Tt_2 = (Q_2, B, C, X_2, \Delta_2, \mu_2, \Ii_2, (Q^2_{f}, g^2_f))$ be a \GTTfull\ from the alphabet $B$ to $C$.
  Then, the product of $\Tt_1$ and $\Tt_2$ is a \GTTfull\ from the alphabet $A$ to $B \times C$ given by $\Tt_1 \times \Tt_2 = (Q_1 \times Q_2, A, B \times C, X_1 \cup X_2, \Delta, \mu, \Ii_1 \times \Ii_2, ( (Q^1_{f} \times Q_2, g^1_f) \cup (Q_1 \times Q^2_{f}, g^2_f)))$
  where the transition function $\Delta$ and the output function $\mu$ are defined as
  \begin{prooftree}
    \def\defaultHypSeparation{\hskip .05in}
    \AxiomC{$p \xrightarrow{a\mid b} q$ in $\Tt_1$,}
    \AxiomC{$p' \xrightarrow{a\mid c} q'$ in $\Tt_2$}
    \RightLabel{\scriptsize(1)}
    \BinaryInfC{$(p,p') \xrightarrow{a\mid (b,c)} (q,q')$  in $\Tt_1 \times \Tt_2$}
  \end{prooftree}
\end{definition}

Let us now examine the first step in more detail. The \GTTfull s that we construct are such that if the \GTT\ is for an atomic proposition, then it reads letters from $\Sigma$ and outputs a Boolean value. Otherwise, if the \GTT\ is for an operator (either Boolean or temporal), then it reads letters from $\{0,1\}$ (letters from $\{0,1\}^2$ respectively) if the operator is unary (binary respectively) and outputs Booleans. The heart of the construction is thus the following theorem, that we will prove partially here (for the case of non-temporal operators) leaving the proofs for temporal cases to next two sections.

\begin{figure}[!htbp]
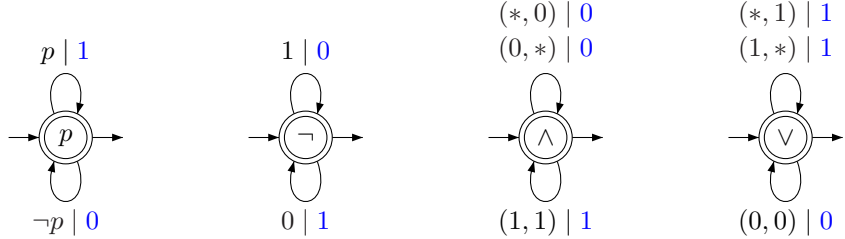

  \centering
  \includegraphics[page=1]{gastex-pictures-pics.pdf}
  \hfil
  \includegraphics[page=2]{gastex-pictures-pics.pdf}
  \hfil
  \includegraphics[page=3]{gastex-pictures-pics.pdf}
  \hfil
  \includegraphics[page=4]{gastex-pictures-pics.pdf}
  \caption{Transducers $\Tt_{p}, \Tt_{\neg}, \Tt_{\wedge},\Tt_{\vee}$ for an atomic
  proposition $p$, and Boolean operators (negation, conjunction and disjunction,
  respectively).  Recall that $\Sigma \subseteq 2^{\Prop}$, where $\Prop$ is the set of
  atomic propositions.  Therefore, reading a $p$ in transducer $\Tt_{p}$ means that we
  read a letter $a$ such that $p \in a$.
  }
  \label{fig:basic-prop}
\end{figure}

\begin{theorem}~\label{thm:correctness}
  For each atomic proposition $p$ and each operator $\oper\in\{\neg,\lor,\land, \X_I, \U_I\}$,
  \begin{enumerate}
    \item if $\varphi=p$, with $\sem{p}:\TW\rightarrow \TB$, then there exists a safe
    \GTT\ $\Tt_p$ such that $\sem{\Tt_{p}}=\sem{\varphi}$.

    \item if $\varphi= ~\oper~p$, where $\oper$ is a unary operator, with
    $\sem{\varphi}:\TB\rightarrow \TB$, then there exists a safe \GTT\ $\Tt_{\oper}$ such that
    $\sem{\Tt_{\oper}}=\sem{\varphi}$.
    
    \item if $\varphi=p_\ell~ \oper ~p_r$, for atomic propositions $p_\ell$ and $p_r$, and
    $\oper$ a binary operator with $\sem{\varphi}:\TtwoB\rightarrow \TB$, then there
    exists a safe \GTT\ $\Tt_{\oper}$ such that $\sem{\Tt_{\oper}}=\sem{\varphi}$.
  \end{enumerate}
\end{theorem}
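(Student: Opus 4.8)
The theorem has three parts; the excerpt says the temporal cases are deferred, so I focus on what can be established here: the atomic-proposition case (part 1) and the Boolean-operator cases of parts 2 and 3. For each of these I would exhibit the explicit \GTT\ from Figure~\ref{fig:basic-prop} and verify that its semantics matches $\sem{\varphi}$ as a function. The strategy is uniform: each transducer is a single-state (or two-state) deterministic device with a self-loop on every input letter, a trivial timed program, and all clocks being irrelevant — so safety (Definition~\ref{defn:safe-gta}) holds vacuously, functionality is immediate from determinism, and correctness reduces to a pointwise check on the output letter of each transition.

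I would proceed as follows. \textbf{Step 1 (atomic propositions).} For $\varphi = p$, take $\Tt_p$ with a single accepting initial state, and for each $a \in \Sigma$ a self-loop reading $a$ with empty timed program, outputting $1$ if $p \in a$ and $0$ otherwise. There is exactly one run on any timed word $w = (a_0,\tau_0)(a_1,\tau_1)\cdots$, and its $i$-th output is $1$ iff $p \in a_i$ iff $(w,i) \models p$; hence $\sem{\Tt_p} = \sem{p}$ as total functions $\TW \to \TB$. \textbf{Step 2 (unary Boolean, $\oper = \neg$).} Take $\Tt_\neg$ with one accepting initial state, self-loops reading $b \in \{0,1\}$ with output $1-b$. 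Again a unique run, with $i$-th output $1-b_i$; since the input word here is itself $\sem{\psi}(w)$ for the argument subformula, the output matches $\sem{\neg\psi}$. \textbf{Step 3 (binary Boolean, $\oper \in \{\land,\lor\}$).} Take $\Tt_\land$ (resp.\ $\Tt_\lor$) with one accepting initial state, self-loops reading $(b,b') \in \{0,1\}^2$ with output $b \land b'$ (resp.\ $b \lor b'$); the same unique-run argument gives $\sem{\Tt_\land} = \sem{\cdot \land \cdot}$ (resp.\ $\lor$). In all cases, the transition system $\TS_{\Tt}$ has the same timestamps on input and output (the timed program does nothing), so the output timed word has the required form $(\mu(t_0),\tau_0)(\mu(t_1),\tau_1)\cdots$, and every run is accepting since the unique state is Büchi and visited infinitely often on a non-Zeno word.

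\textbf{Step 4 (reduction of parts 2 and 3 to the single-proposition form).} The statement of parts 2 and 3 is phrased with a single atomic argument $p$ (resp.\ $p_\ell, p_r$), i.e.\ it asks for a \GTT\ realizing the Boolean (or temporal) operator applied directly to atomic propositions. For the Boolean operators this is literally Steps 2--3 with $\Sigma$ replaced by $\{0,1\}$ or $\{0,1\}^2$; I would remark that, combined with the composition and product constructions (Definitions~\ref{defn:composition} and~\ref{defn:product}) and structural induction on the parse tree, this yields a \GTT\ for every MITL formula — but the full compositional argument and the semantic soundness of composition/product are developed separately, so here I only verify the base transducers. \textbf{Main obstacle.} For the Boolean fragment there is essentially no obstacle — the content is bookkeeping about run uniqueness and timestamps. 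The real difficulty, which the excerpt explicitly postpones, is the temporal operators $\Y_I, \X_I, \S_I, \U_I$: there one must use future clocks to guess the timing distance to a witnessing event, split states by whether the guessed interval constraint holds (as sketched in the introduction for $\X_I p$), argue that the resulting GTT is \emph{functional} (or unambiguous/prophetic) rather than merely nondeterministic, and — crucially for $\U_I$ — show the construction only needs a constant number of clocks, giving the claimed exponential succinctness. That analysis is deferred to the subsequent sections and is where the genuine work lies.
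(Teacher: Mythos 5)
Your proposal matches the paper's own treatment: the paper likewise proves only the atomic-proposition and Boolean cases here by exhibiting the clock-free, single-state deterministic transducers of Figure~\ref{fig:basic-prop} (safety vacuous, functionality from determinism) and explicitly defers the temporal operators $\X_I$ and $\U_I$ to the subsequent sections. Your verification steps and your identification of where the genuine difficulty lies (future-clock predictions and functionality for $\U_I$) are consistent with the paper's argument.
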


The proof of Part (1) is easy.  For any atomic proposition $p$, the \GTT\ $\Tt_p$ is shown in the leftmost picture of Figure~\ref{fig:basic-prop}.  We remark that in this transducer $\Tt_{p}$, a transition labelled $p$ (resp.\ $\neg p$) means that we read a letter $a$ such that $p \in a$ (resp.\ $p\notin a$).  For instance, let $\Prop = \{p,q\}$ and $\Sigma = \{a,b,c,d\}$ where the letters are interpreted as $a=\emptyset$, $b=\{p\}$, $c=\{q\}$ and $d=\{p,q\}$.  Then, in $\Tt_{p}$, the transition $p$ is taken when $b$ or $d$ is seen, and $\neg p$ is taken on reading $a$ or $c$.

Next, for the Boolean unary operator $\neg$ and binary operators $\land$ and $\vee$, the respective transducers are also depicted in Figure~\ref{fig:basic-prop}.  Since these are not temporal operators, the \GTT s use no clocks.
Also, these \GTT\ are deterministic and hence functional.

Now to complete the proof of the above theorem, it only remains to show Parts (2) and (3)
for the temporal operators in the timed setting.

In the rest of this section we will assume these results and discuss how the transducers
for the subformulae can be combined to get the transducer for a given formula.  We accomplish this by using two standard constructions for transducers, namely composition and
product.  For two transducers $\Tt$ and $\Tt'$, their (sequential) \emph{composition} is
denoted $\Tt\circ \Tt'$ and their \emph{product} is denoted $\Tt\times\Tt'$.

Suppose we have an $\mitl$ formula $\varphi \equiv \oper ~\psi$, and suppose we have the
\GTT\ $\Tt_\psi$ for $\psi$.  By Theorem~\ref{thm:correctness}, we also have a \GTT\
$\Tt_{\oper}$ for $\oper$.  Then, the \GTT\ for $\varphi$ is obtained by sequentially
composing the \GTT\ for $\oper$ and the \GTT\ for $\psi$, i.e.,
$\sem{\varphi}=\sem{\Tt_{\oper}\circ \Tt_\psi}$.  Similarly, for an $\mitl$ formula
$\varphi \equiv \psi_{\ell} ~\oper~ \psi_r$, if we have the \GTT\ $\Tt_\ell$ for $\psi_{\ell}$ and $\Tt_r$ for $\psi_r$.  Again, by Theorem~\ref{thm:correctness}, for any binary operators, we have a \GTT\ $\Tt_{\oper}$ for $\oper$.  Then, the \GTT\ for $\varphi$ is obtained by taking the product of $\Tt_{\ell}$ and $\Tt_{r}$ and then composing with $\Tt_{\oper}$, i.e., $\sem{\varphi}=\sem{\Tt_\oper \circ (\Tt_\ell\times \Tt_r)}$.  
Thus we have the following lemmas.

\begin{restatable}{lemma}{lemcomposition}
  \label{lem:composition}
  If $\Tt_1$ and $\Tt_2$ are functional \GTT s from alphabets $A$ to $B$ and $B$ to $C$ respectively, then $\Tt_2 \circ \Tt_1$ is a functional \GTT\ from $A$ to $C$ and $\sem{\Tt_{2}\circ\Tt_{1}}=\sem{\Tt_{2}}\circ\sem{\Tt_{1}}$.
\end{restatable}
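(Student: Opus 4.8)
The plan is to establish the two claims of Lemma~\ref{lem:composition} separately: first that $\Tt_2\circ\Tt_1$ is functional, and second that its semantics is the functional composition $\sem{\Tt_2}\circ\sem{\Tt_1}$. Both follow from a single structural observation: runs of $\Tt_2\circ\Tt_1$ on an input word $w\in A^\omega$ are in bijection with pairs $(\rho_1,\rho_2)$ where $\rho_1$ is a run of $\Tt_1$ on $w$ producing some intermediate word $u=\rho_1(w)\in B^\omega$, and $\rho_2$ is a run of $\Tt_2$ on $u$.

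\textbf{Step 1: the run correspondence.} First I would unfold Definition~\ref{defn:composition}. A configuration of $\Tt_2\circ\Tt_1$ is $((q_1,q_2),v)$ where $v$ is a valuation over $X_1\cup X_2$; since $X_1$ and $X_2$ are disjoint, $v$ splits uniquely as $v_1$ over $X_1$ and $v_2$ over $X_2$. A discrete transition $((p,p'),a,\prog,(q,q'))$ of the product, by rule (1), arises exactly from a transition $(p,a,\prog_1,q)$ of $\Tt_1$ with output $b$ and a transition $(p',b,\prog_2,q')$ of $\Tt_2$ (with $\prog=\prog_1;\prog_2$ acting on the disjoint clock sets, and output $c=\mu_2$ of the second transition). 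Using the disjointness of the clock sets, time-elapse in the product projects to time-elapse in each component, and the program $\prog_1;\prog_2$ applied to $v=v_1\uplus v_2$ yields $(\prog_1\,v_1)\uplus(\prog_2\,v_2)$. Hence by induction on the length of a prefix, a run $\rho$ of $\Tt_2\circ\Tt_1$ on $w$ with intermediate outputs $u$ and final outputs $z$ projects to a run $\rho_1$ of $\Tt_1$ on $w$ outputting $u$ and a run $\rho_2$ of $\Tt_2$ on $u$ outputting $z$; conversely any such pair $(\rho_1,\rho_2)$ (necessarily synchronized on the same intermediate word $u$) lifts to a run of the product. The Büchi acceptance condition of the product is $(Q^1_f\times Q_2)\cup(Q_1\times Q^2_f)$ with the corresponding guards, so $\rho$ is accepting iff $\rho_1$ visits $Q^1_f$ infinitely often \emph{and} $\rho_2$ visits $Q^2_f$ infinitely often, i.e.\ iff both $\rho_1$ and $\rho_2$ are accepting. (I would double-check the acceptance handling here; the union-of-Büchi-sets form for a product needs the standard two-counter/flag trick to be exactly the conjunction, but the paper evidently intends this, so I'd treat the guards $g^1_f,g^2_f$ as encoding it — this is the one bookkeeping point to state carefully.)

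\textbf{Step 2: semantics identity.} Given the run correspondence, for $(w,z)\in\sem{\Tt_2\circ\Tt_1}$ there is an accepting run $\rho$, which splits into accepting $\rho_1$ on $w$ with $\rho_1(w)=u$ and accepting $\rho_2$ on $u$ with $\rho_2(u)=z$; hence $(w,u)\in\sem{\Tt_1}$ and $(u,z)\in\sem{\Tt_2}$, so $(w,z)\in\sem{\Tt_2}\circ\sem{\Tt_1}$. Conversely, if $(w,u)\in\sem{\Tt_1}$ and $(u,z)\in\sem{\Tt_2}$, witnessing accepting runs $\rho_1,\rho_2$ combine (they agree on $u$ by construction) into an accepting run of the product mapping $w$ to $z$. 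This gives $\sem{\Tt_2\circ\Tt_1}=\sem{\Tt_2}\circ\sem{\Tt_1}$ as relations.

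\textbf{Step 3: functionality.} Suppose $\Tt_1,\Tt_2$ are functional. Take $w\in A^\omega$ and suppose $(w,z),(w,z')\in\sem{\Tt_2\circ\Tt_1}=\sem{\Tt_2}\circ\sem{\Tt_1}$. Then there are $u,u'\in B^\omega$ with $(w,u),(w,u')\in\sem{\Tt_1}$ and $(u,z),(u',z')\in\sem{\Tt_2}$. Functionality of $\Tt_1$ forces $u=u'$; then functionality of $\Tt_2$ forces $z=z'$. Hence $\sem{\Tt_2\circ\Tt_1}$ is a partial function, so $\Tt_2\circ\Tt_1$ is functional, and we may write $\sem{\Tt_2\circ\Tt_1}=\sem{\Tt_2}\circ\sem{\Tt_1}$ as a composition of partial functions, with domain $\{w\mid \sem{\Tt_1}(w)\in\dom{\Tt_2}\}$.

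\textbf{Main obstacle.} The routine part is the disjoint-clock bookkeeping in Step~1 (time elapse and program application commute with the split $v=v_1\uplus v_2$); the genuinely delicate point is the Büchi acceptance of the product construction — ensuring that the guards $g^1_f,g^2_f$ attached to $Q^1_f\times Q_2$ and $Q_1\times Q^2_f$ really capture "both components accept infinitely often" rather than "at least one does". I would make this precise by invoking (or recalling) the standard generalized-Büchi-to-Büchi flag construction implicit in the definition, and otherwise the argument is a direct two-step unfolding.
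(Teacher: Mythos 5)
Your proposal is correct and follows the natural (and surely intended) route: split a run of $\Tt_2\circ\Tt_1$ into a pair of component runs using the disjointness of $X_1$ and $X_2$, deduce $\sem{\Tt_2\circ\Tt_1}=\sem{\Tt_2}\circ\sem{\Tt_1}$ as relations, and then get functionality from functionality of the two components. Your instinct on the one delicate point --- that the accepting condition $(Q^1_f\times Q_2)\cup(Q_1\times Q^2_f)$ must mean ``both components accept infinitely often'' rather than the literal disjunction, without which both the semantics identity and functionality can fail --- is right, but your proposed resolution is off: the guards $g^1_f,g^2_f$ are clock constraints attached to acceptance and do not encode any flag mechanism; the condition has to be read as a \emph{generalized} B\"uchi condition (each of the two accepting pairs visited infinitely often), or else degeneralized by the standard counter construction, and this should be stated as part of the definition rather than extracted from the guards.
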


\begin{restatable}{lemma}{lemproduct}
  \label{lem:product}
  If $\Tt_1$ and $\Tt_2$ are functional \GTTfull s from alphabets $A$ to $B$ and $A$ to $C$ 
  respectively, then $\Tt_1 \times \Tt_2$ is a functional \GTT\ from $A$ to $B\times C$ and 
  for each $(w,\tau)\in\dom{\Tt_{1}}\cap\dom{\Tt_{2}}$, if 
  $\sem{\Tt_{1}}((w,\tau))=(b_{0},\tau_{0})(b_{1},\tau_{1})\cdots$ and
  $\sem{\Tt_{2}}((w,\tau))=(c_{0},\tau_{0})(c_{1},\tau_{1})\cdots$ then
  $\sem{\Tt_{1}\times\Tt_{2}}((w,\tau))=((b_{0},c_{0}),\tau_{0})((b_{1},c_{1}),\tau_{1})\cdots$.
\end{restatable}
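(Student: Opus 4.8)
Note first that $\Tt_1\times\Tt_2$ is a \GTT\ directly by Definition~\ref{defn:product}, so only its functionality and the output formula require proof. The plan is to show that a run of $\Tt_1\times\Tt_2$ over a timed word is nothing but a ``synchronized pair'' of a run of $\Tt_1$ and a run of $\Tt_2$ over the same timed word, and then to read off both claims from this correspondence. Throughout I would assume, without loss of generality (renaming clocks apart if needed), that $X_1\cap X_2=\emptyset$; then a valuation over $X_1\cup X_2$ is exactly a pair $(v^{(1)},v^{(2)})$ consisting of a valuation over $X_1$ and one over $X_2$, and a configuration $\big((q,q'),v\big)$ of $\Tt_1\times\Tt_2$ is exactly the pair of configurations $(q,v^{(1)})$ of $\Tt_1$ and $(q',v^{(2)})$ of $\Tt_2$.

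The first step is to check that the two transition systems decompose componentwise. For a delay $\delta$, $v+\delta$ is a valuation iff both $v^{(1)}+\delta$ and $v^{(2)}+\delta$ are (the constraint $X_F\leq 0$ splits over $X_1\cap X_F$ and $X_2\cap X_F$), so delay transitions of the product are precisely simultaneous delay transitions of the two components. By Definition~\ref{defn:product}, a discrete transition of $\Tt_1\times\Tt_2$ reading $a$ and outputting $(b,c)$ is literally a pair of transitions $p\xrightarrow{a\mid b}q$ of $\Tt_1$ and $p'\xrightarrow{a\mid c}q'$ of $\Tt_2$; taking the combined program to be the concatenation $\prog_1;\prog_2$ and using that it acts on disjoint clock sets, one gets $v\xrightarrow{\prog_1;\prog_2}v'$ iff $v^{(1)}\xrightarrow{\prog_1}(v')^{(1)}$ and $v^{(2)}\xrightarrow{\prog_2}(v')^{(2)}$. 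The initial condition $\Ii_1\times\Ii_2$ and the generalized B\"uchi condition $(Q^1_f\times Q_2,g^1_f)\cup(Q_1\times Q^2_f,g^2_f)$ decompose componentwise as well. Since a run over a fixed timed word $(w,\tau)$ has its delay sequence entirely determined by the timestamps ($\delta_{i+1}=\tau_{i+1}-\tau_i$), it follows that runs $\rho$ of $\Tt_1\times\Tt_2$ over $(w,\tau)$ are in bijection with pairs $(\rho_1,\rho_2)$ of a run of $\Tt_1$ and a run of $\Tt_2$ over $(w,\tau)$ (given such $\rho_1,\rho_2$, they read the same letters with the same delays, so their $i$-th transitions combine into a transition of $\Tt_1\times\Tt_2$); under this bijection $\rho$ is accepting iff $\rho_1$ and $\rho_2$ both are, and the output of $\rho$ at position $i$ is the pair of the outputs of $\rho_1$ and $\rho_2$ at position $i$. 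In particular $\dom{\Tt_1\times\Tt_2}=\dom{\Tt_1}\cap\dom{\Tt_2}$.

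The conclusion is then immediate. For $(w,\tau)\in\dom{\Tt_1}\cap\dom{\Tt_2}$, pick \emph{any} accepting run $\rho$ of $\Tt_1\times\Tt_2$ over $(w,\tau)$ and let $\rho_1,\rho_2$ be its two projections; these are accepting runs of $\Tt_1,\Tt_2$, so by functionality their output words equal $\sem{\Tt_1}((w,\tau))=(b_0,\tau_0)(b_1,\tau_1)\cdots$ and $\sem{\Tt_2}((w,\tau))=(c_0,\tau_0)(c_1,\tau_1)\cdots$ respectively, irrespective of which accepting $\rho$ we started from. Hence every accepting run of $\Tt_1\times\Tt_2$ over $(w,\tau)$ has the same output word $((b_0,c_0),\tau_0)((b_1,c_1),\tau_1)\cdots$, which proves both that $\Tt_1\times\Tt_2$ is functional and the stated output formula. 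The one point that I expect to require care is exactly this last move: ``functional'' pins down the output word but not the run, so one must reason about \emph{every} accepting run of the product rather than a conveniently chosen one --- which is precisely what projecting to the components and invoking their functionality delivers, once the componentwise decomposition of configurations and transitions (resting on clock-set disjointness and on the timestamps fixing the delays) is in place. Everything else is routine bookkeeping about products of timed transition systems.
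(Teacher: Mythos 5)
Your proof is correct and is precisely the standard decomposition argument this lemma calls for: after renaming the clock sets apart, accepting runs of $\Tt_1\times\Tt_2$ over a fixed timed word are exactly synchronized pairs of accepting runs of $\Tt_1$ and $\Tt_2$ (the timestamps fix the delays, the disjointness of $X_1$ and $X_2$ splits valuations, programs, guards and initial conditions componentwise), so functionality of the components pins down the output of \emph{every} accepting product run, which is the right way to get functionality of the product rather than arguing about one chosen run. The only point worth making fully explicit is that the product's acceptance condition $(Q^1_{f}\times Q_2, g^1_f)\cup(Q_1\times Q^2_{f}, g^2_f)$ must be read as a generalized B\"uchi condition --- both sets visited infinitely often --- for your step ``$\rho$ is accepting iff $\rho_1$ and $\rho_2$ both are'' to hold; you do read it this way, and it is the only reading under which the lemma is true.
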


Finally, we can prove our main translation Theorem~\ref{thm:main-trans} below.

\begin{restatable}{theorem}{maintrans}
  \label{thm:main-trans}
  For any $\mitl$ formula, $\varphi$, we can construct a functional $\GTT$ $\Tt_\varphi$ such that $\sem{\varphi}=\sem{\Tt_\varphi}$.
\end{restatable}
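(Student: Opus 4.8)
The plan is to proceed by structural induction on the parse tree of the $\mitl$ formula $\varphi$, exactly mirroring the decomposition described before the theorem statement. Recall that every node of the parse tree is either a leaf labelled by an atomic proposition or an internal node labelled by an operator from $\{\neg,\lor,\land,\Y_I,\X_I,\S_I,\U_I\}$ (in the fragment defined, only $\neg,\land,\X_I,\U_I$ actually occur, but the argument is uniform). The induction hypothesis is that for every proper subformula $\psi$ of $\varphi$, we already have a functional $\GTT$ $\Tt_\psi$ with $\sem{\Tt_\psi}=\sem{\psi}$, where the input alphabet of $\Tt_\psi$ is $\Sigma$ and the output alphabet is $\mathbb{B}=\{0,1\}$.

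For the base case, $\varphi=p$ is an atomic proposition, and Theorem~\ref{thm:correctness}(1) gives a functional (indeed deterministic) safe $\GTT$ $\Tt_p$ with $\sem{\Tt_p}=\sem{p}$, which reads from $\Sigma$ and outputs a Boolean at each position. For the inductive step with a unary operator, $\varphi=\oper\,\psi$: by the induction hypothesis we have a functional $\GTT$ $\Tt_\psi\colon\TW\to\TB$, and by Theorem~\ref{thm:correctness}(2) we have a functional safe $\GTT$ $\Tt_\oper\colon\TB\to\TB$ realising $\sem{\oper}$. We set $\Tt_\varphi:=\Tt_\oper\circ\Tt_\psi$. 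By Lemma~\ref{lem:composition}, $\Tt_\varphi$ is a functional $\GTT$ from $\Sigma$ to $\mathbb{B}$ with $\sem{\Tt_\varphi}=\sem{\Tt_\oper}\circ\sem{\Tt_\psi}=\sem{\oper}\circ\sem{\psi}$, and one checks directly from the pointwise semantics that this composed function is exactly $\sem{\oper\,\psi}=\sem{\varphi}$ (the $i$-th output bit of $\sem{\psi}(w)$ records whether $(w,i)\models\psi$, and $\Tt_\oper$ then computes the bit for $(w,i)\models\oper\,\psi$, with the timestamps preserved throughout since all the transducers involved emit exactly one output letter per input letter). For the inductive step with a binary operator, $\varphi=\psi_\ell\,\oper\,\psi_r$: by the induction hypothesis we have functional $\GTT$s $\Tt_{\psi_\ell},\Tt_{\psi_r}\colon\TW\to\TB$, and by Theorem~\ref{thm:correctness}(3) we have a functional safe $\GTT$ $\Tt_\oper\colon\TtwoB\to\TB$ realising $\sem{\oper}$. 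We set $\Tt_\varphi:=\Tt_\oper\circ(\Tt_{\psi_\ell}\times\Tt_{\psi_r})$. By Lemma~\ref{lem:product}, $\Tt_{\psi_\ell}\times\Tt_{\psi_r}$ is a functional $\GTT$ from $\Sigma$ to $\mathbb{B}^2$ producing at position $i$ the pair of bits for $(w,i)\models\psi_\ell$ and $(w,i)\models\psi_r$; composing with $\Tt_\oper$ via Lemma~\ref{lem:composition} yields a functional $\GTT$ $\Tt_\varphi$ from $\Sigma$ to $\mathbb{B}$ with $\sem{\Tt_\varphi}=\sem{\varphi}$, again reading off the pointwise semantics of $\oper$.

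A couple of routine points must be checked along the way. First, safety and functionality are preserved by both $\circ$ and $\times$: functionality is exactly Lemmas~\ref{lem:composition} and~\ref{lem:product}; safety is preserved because the product and composition constructions take the (disjoint) union of clock sets $X_1\cup X_2$, take products of states, and the set $X_D$ of future clocks appearing in future-future diagonal guards is the union of the respective sets, so each component program remains $X_D$-safe and renamings keep preserving $X_D$ (here the fact that the basic building blocks $\Tt_\oper$ from Theorem~\ref{thm:correctness} are safe, and that composition does not introduce new diagonal guards, is what we rely on). Second, one needs that the functions being composed have matching domains/codomains at the level of alphabets — the output alphabet of the inner transducer is exactly $\mathbb{B}$ (or $\mathbb{B}^2$ after the product) which is exactly the input alphabet of $\Tt_\oper$ — which holds by construction. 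Finally, $\dom{\Tt_\varphi}=\TW$: since each $\sem{\psi}$ is a total function on $\TW$ (as noted in the preliminaries), and the transducers for the operators are total on their respective Boolean input alphabets, the composed transducer accepts every timed word.

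The construction is routine given the earlier results; there is no serious obstacle in this theorem itself. The only mild subtlety is bookkeeping: making sure that at every internal node the input alphabet of the operator-transducer matches the output produced so far (a single Boolean for unary, a pair for binary, obtained via the product), and making sure the preservation of the ``safe $\GTT$'' property through $\circ$ and $\times$ is stated cleanly — but both of these follow directly from the definitions of composition and product in Definitions~\ref{defn:composition} and~\ref{defn:product} together with Definition~\ref{defn:safe-gta}. The genuinely hard work — building safe $\GTT$s for the temporal operators $\X_I,\U_I$ (and their past duals), which is Theorem~\ref{thm:correctness} parts (2),(3) for the temporal cases — is deferred to the subsequent sections and is assumed here.
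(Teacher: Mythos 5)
Your proposal is correct and follows essentially the same route as the paper: structural induction on the parse tree, invoking Theorem~\ref{thm:correctness} for the atomic and operator transducers and Lemmas~\ref{lem:composition} and~\ref{lem:product} to combine them via composition and product, with safety preserved at each step. The additional bookkeeping you spell out (alphabet matching, totality of the domain, preservation of safety under $\circ$ and $\times$) is consistent with what the paper leaves implicit.
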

  
The proof proceeds by structural induction on $\varphi$, and follows directly from the correctness of individual \GTT\ (given by Theorem~\ref{thm:correctness}) and the correctness and functionality of the composition and product constructions (given by Lemmas~\ref{lem:composition} and \ref{lem:product}). Finally, from this \GTT\ we obtain a safe \GTA\ that recognises the language $\lang(\varphi)$, by just looking at the output on first letter of input. Safety of this \GTA\ follows by the fact that we ensure safety at every step of the construction.

\begin{restatable}{corollary}{maincoraut}
  \label{cor:main-cor}
  For any $\mitl$ formula, $\varphi$, we can construct a safe $\GTA$ $\Aa_\varphi$ such that $\lang(\varphi)=\lang(\Aa_\varphi)$.
\end{restatable}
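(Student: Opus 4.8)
The plan is to leverage Theorem~\ref{thm:main-trans} directly, observing that a GTT is already a GTA with output annotations, so that extracting a language-recognizer is essentially a matter of selecting the right "output bit" on the very first transition. Concretely, given $\varphi$, Theorem~\ref{thm:main-trans} produces a functional GTT $\Tt_\varphi = (Q,\Sigma,\{0,1\},X,\Delta,\mu,\Ii,(Q_f,g_f))$ with $\sem{\Tt_\varphi}=\sem{\varphi}\colon\TW\to\TB$. Recall that for a timed word $w=(a_0,\tau_0)(a_1,\tau_1)\cdots$, the output word is $(b_0,\tau_0)(b_1,\tau_1)\cdots$ where $b_0=1$ iff $(w,0)\models\varphi$, i.e., iff $w\in\lang(\varphi)$. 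So the language membership of $w$ is entirely determined by whether the GTT has an accepting run whose \emph{first} transition outputs $1$.

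The construction of $\Aa_\varphi$ is then: take the underlying GTA $\Aa_{\Tt_\varphi}$ of $\Tt_\varphi$, but restrict the initial transitions. Formally, I would duplicate the state set with a one-bit flag recording whether the first transition has been taken, or more simply, add a fresh initial state: introduce a new state $q_{\init}$, make it the unique initial state with guard $g_0$ inherited appropriately (setting history clocks to $0$ or $\infty$ as required by safety), and for every transition $t=(q_0,a,\prog,q')\in\Delta$ with $(q_0,g_0)\in\Ii$ such that $\mu(t)=1$, add a transition $(q_{\init},a,\prog',q')$ where $\prog'$ first applies the initial guard test and then $\prog$. Transitions with $\mu(t)=0$ from initial states are simply not copied. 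All other transitions of $\Delta$ are kept unchanged, and the Büchi set is $Q_f$ (with its guard $g_f$). Then a timed word $w$ is accepted by $\Aa_\varphi$ iff $\Tt_\varphi$ has an accepting run on $w$ whose first transition outputs $1$, which by functionality of $\Tt_\varphi$ and the semantics of $\sem{\varphi}$ is iff $(w,0)\models\varphi$, i.e., $w\in\lang(\varphi)$. Here functionality is used crucially: since $\sem{\Tt_\varphi}$ is a partial function equal to $\sem{\varphi}$, and $\sem{\varphi}$ is \emph{total} on $\TW$, every timed word has exactly one output sequence, so "there exists an accepting run with first output $1$" coincides with "the unique output value $b_0$ equals $1$" — there is no spurious run producing a different first bit.

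For the safety claim, I would argue that $\Aa_\varphi$ inherits safety from $\Tt_\varphi$: Theorem~\ref{thm:correctness} and the closure properties (Lemmas~\ref{lem:composition}, \ref{lem:product}) are stated for \emph{safe} GTTs, and composition/product of safe GTTs preserves safety since they only take unions of clock sets and concatenate programs — the set $X_D$ of future clocks appearing in future–future diagonal guards is a subset of the union of the respective $X_D$'s, and $X_D$-safety of each component program is preserved when programs are juxtaposed (a clock in $X_D$ is still checked for $0$ or $-\infty$ before release, and renamings still preserve $X_D$). The only new ingredient in $\Aa_\varphi$ is the fresh initial state $q_{\init}$ and the prepended initial guard test, which introduces no new releases and no new renamings, so safety is maintained; the initial-guard condition (history clocks set to $0$ or $\infty$) is exactly what we carry over from $\Tt_\varphi$'s initial condition. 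The main obstacle, such as it is, is purely bookkeeping: making precise the "select first-transition output" surgery while respecting the GTA syntax (in particular, that the initial guard and the first program must be composed correctly, and that one must not accidentally drop runs by over-restricting). There is no deep difficulty here — the real content of the translation lies entirely in Theorems~\ref{thm:correctness} and~\ref{thm:main-trans} — so the proof is short once the construction is spelled out.
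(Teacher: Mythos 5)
Your proposal is correct and matches the paper's approach: the paper likewise obtains $\Aa_\varphi$ from the functional GTT $\Tt_\varphi$ of Theorem~\ref{thm:main-trans} by ``looking at the output on the first letter of input,'' with safety inherited because it is maintained at every step of the construction. Your version merely spells out the first-transition surgery and the role of functionality in more detail than the paper's one-line argument.
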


\subsection{Translation of Next and Until operators}~\label{sec:mitl-gta-future}

Now, we will describe the transducers for the modalities $\X_I$
and $\U_I$. 
We next describe the transducers for $\Next_I$ and $\Until_I$.

\subparagraph*{Next operator.} The transducer for $\Next_{I} p$ is given in
Figure~\ref{fig:Next-timed}.  It is obtained by extending the untimed variant of the
Next-transducer with a future clock $x$ that predicts the time to the next event.  The
idea is the same as explained in Figure~\ref{fig:intro-example} of the Introduction.  The
prediction of the next event is verified, by having the guard $x = 0$ in every transition.
Notice the use of the program syntax in this example: a transition first checks if $x = 0$
(satisfying a previous obligation), and then releases $x$ to a non-deterministic value
guessing the time to the next event, and then asks for a guard, either $-x \in I$ or $-x
\notin I$.  

\begin{figure}[tb]
  \centering 
  \includegraphics[page=5,scale=0.9]{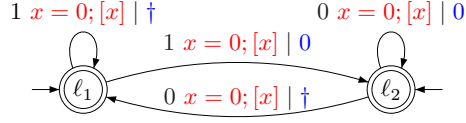}
  \caption{GTT for $\X_{I}$ using a future clock $x$. The output
    is $0$ for transitions to location $\ell_{2}$ and is the
    if-then-else \textcolor{blue}{$\dagger=\ite{(-x\in I)}{1}{0}$} for
    transitions to location $\ell_{1}$, where $I$ is some interval.}
  \label{fig:Next-timed}
\end{figure}

\subparagraph*{Until operator.}
We start by describing the transducer for the untimed $\U$ modality
$p \U q$ (in other words, $p \U_I q$ with $I = [0, \infty)$). This is
shown in Figure~\ref{fig:ltl-until}. For simplicity, we have assumed
$\Prop = \{p, q\}$ and the alphabet is
represented as $(0,0), (1, 0), (0,1), (1,1)$ corresponding to $\{\}$,
$\{p\}$, $\{q\}$ and $\{p, q\}$. On the word $w$, if $s_i$ is the
state that reads $a_i$, then the following invariants hold:
\begin{itemize}
\item $s_i = q$ iff $q \in a_i$,
\item $s_i = \neg q \land (p \U q)$ iff $q \notin a_i$ and
  $(w, i) \models p \U q$,
\item $s_i = \neg (p \U q)$ iff $(w, i) \not \models p \U q$.
\end{itemize}

\begin{figure}[tb]
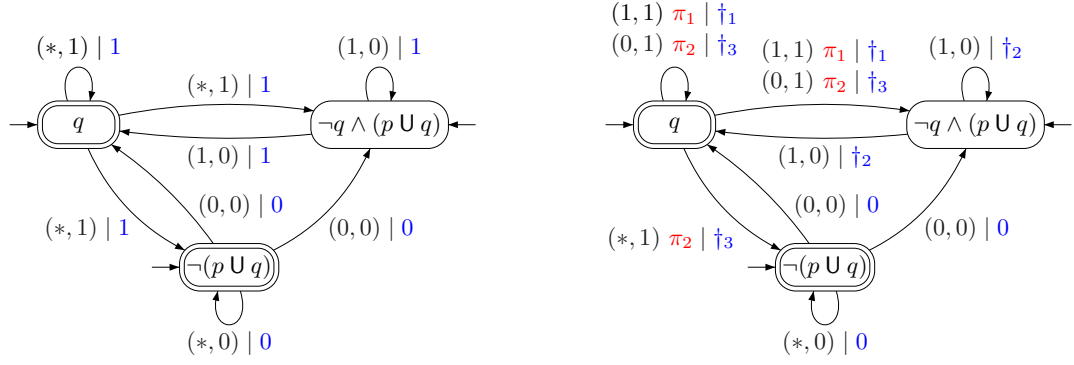

  \centering
  \includegraphics[page=6,scale=0.9]{gastex-pictures-pics.pdf}
  \hfill
  \includegraphics[page=7,scale=0.9]{gastex-pictures-pics.pdf}
  \caption{(Left) Transducer for the untimed LTL operator $p \U q$.\\
    (Right) Transducer $\Aa$ tracking the earliest and last $q$
    witnesses for $p \U q$. Program $\textcolor{red}{\pi_1}$ is
    $\textcolor{red}{x = 0; [x]}$ and program $\textcolor{red}{\pi_2}$
    is $\textcolor{red}{x = 0; y = 0; [x,y]}$. The outputs $\dagger_i$
    depend on interval $I$ in the timed until $p\U_{I}q$.}
  \label{fig:ltl-until}
  \label{fig:until-x-y}
\end{figure}

At the initial state the automaton makes a guess about position $0$,
and then subsequently on reading every $a_i$, it makes a guess about
position $i + 1$ and moves to the corresponding state.  The
transitions implement this guessing protocol. For instance,
transitions out of state $q$ read letters with $q = 1$, and also
output $1$; transitions out of state $\neg (p \U q)$ have output
$0$. A noteworthy point is that state $q \land \neg (p \U q)$ is
non-accepting, preventing the automaton to stay in that state
forever. For every word, the transducer has a unique accepting run and
the output at position $i$ is $1$ iff $(w, i) \models p \U q$.

Let us move on to the timed until $\U_I$. Let us forget the
specific interval $I$ for the moment. We will come up with a generic
construction, on which the outputs can be appropriately modified for
specific intervals. To start the construction, we need the following
notion.

\begin{definition}
  Let $w = (a_0, t_0) (a_1, t_1) \dots$ be a timed word and let
  $i \ge 0$.  The \emph{earliest} $q$-witness at position $i$ is the
  least position $j > i$ such that $q \in a_j$, if it exists. We
  denote this position $j$ giving the earliest $q$-witness at $i$ as
  $i_f$.  The \emph{last} $q$-witness is the \emph{least} position
  $j > i$ that satisfies
  \begin{align*}
    \alpha = q \wedge \neg(p \wedge \X(p\U q)) \equiv (q \land \neg p)
    \lor (q \land \X \neg (p \U q)) 
  \end{align*}
  We denote this position $j$ giving the last $q$-witness at $i$ as
  $i_\ell$.
\end{definition}

The earliest and last $q$-witnesses provide a convenient mechanism to
check $p \U_I q$ which, in many cases, can be deduced by knowing the
time to the earliest and last witnesses. Figure~\ref{fig:x-y-idea}
illustrates the interpretation of $x$ and $y$.

\begin{figure}[tbh]
  \centering
  \begin{tikzpicture}
    \draw [thin, gray] (0.7,0) to (7.2,0); \foreach \x in {1, 2.5, 3,
      4.7, 6.1, 7} \draw (\x, -0.1) to (\x, +0.1);
     
    \foreach \x in {2.5, 3, 4.7, 6.1} \node at (\x, +0.2) {\small
      $q$};

    \node at (1, -0.3) {\small $i$}; \node at (2.5, -0.3)
    {\small $i_f$}; \node at (7, +0.2) {\small $\alpha$};
    \node at (7, -0.3) {\small $i_\ell$};

    \begin{scope}[very thin, gray, yshift=0.3cm]
      \draw (1, 0.3) to node [above] {\small $\neg q$}
      (2.45,0.3); \draw (2.55, 0.3) to node [above] {\small
        $\neg q$} (2.95, 0.3);

      \draw (3.05, 0.3) to node [above] {\small $\neg q$} (4.65,
      0.3); \draw (4.75, 0.3) to node [above] {\small $\neg q$}
      (6.05, 0.3);

      \draw (6.15, 0.3) to node [above] {\small $\neg q$} (6.95,
      0.3);

      \draw (1, 0.8) to node [above] {\small $\neg \alpha$}
      (6.95, 0.8);
    \end{scope}
    \begin{scope}[yshift=-0.2cm, very thin, gray]
      \draw (1, -0.5) to node [below,black] {\small $x$} (2.45,
      -0.5);
       
      \draw (1, -0.9) to node [below,black] {\small $y$}
      (7,-0.9);
    \end{scope}
  \end{tikzpicture}
  \caption{Division of $q$ events, and interpretation of $x,y$}
  \label{fig:x-y-idea}
\end{figure}
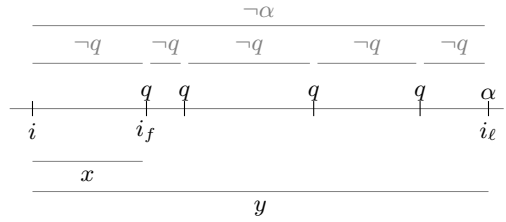

Our next task is to extend the $\U$ transducer of
Figure~\ref{fig:ltl-until} to include two future clocks $x$ and $y$
that predict at each $i$, the time to $i_f$ and $i_\ell$,
respectively. Figure~\ref{fig:until-x-y} describes the transducer
$\Aa$. For clock $x$ to maintain time to $i_f$ at each position $i$,
we can do the following: at every transition that reads $q$, the
transducer checks for $x = 0$ as guard and releases $x$ (with the time
to the next $q$).  If there is no such $q$, then $x$ needs to be
released to $-\infty$ in order to continue the run, as our timed words
are non-Zeno. Transitions satisfying $\neg q$ do not check for a guard
on $x$ or release $x$. Therefore, in any run, the value of $x$
determines the time to the next $q$ event.

In Figure~\ref{fig:x-y-idea}, the last witness (property $\alpha$) can
be identified by transitions of the form $(0, 1)$ (signifying
$q \land \neg p$) and transitions $(*, 1)$ going to state
$\neg (p \U q)$ (for $q \land \X \neg (p \U q)$). Similar to the
previous case of the earliest witness, every time we see such a
transition we check for $y = 0$ as a guard and release $y$. No other
transition checks or updates $y$. 
Notice that only the transitions with $q$ have been changed. All
transitions $(0, 1)$ check and release both clocks (program
$\textcolor{red}{\pi_2}$). Transitions $(1,1)$ that do not go to
$\neg (p \U q)$ check and release only $x$ (program
$\textcolor{red}{\pi_1}$), whereas the $(*,1)$ transition that goes to
$\neg (p \U q)$ does $\textcolor{red}{\pi_2}$.

\begin{lemma}
  For every timed word $w = (a_0, \tau_0) (a_1, \tau_1) \cdots$, there
  is a unique run of $\Aa$ of the form:
  $(s_0, v_0) \xra{\tau_0, \theta_0} (s_1, v_1) \xra{\tau_1 - \tau_0,
    \theta_1} \cdots$ such that for every position $i \ge 0$: (1)
  $s_i$ is state $q$ of $\Aa$ iff $w, i \models q$, (2) $s_i$ is state
  $\neg q \land (p \U q)$ iff $w, i \models \neg q \land (p \U q)$,
  (3) $s_i$ is state $\neg (p \U q)$ iff $w, i \models \neg (p \U q)$,
  (4) $v_i(x) = \tau_i - \tau_{i_f}$ and
  $v_i(y) = \tau_i - \tau_{i_\ell}$.
\end{lemma}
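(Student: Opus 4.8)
The plan is to establish the two halves of the claim separately: the discrete part (the states $s_i$ and items (1)--(3)) and the timing part (item (4), read with the convention that $\tau_i-\tau_{i_f}=-\infty$ when $i_f$ does not exist, and likewise for $i_\ell$). For the discrete part I would first note that the underlying untimed transducer of $\Aa$ --- obtained by erasing every clock guard and release from its transitions --- is precisely the untimed $\U$-transducer of Figure~\ref{fig:ltl-until}: by construction $\Aa$ adds clock operations (the programs $\pi_1,\pi_2$) only on the transitions that read a letter containing $q$, and this does not touch the control structure. Since we have already recorded that on every untimed word that transducer has a unique run respecting invariants (1)--(3), the sequence $s_0 s_1\cdots$ exists, is unique, and satisfies (1)--(3); furthermore, apart from the values taken by $x$ and $y$ upon release, every transition along the run is forced, because the state required at the next position determines it (in particular the choice, on a letter containing $p\wedge q$, between going to state $q$ and going to $\neg q\wedge(p\U q)$). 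So the remaining content is item (4).

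For existence I would fix that state sequence and then prescribe the released values: whenever the transition reading $a_i$ releases $x$ --- which happens exactly when $q\in a_i$ --- release it to $\tau_i-\tau_{i_f}$ if $i_f$ exists and to $-\infty$ otherwise; symmetrically release $y$ to $\tau_i-\tau_{i_\ell}$ (or $-\infty$) on the transitions whose position satisfies $\alpha$ (those performing $\pi_2$); and fix the initial values of $x,y$ in the same spirit (the prediction $-\tau_j$ for the first witness position $j$, or $-\infty$ when there is none). Then I would check by induction on $i$ that this yields a legitimate infinite run and that the value of $x$ just after processing position $i$ is $\tau_i-\tau_{i_f}$ (and that of $y$ is $\tau_i-\tau_{i_\ell}$). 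The step uses the elementary identity that the last position $k\le i$ with $q\in a_k$ has $k_f$ equal to the earliest $q$-witness at $i$: if $q\notin a_i$, the clock $x$ is untouched, so it merely ages by $\tau_i-\tau_{i-1}$ and the earliest-witness value it carries is unchanged; if $q\in a_i$, the value entering position $i$ is exactly $0$ at time $\tau_i$ (because the previous release, at the preceding $q$-position $k$, put $\tau_k-\tau_{k_f}$ with $k_f=i$), so the guard $x=0$ of $\pi_1/\pi_2$ holds and $x$ is released anew. In both cases the invariant is preserved, and the value $\tau_i-\tau_{i_f}$ is $\le 0$ since $i_f>i$ forces $\tau_{i_f}\ge\tau_i$. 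The reasoning for $y$ is identical with ``$\alpha$-position'' in place of ``$q$-position'' and $i_\ell$ for $i_f$, and note $i_\ell\ge i_f$ since $\alpha$ implies $q$, as suggested by Figure~\ref{fig:x-y-idea}.

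For uniqueness I would argue that on any run of the stated (infinite) form the released values are also forced. The state sequence is forced as above, and then: after $x$ is released at a $q$-position $k$, no $\neg q$-transition ever touches $x$ while every delay transition needs $x\le 0$, so releasing $x$ to a value larger than $\tau_k-\tau_{k_f}$ would make $x$ exceed $0$ before the next $q$-transition --- impossible, as time diverges on the non-Zeno word $w$ --- while releasing it to a smaller value would violate the guard $x=0$ at that next $q$-transition; and if there is no later $q$-transition, the only value allowing an infinite run is $-\infty$. The initial value of $x$ is pinned down the same way, and the identical argument applies to $y$. Hence the run of the required form is unique, and its clock values coincide with those of the run built for existence, which is item (4).

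The delicate point is exactly this forcing step together with the boundary behaviour of future clocks that must become $-\infty$: one has to combine the guard $x=0$ on $q$-transitions, the absence of any $x$-operation on $\neg q$-transitions, and the divergence of time on $w$, to conclude that each release value is uniquely determined and that the constructed run is genuinely infinite. The rest is bookkeeping --- keeping positions, timestamps and the ``between-transitions'' clock values aligned, and stating explicitly the witness identities used (the last $q$-position $\le i$ has successor $q$-position $i_f$, and similarly for $\alpha$).
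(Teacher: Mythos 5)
Your proposal is correct and follows exactly the argument the paper sketches informally around this lemma (the paper's source defers the formal proof): the control states are forced by the untimed $\U$-transducer's unique run, and the released values of $x$ and $y$ are pinned down by combining the $x=0$ (resp.\ $y=0$) guard at the next $q$- (resp.\ $\alpha$-) position, the fact that no other transition touches these clocks, the future-clock constraint $x\le 0$ during delays, and non-Zenoness to force $-\infty$ when no further witness exists. Your explicit inductive verification for existence and the forcing argument for uniqueness are precisely the intended proof.
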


Using $\Aa$ we can already answer $p \U_I q$ for one-sided
intervals: $[0, c], [0, c), [b, +\infty), (b, +\infty)$, for natural
numbers $b, c$.
\begin{itemize}
\item if $0 \in I$: $\dagger_1 = \dagger_3 = 1$ (current
  position is a witness), and
  $\dagger_2 = \ite{(-x \in I \lor -y \in I)}{1}{0}$,
\item if $0 \notin I$: $\dagger_3 = 0$, and
  $\dagger_1 = \dagger_2 = \ite{(-x \in I \lor -y \in I)}{1}{0}$.
\end{itemize}
This is because in one-sided intervals, if at all there is a witness,
the earliest or the last is one of them.

\subparagraph*{Until with a non-singular interval.} 
We will now deal with the case of intervals $I=[b, c]$ with
$0 < b < c < \infty$. Firstly, using $x$ and $y$, some easy cases of
$p \U_I q$ can be deduced. Output remains $0$ for transitions starting
from $\neg(p\U q)$.
For other transitions, here are some extra checks:
\begin{itemize}
\item if $-x \in I$ or $-y \in I$, output $1$ (one of the earliest or
  last witness is also a witness for $p \U_I q$),
\item else, if $-y < b$ or $c < -x$, output $0$ (the time to the last
  witness is too small or the time to the earliest witness is too
  large, so there is no witness within $I$).
\end{itemize}

If neither of the above cases hold, then we need guess a potential
witness within $[b,c]$ and verify it. This requires substantial
book-keeping which we will now explain. Assume we are given a timed
word $w = (a_0, t_0) (a_1, t_1) \cdots$.
Let us a call $j\geq0$ a \emph{difficult point} if:
\begin{align*}
  w, j \models p \U q \text{ and } t_{j_f} < t_j + b \text{ and } t_j
  + c < t_{j_\ell} 
\end{align*}
This leaves the possibility for a $q$-witness within $[b, c]$. So, for
difficult points, we need to make a prediction whether we have a
$q$-witness within $[t_j+b,t_j+c]$: guess a time to a witness within
$[t_j+b,t_j+c]$ and check it. We cannot keep making such predictions
for every difficult point as we have only finitely many
clocks. Therefore, we will guess some special witnesses. First we
state a useful property.

\begin{lemma}
  Let $j$ be a difficult point. Then, for all $k$ such that
  $j \le k \le j_\ell$, we have $w, k \models p \U q$.
\end{lemma}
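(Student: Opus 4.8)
The plan is to prove the much stronger statement that $p\in a_k$ for every $k$ with $j\le k<j_\ell$ and that $q\in a_{j_\ell}$; the lemma then follows at once, since for any such $k$ the position $j_\ell$ is a $q$-witness for $p\U q$ reached from $k$ through an unbroken block of $p$'s.

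First I would extract the order information hidden in the definition of a difficult point. Both $j_f$ and $j_\ell$ exist, and from $t_{j_f}<t_j+b<t_j+c<t_{j_\ell}$ together with the monotonicity of timestamps we get $j<j_f<j_\ell$. Since $(w,j)\models p\U q$ and $j_f$ is (the position of) the earliest $q$-witness for it after $j$, we have $q\in a_{j_f}$ and $p$ holds at every position of $[j,j_f)$. Since $j_\ell$ is the least position after $j$ at which $\alpha\equiv q\wedge\neg(p\wedge\X(p\U q))$ holds, we have $q\in a_{j_\ell}$, while for every $m$ with $j<m<j_\ell$ the formula $\alpha$ fails at $m$; rewriting this, whenever $j<m<j_\ell$ and $q\in a_m$ we necessarily have $p\in a_m$ and $(w,m+1)\models p\U q$.

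Next I would propagate $p$ across the remaining stretch $[j_f,j_\ell)$. Enumerate all positions $m$ with $j<m\le j_\ell$ and $q\in a_m$ as $j_f=m_1<m_2<\dots<m_r=j_\ell$ (a nonempty list, since it contains both $j_f$ and $j_\ell$). For each $i<r$ we have $j<m_i<j_\ell$, so by the preceding paragraph $p\in a_{m_i}$ and $(w,m_i+1)\models p\U q$; as $m_{i+1}$ is the least $q$-position $\ge m_i+1$, this last fact forces $p$ to hold on all of $[m_i+1,m_{i+1})$. Hence $p$ holds on $[m_i,m_{i+1})$ for every $i<r$, and the union of these intervals is $[m_1,m_r)=[j_f,j_\ell)$. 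Combined with $p$ holding on $[j,j_f)$ this gives $p\in a_k$ for all $k\in[j,j_\ell)$, and since $q\in a_{j_\ell}$ we conclude $(w,k)\models p\U q$ for every $k$ with $j\le k\le j_\ell$ (for $k=j_\ell$ directly via $q$, for $k=j$ also directly from the definition of a difficult point, and for intermediate $k$ via the witness $j_\ell$).

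The step I expect to require the most care is the boundary at $j$: one must make sure that $j_f$ is a genuine $q$-witness for $p\U q$ at $j$, so that $p$ is pinned down on the whole of $[j,j_f)$ and not merely on $(j,j_f)$ --- this is immediate when $q\notin a_j$, and otherwise relies on the fact (built into the setup) that a difficult point admits an actual witnessing $q$ after it. Everything else is the routine observation that the ``until'' obligation is passed down intact from one $q$-event to the next, which is precisely what the minimality of $j_\ell$ guarantees at every intermediate $q$-position.
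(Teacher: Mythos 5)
The paper states this lemma without proof, so your argument has to stand on its own. Your propagation step is correct and is surely the intended core of the argument: since $j_\ell$ is the \emph{least} position after $j$ satisfying $\alpha$, every $q$-position $m$ with $j<m<j_\ell$ satisfies $p\wedge\X(p\U q)$, and chaining this from one $q$-position to the next pins $p$ down on all of $[j_f,j_\ell)$, which together with $q\in a_{j_\ell}$ disposes of every $k$ with $j_f\le k\le j_\ell$ (and $k=j$ is immediate from the definition of a difficult point). The gap is exactly where you locate it, but your patch does not work: it is \emph{not} built into the setup that a difficult point admits a genuine witnessing $q$ strictly after it. The definition of a difficult point only demands $w,j\models p\U q$ plus the two timing inequalities; when $q\in a_j$, the until is witnessed by $j$ itself, and this says nothing about $p$ on $(j,j_f)$, which is precisely what you need for the intermediate positions $j<k<j_f$.

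Moreover the step is not merely unproved but genuinely false under the paper's literal definitions. Take $I=[1,2]$ and $w=(\{q\},0)(\emptyset,0.1)(\{p,q\},0.2)(\{q\},3)(\{q\},4)\cdots$ with $p\notin a_3,a_4,\dots$. Then $w,0\models p\U q$ (witnessed by position $0$); $0_f=2$ with $t_{0_f}=0.2<t_0+b$; and $0_\ell=3$, since position $2$ fails $\alpha$ (there $p\wedge\X(p\U q)$ holds) while position $3$ satisfies $q\wedge\neg p$, and $t_{0_\ell}=3>t_0+c$. So position $0$ is a difficult point, yet $w,1\not\models p\U q$ because $p\notin a_1$ and the first $q$-position at or after $1$ is $2$. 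Hence no proof can close this gap without an additional hypothesis---for instance that $p$ holds on $[j,j_f)$, equivalently that $j_f$ is a genuine witness for $p\U q$ at $j$---which is what the surrounding construction implicitly relies on. To repair the argument you would need to either derive that hypothesis from the context in which the automaton $\Bb$ actually opens difficult points, or add it explicitly to the statement; the appeal to ``the setup'' as written does neither.
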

Therefore, automaton $\Aa$ stays in the top two states, while reading
$j$ upto $j_\ell$.

\begin{figure}[tbh]
  \centering
  \begin{tikzpicture}
    \draw [thin, gray] (0,0) -- (8.5,0); \draw (0,0) to (0, 0.2);
    \node at (0, 0.4) {\small $t_j$};
    \draw (7,0) to (7, 0.2); \node at (7, 0.4) {\small
    $t_j + c$}; \draw (6,0) to (6, -0.2); \fill[red] (6, 0) circle
    (1.5pt); \node at (6, -0.4) {\small $t_{j'}$}; \draw (1,0)
    to (1, -0.2); \node at (1, -0.4) {\small $t_{j'} - b$};
    \draw [red] (0,-0.1) to (1,-0.1); \draw (8.4, 0) to (8.4, -0.2);
    \fill [blue] (8.4,0) circle (1.5pt); \node at (8.4, -0.4)
    {\small $t_{j''}$};
    
    \draw [blue] (1.4, 0.1) to (3.4, 0.1); \draw (1.4, 0) to (1.4,
    0.2);
    \draw (3.4,0) to (3.4, 0.2);
    \node at (1.4, 0.4) {\small $t_{j''} - c$};
    \node at (3.4, 0.4) {\small $t_{j''} - b$};
  \end{tikzpicture}
  \caption{Illustration of a point $j$. The point $j'$ is the last
  $q$-witness before $t_j+c$, and $j''$ is the first $q$-witness
  after $t_j+c$.}
  \label{fig:Until-idea}
\end{figure}
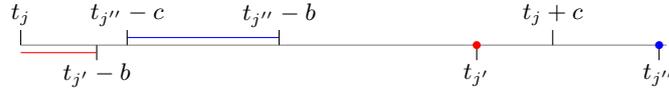

We will now come back to the idea of choosing special witnesses.  This is illustrated in
Figure~\ref{fig:Until-idea}.  For a point $j$, we let $j' \ge j$ be the greatest position
containing $q$ such that $t_{j'} \le t_j + c$.  Let $j'' > j$ be the least position
containing $q$ such that $t_j + c < t_{j''}$.  So, no position $j' < k < j''$ contains
$q$.  While reading a difficult point $j$, let us make use of fresh clocks $x_1$ and $y_1$
to predict these two witnesses:
\begin{align*}
  x_1 = t_j - t_{j'} \qquad y_1 = t_j - t_{j''}
\end{align*}

  For the next important observation, we will once again take the help
  of Figure~\ref{fig:Until-idea}. Notice that for all points $i$ with
  $t_i \in [t_j, t_{j'}-b]$, the point $j'$ is also a witness for
  $w, i \models p \U_{[b,c]} q$. Similarly, $j''$ is a witness for all
  $i$ such that $t_i \in [t_{j''} - c, t_{j''} - c]$.  Therefore for
  all $i$ such that $t_i \in [t_j, t_{j''} - b]$, we have a way to
  determine the output: it is $1$ iff while reading $a_i$ we have
  $-x_{1}\in I$ or $-y_{1}\in I$ (recall we have predicted $x_1$ and
  $y_1$ while reading $a_j$ as explained above).
  So, we do not have to make new guesses at the difficult points in
  $[t_j, t_{j''} - b]$. After $t_{j''} - b$ (which can be identified
  with the constraint $-b < y_1$), we need to make new such guesses,
  using fresh clocks, say $x_2, y_2$. We will call the difficult
  points where we start new guesses as \emph{special difficult
    points}. Notice that the distance between two special difficult
  points is at least $c - b$ (which is $\ge 1$, as we consider
  non-singular intervals with bounds in $\mathbb{N}$).  In the figure,
  if $j$ is a special point, a new special point will be opened later
  than $t_{j''} - b$.

  This gives a bound on the number of special points that can be open
  between $j$ and $j''$. Suppose
  $j < \ell_1 < \ell_2 < \dots < \ell_i < j''$ be the sequence of
  special points between $j$ and $j''$. Since $\ell_1$ is opened when
  time to $j''$ is atmost $b$, we get the inequality:
  $t_{\ell_i} - t_{\ell_1} < b$. Since consecutive special points are
  at least $c - b$ apart, we have $(i - 1) (c - b) < b$. This entails
  $i < 1 + \lceil \dfrac{b}{c - b} \rceil$. By the time we reach
  $j''$, we need to have opened at most
  $k = 1 + \lceil \dfrac{b}{c - b} \rceil$ special points, and hence
  we can work with the extra clocks
  $x_1, y_1, x_2, y_2, \dots, x_k, y_k$.

\begin{figure}
  \centering
  \includegraphics[page=8,scale=0.9]{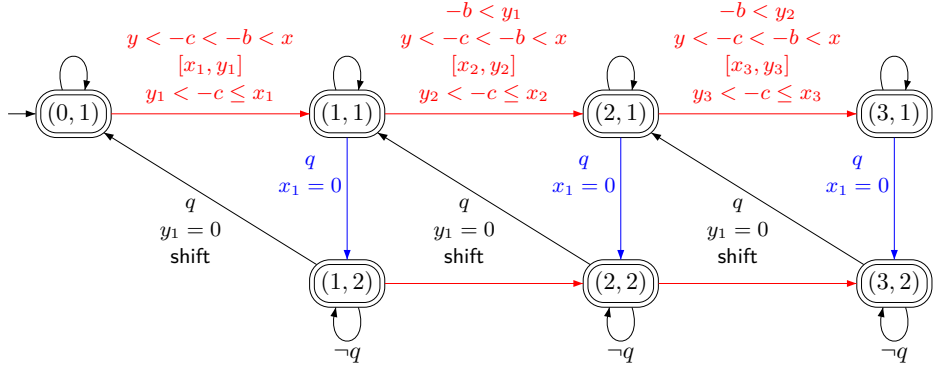}
  \caption{The automaton for predicting $q$-witnesses which are not
    given by the earliest and latest. For clarity, not every transition
    is indicated. All clocks are future clocks.}
  \label{fig:book-keeping-until}
\end{figure}

All these ideas culminate in the definition of a book-keeping
automaton $\Bb$ \label{auto:book-keeping-until} to handle difficult
points. This is shown in Figure~\ref{fig:book-keeping-until} for
$N = 3$. The automaton $\Bb$ synchronizes with $\Aa$ (via a usual
cross-product synchronized on transitions).  All transitions of $\Bb$
other than the self loop on state $(0, 1)$ satisfy $p$. Transitions
which satisfy $q$, and $\neg q$ are specifically marked in the figure.

The automaton $\Bb$ starts in the initial state $(0,1)$. It moves to
$(1, 1)$ on the first difficult point $j$ (which will become special)
and comes back to $(0, 1)$ when there are no active special difficult
points waiting for witnesses (a special difficult point $j$ is active
at positions $j \le i \le j''$).  States $(i, 1)$ in the top indicate
that there are $i$ active special difficult points currently. A state
$(i, 2)$ indicates that the $j'$ witness for the oldest active point
has been seen, and we are waiting for its $j''$ witness (the space
between $t_{j'}$ and $t_{j''}$ in Figure~\ref{fig:Until-idea}).

The red transitions $(i, *) \to (i+1, *)$ open new special difficult
points, and contain the program as illustrated in the figure. At
$(i, 1)$, suppose $\ell_1 < \ell_2 < \cdots < \ell_i$ are the active
special difficult points where we have predicted
$x_1, y_1, \dots, x_i, y_i$ respectively. We have the invariant:
$$y_i < x_i \le y_{i-1} < x_{i-1} \le \cdots \le y_2 < x_2 \le y_1 < x_1 $$ 
Notice that we may have $x_i=y_{i-1}$: the ``first'' witness of the
$i^{th}$ special point ($\ell_i'$) could coincide with the ``second''
witness of the $(i-1)^{th}$ point ($\ell_{i-1}'$). This leads to
certain subtleties, which we will come to later.

The blue transitions read the witness for the oldest active special
point (that is, we have reached $\ell_1'$). Observe that $x_1 = 0$
does not immediately identify $\ell_1'$, since there could be a
sequence of positions at the same time, and $\ell'_1$ is the last of
them. Therefore, we make a non-deterministic choice whether to take
the blue transition (implying that $\ell'_1$ has been found), or we
remain in the same state.
The blue transitions read a $q$, check $x_1 = 0$, and then releases
$x_1$ to $-\infty$ (not shown in Figure~\ref{fig:Until-idea}).  The
black (diagonal) transitions witness $\ell''_1$. When this happens,
$x_1, y_1$ are no longer useful, and therefore all the higher clocks
are shifted using the permutation \textsf{shift} which maps
$x_2, y_2, \dots x_k, y_k, x_1, y_1$ to $x_1, y_1, \dots, x_k, y_k$
and keeps the other clocks unchanged.

There are some subtleties which arise when special points coincide
with witness points, or when the second witness of a special point
coincides with the first witness of the consecutive special point, which we discuss next. 
These subtleties do not increase the states, but they result in more
transitions. 

\subparagraph*{Subtleties.} The
first subtlety arises when we have $\ell''_j = \ell'_{j+1}$ for
consecutive special points.  This will imply $y_j=x_{j+1}$. The
reverse direction is not true, as there could be a sequence of
positions with the same time, but let us assume we have dealt with it
by the non-deterministic choice. When we actually witness these
points, the clock values would have shifted to lower indices.  This
situation will be manifested as $y_1 = x_2=0$. Suppose we are in
$(i, 2)$ and see a point $\ell_j''$ ($y_1 = 0$). The diagonal
transition takes the automaton to $(i-1, 1)$ and shifts $x_2$ to
$x_1$. Now, $x_1 = 0$ (as $\ell'_{j+1} = \ell''_j$). Therefore, we
will have to combine the black-diagonal-left with the downward-blue to
get the combined effect. This leads to these two divisions:
\begin{align*}
  (i,2) &\xra{ y_1 = 0 } (i-1, 1) & 
  (i,2) &\xra{ y_1 = 0 \land x_2 = 0 } (i - 1, 2)
\end{align*}

The second subtlety is that one of either $\ell_j'$ or $\ell''_j$
witnesses be a new special point (notice that the red transitions are
independent of the blue and black transitions). In such cases, we can
combine the two 
effects in any order: first discharge $x_1$ or $y_1$ verification, and
then open a new special point or vice-versa. This leads to some
additional divisions of the form:
\begin{align*}
  (i, 1) & \xra{x_1 = 0 \land -b < y_i} (i+1,2) &
  (i, 2) & \xra{y_1 = 0 \land -b < y_i} (i, 1)
\end{align*}
In the first transition, we have combined a blue and a red (in any order); whereas in
the second, we have combined a red and a black-diagonal, in any order.

The third subtlety is that the first and second subtleties may occur
together! A point could be $\ell''_j$, $\ell'_{j+1}$ and also a
new special point. We illustrate this on a specific state $(i, 2)$. We
provide only the ``guards''. The full program is obtained by 
suitably combining the effects of the individual transitions:
\begin{align*}
  (i, 2) & \xra{ y_1 = 0  \land y_i \le -b } (i-1, 1)
  &
  (i, 2) & \xra{ y_1 = 0  \land -b < y_i} (i, 1) 
  \\
  (i, 2) & \xra{ y_1 = 0 \land x_2 = 0 \land y_i \le -b} (i - 1, 2)
  &
  (i, 2) & \xra{ y_1 = 0 \land x_2 = 0 \land -b < y_i} (i, 2)
\end{align*}

We conclude by giving the full construction of $\Bb$, taking into account all
these subtleties described above.
The set of states is $\{0,1,\ldots,N\}\times\{1,2\}$
where $N=1 + \lceil \dfrac{b}{c-b} \rceil$ (actually state $(0,2)$ is not reachable).  The initial state is $(0,1)$ and all states are accepting.

\subparagraph*{Final description.}
The transitions are described in pseudo-code in Algorithm~\ref{algo:auto-B}.  
In comments, we use the terminology introduced before: difficult point, special difficult
point; and we also refer to the color of transitions in
Figure~\ref{fig:book-keeping-until}.

Parsing the pseudo-code from a current state ($(k,m)$ results in a sequence of guards and
releases, an output of a boolean value (\Output value), and the next state (\goto
$(k',m')$).  The most difficult case is for states $(k,2)$ with $k\geq2$, where we could
generate transitions to states $(k-1,1),(k-1,2),(k,1),(k,2),(k+1,2)$.

\begin{algorithm}[!htbp]
  \small
\caption{Automaton $\Bb$ (synchronized with $\A$)}
\label{algo:auto-B}
\begin{algorithmic}[1]

  \Case{State $(0,1)$} \Comment initial state of $\Bb$
    \IfThen{$\Aa$ at state $\neg (p \U q)$}{\Output 0; \goto (0,1)}\EndIfThen
    \IfThen{$-x\in I$ or $-y\in I$}{\Output 1; \goto (0,1)}\EndIfThen
    \IfThen{$x<-c$ or $-b<y$}{\Output 0; \goto (0,1)}\EndIfThen
    \State Release $[x_{1},y_{1}]$ \Comment Special difficult point
    \State Check $y_{1}<-c\leq x_{1}$ 
    \State \Output $(x_{1}\leq -b)$ \Comment boolean value 
    \State \goto (1,1) \Comment red transition
  \EndCase
  
  \Case{State $(k,1)$ with $k>0$} \Comment waiting for the event predicted by $x_{1}$
    \State $k'\gets k$
    \If{$y_{k}\leq -b$}
      \State \Output $(x_{k}\in I) \vee (y_{k}\in I)$ \Comment boolean value 
    \Else
      \If{$-c\leq y$} \Comment not a difficult point
        \State \Output $(y\leq -b)$ \Comment boolean value ($y\in I$)
      \Else  \Comment new special difficult point, red transition, 
        \State \Comment possibly combined with a blue transition below
        \State $k'\gets k+1$; 
        \State Release $[x_{k},y_{k}]$; Check $y_{k}<-c\leq x_{k}$ 
        \State \Output $(x_{k}\leq -b)$ \Comment boolean value 
      \EndIf
    \EndIf
    \Choose{}
      \When{True}{\goto $(k',1)$}   \Comment not the event predicted by $x_{1}$ \EndWhen
      \When{$q \wedge (x_{1}=0)$}{Release $[x_{1}]$; $x_{1}=-\infty$; \goto $(k',2)$} \EndWhen
      \Comment blue transition
    \EndChoose
  \EndCase
  
  \Case{State $(k,2)$ with $k>0$} \Comment waiting for the event predicted by $y_{1}$
    \State $k'\gets k$
    \If{$y_{k}\leq -b$}
      \State \Output $(x_{k}\in I) \vee (y_{k}\in I)$ \Comment boolean value 
    \Else
      \If{$-c\leq y$} \Comment not a difficult point
        \State \Output $(y\leq -b)$ \Comment boolean value ($y\in I$)
      \Else \Comment new special difficult point, red transition, 
        \State \Comment possibly combined with a black transition below
        \State $k'\gets k+1$;
        \State Release $[x_{k},y_{k}]$; Check $y_{k}<-c\leq x_{k}$ 
        \State \Output $(x_{k}\leq -b)$ \Comment boolean value 
      \EndIf
    \EndIf
    \If{$\neg q$} \Comment not the event predicted by $y_{1}$ 
      \State \goto $(k',2)$
    \Else \Comment event predicted by $y_{1}$, black transition
      \State \Comment possibly combined with a blue transition below
      \State Check $y_{1}=0$; Release $[y_{1}]$; $y_{1}=-\infty$
      \If{$k'=1$}
        \State \goto $(0,1)$
      \Else
        \State Shift $x_{2},y_{2},\ldots,x_{k},y_{k},x_{1},y_{1}$ to $x_{1},y_{1},\ldots,x_{k},y_{k}$
      \EndIf
      \Choose{}
        \When{True}{\goto $(k'-1,1)$} \EndWhen 
        \Comment not the event predicted by the new $x_{1}$ 
        \When{$(x_{1}=0)$}{$Release [x_{1}]$; $x_{1}=-\infty$; \goto $(k'-1,2)$} \EndWhen
        \Comment blue transition
      \EndChoose
    \EndIf
  \EndCase

\end{algorithmic}
\end{algorithm}

This concludes the description of the automaton
$\Bb$. The product $\Aa \times \Bb$ gives the required transducer for
$p \U_I q$.

\subparagraph*{Complexity and comparison with the MightyL approach.}
The final automaton $\Aa \times \Bb$ has at most $6k$ states, where
$k = 1 + \lceil \dfrac{b}{c - b} \rceil$ as defined above: automaton
$\Aa$ has 3 states, and automaton $\Bb$ has $2k - 1$ states (see
Figure~\ref{fig:book-keeping-until}). In terms of clocks, $\Aa$ has 2
future clocks $x, y$, and $\Bb$ has $2k$ future clocks
$x_1, y_1, \dots, x_k, y_k$. We have used a permutation operation
\textsf{shift}. As we mention in Remark~\ref{rem:renaming}, renamings can be
eliminated by maintaining in the current state the composition of
permutations applied since the initial state. Since each permutation
does a cyclic shift, in any composition, the clocks
$x_1, y_1, \dots, x_k, y_k$ are renamed to some
$x_i, y_i, \dots, x_k, y_k, x_1, y_1, \dots, x_{i-1},
y_{i-1}$. Therefore, there are at most $k$ renamings. Maintaining them
in states gives rise to atmost $\mathcal{O}(k^2)$ states.

In contrast, the state-of-the-art approach~\cite{MightyL} starts with
a 1-clock alternating timed automata for $\U_I$. After reading a timed
word, the 1-ATA reaches a configuration containing several
state-valuation pairs $(q, v)$. A finite abstraction of this set of
configurations, called the interval semantics, has been
proposed~\cite{BrihayeEG13,BrihayeEG14,MightyL}. This abstraction is
maintained in the states. Overall, the number of locations for
$p \U_I q$ is exponential in $k$, and the number of clocks is $2k+2$.

Due to the presence of future clocks, we are able to make predictions,
as in Figure~\ref{fig:Until-idea} and the GTA syntax enables concisely
checking these predictions in the transitions. Therefore, we are able
to give a direct construction to the final automaton, instead of going
via an alternating automaton and then abstracting it.

\section{Conclusion}

In this paper, we have answered two problems: (1) liveness of GTA and
(2) MITL model checking using GTA. The solution to the first problem
required to bypass the technical difficulty of having no finite
time-abstract bisimulation for GTAs. The presence of diagonal
constraints adds additional challenges. For MITL model checking
using GTA, we have described the GTA for the $\Next_I$ and
$\Until_I$ modalities. Indeed, the presence of future clocks allows
to make predictions better and we see an exponential gain over the
state-of-the-art, in the
number of states of the final automaton produced. Moreover, our
construction is direct, without having to go via alternation. 

The next logical step would be to implement these ideas and see how
they perform in practice, and compare them with existing
well-engineered tools (e.g.,~\cite{MightyL}). This will require a
considerable implementation effort, needing several optimizations
and incorporating of many practical considerations before it can
become scalable.  This provides tremendous scope for future work on these lines.

\clearpage

\bibliography{m.bib}

\begin{thebibliography}{10}

\bibitem{AkshayGGJS23}
S.~Akshay, Paul Gastin, R.~Govind, Aniruddha~R. Joshi, and B.~Srivathsan.
\newblock A unified model for real-time systems: Symbolic techniques and
  implementation.
\newblock In {\em {CAV} {(1)}}, volume 13964 of {\em Lecture Notes in Computer
  Science}, pages 266--288. Springer, 2023.

\bibitem{Alur:TCS:1994}
Rajeev Alur and David~L. Dill.
\newblock A theory of timed automata.
\newblock {\em Theoretical Computer Science}, 126:183--235, 1994.

\bibitem{AFH-MITL-J}
Rajeev Alur, Tom{\'{a}}s Feder, and Thomas~A. Henzinger.
\newblock The benefits of relaxing punctuality.
\newblock {\em J. {ACM}}, 43(1):116--146, 1996.

\bibitem{AlurFH99}
Rajeev Alur, Limor Fix, and Thomas~A. Henzinger.
\newblock Event-clock automata: {A} determinizable class of timed automata.
\newblock {\em Theor. Comput. Sci.}, 211(1-2):253--273, 1999.

\bibitem{AlurH90}
Rajeev Alur and Thomas~A. Henzinger.
\newblock Real-time logics: Complexity and expressiveness.
\newblock In {\em {LICS}}, pages 390--401. {IEEE} Computer Society, 1990.

\bibitem{BBLP-STTT05}
Gerd Behrmann, Patricia Bouyer, Kim~G. Larsen, and Radek Pel{\'a}nek.
\newblock Lower and upper bounds in zone-based abstractions of timed automata.
\newblock {\em International Journal on Software Tools for Technology
  Transfer}, 8(3):204--215, 2006.

\bibitem{Bengtsson:LCPN:2003}
Johan Bengtsson and Wang Yi.
\newblock Timed automata: Semantics, algorithms and tools.
\newblock In {\em ACPN 2003}, volume 3098 of {\em Lecture Notes in Computer
  Science}, pages 87--124. Springer, 2003.

\bibitem{DBLP:journals/tcs/BozzelliMP22}
Laura Bozzelli, Angelo Montanari, and Adriano Peron.
\newblock Complexity issues for timeline-based planning over dense time under
  future and minimal semantics.
\newblock {\em Theor. Comput. Sci.}, 901:87--113, 2022.

\bibitem{BrihayeEG13}
Thomas Brihaye, Morgane Esti{\'{e}}venart, and Gilles Geeraerts.
\newblock On {MITL} and alternating timed automata.
\newblock In {\em {FORMATS}}, volume 8053 of {\em Lecture Notes in Computer
  Science}, pages 47--61. Springer, 2013.

\bibitem{BrihayeEG14}
Thomas Brihaye, Morgane Esti{\'{e}}venart, and Gilles Geeraerts.
\newblock On {MITL} and alternating timed automata over infinite words.
\newblock In {\em {FORMATS}}, volume 8711 of {\em Lecture Notes in Computer
  Science}, pages 69--84. Springer, 2014.

\bibitem{MightyL}
Thomas Brihaye, Gilles Geeraerts, Hsi{-}Ming Ho, and Benjamin Monmege.
\newblock Mightyl: {A} compositional translation from {MITL} to timed automata.
\newblock In {\em {CAV} {(1)}}, volume 10426 of {\em Lecture Notes in Computer
  Science}, pages 421--440. Springer, 2017.

\bibitem{CartonMichel03}
Olivier Carton and Max Michel.
\newblock Unambiguous b{\"{u}}chi automata.
\newblock {\em Theor. Comput. Sci.}, 297(1-3):37--81, 2003.

\bibitem{NuSMV}
Alessandro Cimatti, Edmund~M. Clarke, Fausto Giunchiglia, and Marco Roveri.
\newblock {NUSMV:} {A} new symbolic model verifier.
\newblock In {\em {CAV}}, volume 1633 of {\em Lecture Notes in Computer
  Science}, pages 495--499. Springer, 1999.

\bibitem{Daws}
Conrado Daws and Stavros Tripakis.
\newblock Model checking of real-time reachability properties using
  abstractions.
\newblock In {\em {TACAS}}, volume 1384 of {\em Lecture Notes in Computer
  Science}, pages 313--329. Springer, 1998.

\bibitem{Dill89}
David~L. Dill.
\newblock Timing assumptions and verification of finite-state concurrent
  systems.
\newblock In {\em Automatic Verification Methods for Finite State Systems},
  volume 407 of {\em Lecture Notes in Computer Science}, pages 197--212.
  Springer, 1989.

\bibitem{Spot}
Alexandre Duret{-}Lutz, Alexandre Lewkowicz, Amaury Fauchille, Thibaud Michaud,
  Etienne Renault, and Laurent Xu.
\newblock Spot 2.0 - {A} framework for {LTL} and $\omega$-automata
  manipulation.
\newblock In {\em {ATVA}}, volume 9938 of {\em Lecture Notes in Computer
  Science}, pages 122--129, 2016.

\bibitem{FerrereMNP19}
Thomas Ferr{\`{e}}re, Oded Maler, Dejan Nickovic, and Amir Pnueli.
\newblock From real-time logic to timed automata.
\newblock {\em J. {ACM}}, 66(3):19:1--19:31, 2019.

\bibitem{GastinMS18}
Paul Gastin, Sayan Mukherjee, and B.~Srivathsan.
\newblock Reachability in timed automata with diagonal constraints.
\newblock In {\em {CONCUR}}, volume 118 of {\em LIPIcs}, pages 28:1--28:17.
  Schloss Dagstuhl - Leibniz-Zentrum f{\"{u}}r Informatik, 2018.

\bibitem{Gastin0S19}
Paul Gastin, Sayan Mukherjee, and B.~Srivathsan.
\newblock Fast algorithms for handling diagonal constraints in timed automata.
\newblock In {\em {CAV} {(1)}}, volume 11561 of {\em Lecture Notes in Computer
  Science}, pages 41--59. Springer, 2019.

\bibitem{GastinO01}
Paul Gastin and Denis Oddoux.
\newblock Fast {LTL} to b{\"{u}}chi automata translation.
\newblock In {\em {CAV}}, volume 2102 of {\em Lecture Notes in Computer
  Science}, pages 53--65. Springer, 2001.

\bibitem{GeeraertsRS14}
Gilles Geeraerts, Jean{-}Fran{\c{c}}ois Raskin, and Nathalie Sznajder.
\newblock On regions and zones for event-clock automata.
\newblock {\em Formal Methods Syst. Des.}, 45(3):330--380, 2014.

\bibitem{GPVW95}
Rob Gerth, Doron~A. Peled, Moshe~Y. Vardi, and Pierre Wolper.
\newblock Simple on-the-fly automatic verification of linear temporal logic.
\newblock In {\em {PSTV}}, volume~38 of {\em {IFIP} Conference Proceedings},
  pages 3--18. Chapman {\&} Hall, 1995.

\bibitem{TChecker}
F.~Herbreteau and G.~Point.
\newblock {TChecker}.
\newblock \url{https://github.com/fredher/tchecker}, v0.2 - April 2019.

\bibitem{LivenessHard}
Fr{\'{e}}d{\'{e}}ric Herbreteau, B.~Srivathsan, Thanh{-}Tung Tran, and Igor
  Walukiewicz.
\newblock Why liveness for timed automata is hard, and what we can do about it.
\newblock {\em {ACM} Trans. Comput. Log.}, 21(3):17:1--17:28, 2020.
\newblock \href {https://doi.org/10.1145/3372310} {\path{doi:10.1145/3372310}}.

\bibitem{HerbreteauSW12}
Fr{\'{e}}d{\'{e}}ric Herbreteau, B.~Srivathsan, and Igor Walukiewicz.
\newblock Better abstractions for timed automata.
\newblock In {\em {LICS}}, pages 375--384. {IEEE} Computer Society, 2012.

\bibitem{HerbreteauSW-cav-12}
Fr{\'{e}}d{\'{e}}ric Herbreteau, B.~Srivathsan, and Igor Walukiewicz.
\newblock Efficient emptiness check for timed b{\"{u}}chi automata.
\newblock {\em Formal Methods Syst. Des.}, 40(2):122--146, 2012.

\bibitem{Spin}
Gerard~J. Holzmann.
\newblock The model checker {SPIN}.
\newblock {\em {IEEE} Trans. Software Eng.}, 23(5):279--295, 1997.

\bibitem{JensenLS96}
Henrik~Ejersbo Jensen, Kim~G. Larsen, and Arne Skou.
\newblock Modelling and analysis of a collision avoidance protocol using spin
  and uppaal.
\newblock In {\em The Spin Verification System}, volume~32 of {\em {DIMACS}
  Series in Discrete Mathematics and Theoretical Computer Science}, pages
  33--49. {DIMACS/AMS}, 1996.

\bibitem{Lamport2002}
Leslie Lamport.
\newblock {\em Specifying Systems, The {TLA+} Language and Tools for Hardware
  and Software Engineers}.
\newblock Addison-Wesley, 2002.

\bibitem{LiBuchi09}
Guangyuan Li.
\newblock Checking timed b{\"{u}}chi automata emptiness using lu-abstractions.
\newblock In Jo{\"{e}}l Ouaknine and Frits~W. Vaandrager, editors, {\em Formal
  Modeling and Analysis of Timed Systems, 7th International Conference,
  {FORMATS} 2009, Budapest, Hungary, September 14-16, 2009. Proceedings},
  volume 5813 of {\em Lecture Notes in Computer Science}, pages 228--242.
  Springer, 2009.
\newblock \href {https://doi.org/10.1007/978-3-642-04368-0\_18}
  {\path{doi:10.1007/978-3-642-04368-0\_18}}.

\bibitem{MalerNP06}
Oded Maler, Dejan Nickovic, and Amir Pnueli.
\newblock From {MITL} to timed automata.
\newblock In {\em {FORMATS}}, volume 4202 of {\em Lecture Notes in Computer
  Science}, pages 274--289. Springer, 2006.

\bibitem{Strix}
Philipp~J. Meyer, Salomon Sickert, and Michael Luttenberger.
\newblock Strix: Explicit reactive synthesis strikes back!
\newblock In {\em {CAV} {(1)}}, volume 10981 of {\em Lecture Notes in Computer
  Science}, pages 578--586. Springer, 2018.

\bibitem{OuaknineW06}
Jo{\"{e}}l Ouaknine and James Worrell.
\newblock On metric temporal logic and faulty turing machines.
\newblock In {\em FoSSaCS}, volume 3921 of {\em Lecture Notes in Computer
  Science}, pages 217--230. Springer, 2006.

\bibitem{Pnueli77}
Amir Pnueli.
\newblock The temporal logic of programs.
\newblock In {\em {FOCS}}, pages 46--57. {IEEE} Computer Society, 1977.

\bibitem{Pnueli86}
Amir Pnueli.
\newblock Applications of temporal logic to the specification and verification
  of reactive systems: {A} survey of current trends.
\newblock In {\em Current Trends in Concurrency}, volume 224 of {\em Lecture
  Notes in Computer Science}, pages 510--584. Springer, 1986.

\bibitem{PnueliH88}
Amir Pnueli and Eyal Harel.
\newblock Applications of temporal logic to the specification of real-time
  systems.
\newblock In {\em {FTRTFT}}, volume 331 of {\em Lecture Notes in Computer
  Science}, pages 84--98. Springer, 1988.

\bibitem{TripakisYB05}
Stavros Tripakis, Sergio Yovine, and Ahmed Bouajjani.
\newblock Checking timed b{\"{u}}chi automata emptiness efficiently.
\newblock {\em Formal Methods Syst. Des.}, 26(3):267--292, 2005.

\bibitem{vardi1996automata}
Moshe~Y Vardi.
\newblock An automata-theoretic approach to linear temporal logic.
\newblock {\em Lecture Notes in Computer Science}, 1043:238–266, 1996.

\bibitem{Wilke94}
Thomas Wilke.
\newblock Specifying timed state sequences in powerful decidable logics and
  timed automata.
\newblock In {\em {FTRTFT}}, volume 863 of {\em Lecture Notes in Computer
  Science}, pages 694--715. Springer, 1994.

\end{thebibliography}

\end{document}